\documentclass[sigconf, nonacm]{acmart}

\newcommand{\nop}[1]{}

\usepackage[linesnumbered,ruled,vlined]{algorithm2e}
\usepackage{algpseudocode}
\usepackage{color}
\usepackage{amsmath}
\usepackage{multirow}
\usepackage{subfigure}
\usepackage{soul}
\usepackage{balance} 
\usepackage[normalem]{ulem}
\usepackage{setspace}
\usepackage{flushend}
\usepackage{bm}
\usepackage{mathbbol}
\usepackage{pifont}

\usepackage{tcolorbox}
\usepackage{xcolor}
\usepackage{subcaption}

\usepackage{geometry}
\usepackage{placeins}
\usepackage[dvipsnames]{xcolor}

\allowdisplaybreaks

\newtheorem{definition}{Definition}
\newtheorem{example}{Example}
\newtheorem{theorem}{Theorem}
\newtheorem{lemma}[theorem]{Lemma}

\newcommand{\eg}{\emph{e.g.,}\xspace}
\newcommand{\ie}{\emph{i.e.,}\xspace}
\newcommand{\wrt}{\emph{w.r.t.}\xspace}
\newcommand{\kw}[1]{{\ensuremath {\mathsf{#1}}}\xspace}

\newcommand\vldbdoi{XX.XX/XXX.XX}
\newcommand\vldbpages{XXX-XXX}
\newcommand\vldbvolume{14}
\newcommand\vldbissue{1}
\newcommand\vldbyear{2020}
\newcommand\vldbauthors{\authors}
\newcommand\vldbtitle{\shorttitle} 
\newcommand\vldbavailabilityurl{URL_TO_YOUR_ARTIFACTS}
\newcommand\vldbpagestyle{plain}

\begin{document}

\title{\textsc{LIVE}: Learnable Monotonic Vertex Embedding for Efficient Exact Subgraph Matching (Technical Report)}

\author{Yutong Ye}
\affiliation{%
  \institution{Beihang University}
  \city{Beijing, China}
}
\email{yutongye@buaa.edu.cn}

\author{Weilong Ren}
\affiliation{%
  \institution{ShenZhen Institute of Computing Sciences}
  \city{Shenzhen, China}
}
\email{renweilong@sics.ac.cn}

\author{Yang Liu}
\affiliation{%
  \institution{Beihang University}
  \city{Beijing, China}
}
\email{ly_act@buaa.edu.cn}

\author{Mengyi Yan}
\affiliation{%
  \institution{Shandong University}
  \city{Jinan, China}
}
\email{yanmy@sdu.edu.cn}

\author{Ruijie Wang}
\affiliation{%
  \institution{Beihang University}
  \city{Beijing, China}
}
\email{ruijiew@buaa.edu.cn}

\author{Li Sun}
\affiliation{%
  \institution{Beijing University of Posts and Telecommunications}
  \city{Beijing, China}
}
\email{lsun@bupt.edu.cn}

\author{Jianxin Li}
\affiliation{%
  \institution{Beihang University}
  \city{Beijing, China}
}
\email{lijx@buaa.edu.cn}

\author{Philip S. Yu}
\affiliation{%
  \institution{University of Illinois at Chicago}
  \city{Chicago, USA}
}
\email{psyu@uic.edu}

\begin{abstract}
Exact subgraph matching is a fundamental graph operator that supports many graph analytics tasks, yet it remains computationally challenging due to its NP-completeness. 
Recent learning-based approaches accelerate query processing via dominance-preserving vertex embeddings, but they suffer from expensive offline training, limited pruning effectiveness, and heavy reliance on complex index structures, all of which hinder the scalability to large graphs. 
In this paper, we propose \textit{\underline{L}earnable Monoton\underline{I}c \underline{V}ertex \underline{E}mbedding} (\textsc{LIVE}), a learning-based framework for efficient exact subgraph matching that scales to large graphs. \textsc{LIVE} enforces monotonicity among vertex embeddings by design, making dominance correctness an inherent structural property and enabling embedding learning to directly optimize vertex-level pruning power. 
To this end, we introduce a query cost model with a differentiable surrogate objective to guide efficient offline training. 
Moreover, we design a lightweight one-dimensional \textit{iLabel} index that preserves dominance relationships and supports efficient online query processing. 
Extensive experiments on both synthetic and real-world datasets demonstrate that \textsc{LIVE} significantly outperforms state-of-the-art methods in efficiency and pruning effectiveness.\looseness=-1
\end{abstract}

\maketitle

\pagestyle{\vldbpagestyle}
\begingroup\small\noindent\raggedright\textbf{PVLDB Reference Format:}\\
\vldbauthors. \vldbtitle. PVLDB, \vldbvolume(\vldbissue): \vldbpages, \vldbyear.\\
\href{https://doi.org/\vldbdoi}{doi:\vldbdoi}
\endgroup
\begingroup
\renewcommand\thefootnote{}\footnote{\noindent
This work is licensed under the Creative Commons BY-NC-ND 4.0 International License. Visit \url{https://creativecommons.org/licenses/by-nc-nd/4.0/} to view a copy of this license. For any use beyond those covered by this license, obtain permission by emailing \href{mailto:info@vldb.org}{info@vldb.org}. Copyright is held by the owner/author(s). Publication rights licensed to the VLDB Endowment. \\
\raggedright Proceedings of the VLDB Endowment, Vol. \vldbvolume, No. \vldbissue\ %
ISSN 2150-8097. \\
\href{https://doi.org/\vldbdoi}{doi:\vldbdoi} \\
}\addtocounter{footnote}{-1}\endgroup

\ifdefempty{\vldbavailabilityurl}{}{
\begingroup\small\noindent\raggedright\textbf{PVLDB Artifact Availability:}\\
The source code, data, and/or other artifacts have been made available at \url{https://github.com/JamesWhiteSnow/LIVE}.
\endgroup
}

\section{Introduction}
\label{sec:introduction}
Graph data management has become increasingly important in a wide range of real-world applications, 
\eg~social network analysis \cite{al2020topic}, knowledge graph discovery \cite{lian2011efficient}, and biological network mining \cite{szklarczyk2015string}.
Among various graph operators, exact subgraph matching is one of the most fundamental yet computationally challenging tasks.
Given a large data graph and a user-specified query graph, an exact subgraph matching query aims to retrieve all subgraphs that are isomorphic to the query graph, preserving both structural and label consistency. Such a query 
plays a central role in many graph data-driven analytical tasks, 
\eg~pattern recognition \cite{yan2008mining}, community search \cite{zhang2024top}, and graph-based query answering \cite{deutsch2022graph}.

Below, we give an example of a subgraph matching query for the discovery of research collaboration patterns in academic networks.

\begin{figure}[t]
    \centering
    \subfigure[academic network $G$]{
        \includegraphics[height=2.6cm]{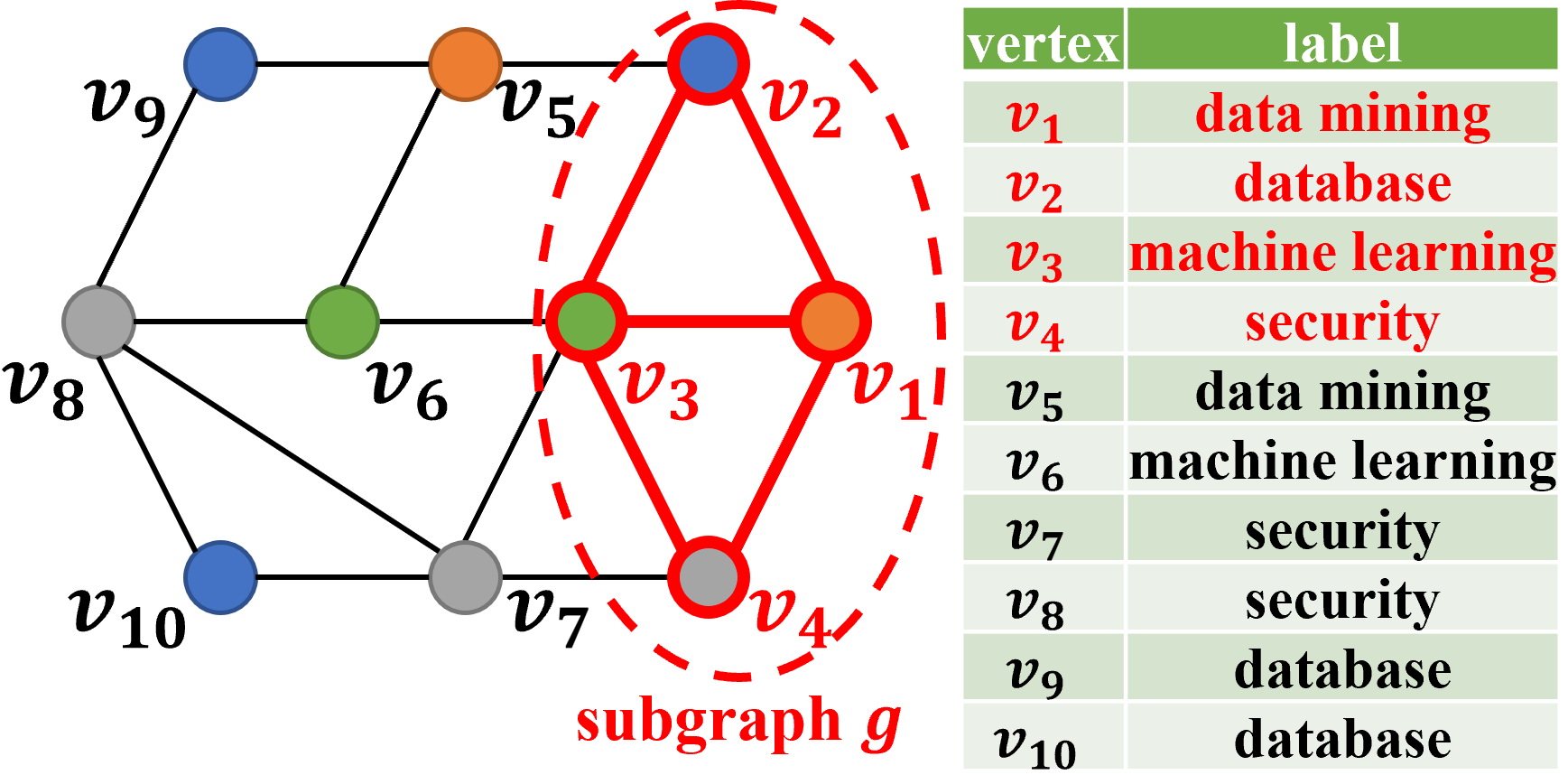}
        \label{subfig:data_graph}
    }
    \subfigure[query graph $q$]{
        \includegraphics[height=2.2cm]{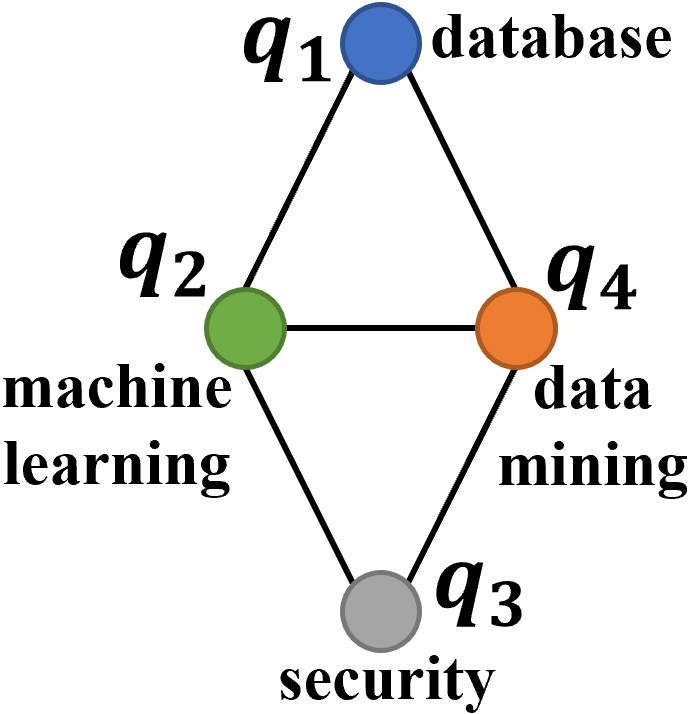}
        \label{subfig:query_graph}
    }
    \caption{Research collaboration pattern discovery.}
    \label{fig:subgraph_matching}
\end{figure}

\begin{example}
\textbf{(Research Collaboration Pattern Discovery in Academic Networks)}
As shown in Figure~\ref{subfig:data_graph}, an academic collaboration network can be modeled as an undirected, vertex-labeled graph, where each vertex represents a researcher labeled by their primary research field (\eg~\textit{database}, \textit{data mining}, \textit{machine learning}, and \textit{security}), and each edge represents a collaborative relationship among two researchers. Such networks naturally form large-scale graphs capturing scholarly collaborations.

A common analytical task is to identify recurring collaboration structures that match a particular team formation pattern. 
For example, a conference organizer may wish to find research teams exhibiting a specific interdisciplinary pattern, \eg~``a core researcher in data mining collaborating with colleagues in database systems, machine learning, security''. 
This pattern can be expressed as a query graph $q$ (Figure~\ref{subfig:query_graph}), where vertices denote researchers with specified research fields and edges denote collaborations.

Given a large academic network $G$ (Figure~\ref{subfig:data_graph}), exact subgraph matching 
retrieves all subgraphs $g \subseteq G$ 
isomorphic to the query graph $q$. 
As illustrated, one valid match is 
$g=\{v_1, v_2, v_3, v_4\}$, corresponding to researchers in data mining, database, machine learning, and security, with all collaborations 
preserved.
\end{example}

Despite its importance, exact subgraph matching remains highly challenging.
The problem has been proven to be NP-complete \cite{cordella2004sub,lewis1983michael}, and the search space grows exponentially with graph size and structural complexity,
\eg~an exact subgraph matching query on a graph with only $3K$ vertices and $10K$ edges can yield over 4 billion answers \cite{sun2020memory}, rendering such queries very costly on large graphs.

\noindent{\bf Prior Arts}.
Existing subgraph matching methods can be broadly classified into the following two categories.

\noindent{\textbf{(1) Structural-based approaches~\cite{he2008graphs,shang2008taming,bonnici2013subgraph,bi2016efficient,juttner2018vf2++,han2019efficient,bhattarai2019ceci,sun2020memory,li2025subgraph}}}.
These methods rely on explicit structural comparisons between the query graph and candidate subgraphs, which often incur high computational overhead and limit query efficiency.

\noindent{\textbf{(2) Learning-based approaches~\cite{bai2019simgnn,li2019graph,lou2020neural,wang2022reinforcement,ye2024efficient,yang2025neuso}}}.
Recent works in this line leverage Graph Neural Networks (GNNs) to embed graphs into embedding spaces, transforming subgraph matching into efficient vector-based operations.
Early studies primarily targeted approximate matching or graph similarity search~\cite{bai2019simgnn,li2019graph,lou2020neural}, where embeddings are used to estimate structural similarity rather than to guarantee exact matches.
More recent efforts have extended learning-based techniques to exact subgraph matching, either by learning matching orders~\cite{wang2022reinforcement,yang2025neuso} or by preserving dominance relationships in the embedding space~\cite{ye2024efficient}.

Among them, dominance-based methods (\eg~GNN-PE~\cite{ye2024efficient}) learn vertex embeddings from labels and 1-hop neighborhoods, such that subgraph containment can be verified via coordinate-wise dominance.
This property enables candidate filtering without false dismissals and yields substantial speedups over traditional structure-based methods (\eg~up to 1–2 orders of magnitude~\cite{ye2024efficient}).

Despite these advances, dominance-learning–based approaches still suffer from two fundamental limitations:
(i) expensive offline training that requires enumerating 1-hop subgraphs and all possible 1-hop substructures for each vertex, and
(ii) limited pruning effectiveness during online queries, as embedding learning is tightly constrained by dominance correctness.
To compensate for the latter, existing methods often rely on complex index structures~\cite{ye2024efficient}, which incur substantial storage and traversal overhead.

Consequently, while learning-based approaches make exact subgraph matching feasible on small to medium-sized graphs, they remain inadequate for large-scale graphs.
For instance, these meth-\\ods often fail to complete offline preprocessing on graphs with millions of vertices and edges within practical time budgets~\cite{ye2024efficient}.
This leads to a natural question:
\textit{Is it possible to design a learning-based approach that further accelerates exact subgraph matching with a lightweight offline training phase, while enabling efficient online query processing without relying on heavy index structures?}

\noindent{\bf Our Contributions}.
In this paper, we answer the question above affirmatively by proposing \textit{\underline{L}earnable Monoton\underline{I}c \underline{V}ertex \underline{E}mbedding} (\textsc{LIVE}), a novel framework for efficient exact subgraph matching on large graphs. 
Unlike prior learning-based 
approaches, 
\textsc{LIVE} rethinks the coupling between no-false-dismissal guarantees and pruning power for vertex embedding learning in exact subgraph matching. 

Specifically, \textsc{LIVE} introduces a design that enforces monotonicity among vertex embeddings by construction. 
As a result, the no-false-dismissal guarantee of candidate filtering becomes an inherent structural property of the embeddings, rather than an outcome learned through enumerating subgraph--substructure pairs.
This decoupling allows embedding learning to move beyond mere correctness preservation and directly optimize embedding quality, 
\ie~vertex-level pruning power. 
Accordingly, we develop a cost model that estimates the query cost induced by vertex embeddings, and
derive a continuously differentiable surrogate of this model,
enabling effective optimization of vertex embeddings during offline training and 
leading to more efficient online query processing.

Building on the learned monotonic vertex embeddings, 
we further design a lightweight \textit{iLabel} index that maps multi-dimensional embeddings into a one-dimensional key space while preserving dominance relationships. 
This design transforms dominance-region search for query vertices to efficient range queries supported by a B$^+$-tree, thereby eliminating the need for complex indexing structures. 
In addition, we incorporate multiple pruning strategies during index traversal to further reduce the candidate search space.

\begin{table}[t]\footnotesize
\begin{center}
\caption{Symbols and Descriptions}
\label{tab:notations}
\begin{tabular}{|l||p{4.5cm}|}
\hline
\textbf{Symbol}&\textbf{Description} \\
\hline\hline
    $G$ & a data graph\\\hline
    $q$ & a query graph\\\hline
    $g$ & a subgraph of the data graph $G$\\\hline
    $v_i$ (or $q_i$) & a vertex in graph $G$ (or $q$)\\\hline
    $e_{ij}$ (or $e_{q_iq_j}$) & an edge in graph $G$ (or $q$)\\\hline
    $V(G)$ (or $V(q)$) & a set of vertices $v_i$ (or $q_i$) in graph $G$ (or $q$) \\\hline
    $L(G)$ (or $L(q)$) & a labeling function of graph $G$ (or $q$) \\\hline
    $o(v_i)$ (or $o(q_i)$) & a vertex embedding vector of $v_i$ (or $q_i$) \\\hline
    $g_1(v_i)$ (or $s_1(v_i)$) & a 1-hop subgraph (or substructure) of $v_i$ \\\hline
    $\mathcal{N}_1(v_i)$ (or $\mathcal{N}_1(q_i)$) & a set of $v_i$'s (or $q_i$'s) 1-hop neighbors \\\hline
\end{tabular}
\end{center}
\end{table}

In summary, we make the following contributions:
\begin{itemize}
    \item We propose \textsc{LIVE} (Section~\ref{sec:problem_definition}), a learning-based framework for efficient exact subgraph matching scaling to large graphs.

    \item We develop a learnable monotonic vertex embedding method (Section~\ref{sec:embedding}) that enforces dominance relationships by design, allowing embedding learning to focus solely on optimizing vertex-level pruning power. 
    We further introduce a query cost model and derive a continuous, differentiable surrogate objective to guide offline training.

    \item We design a lightweight one-dimensional \textit{iLabel} index (Section~\ref{sec:index}) that preserves dominance relationships among multi-dimensional vertex embeddings, enabling efficient online query processing with low storage and traversal overhead.

    \item We propose an efficient online query processing algorithm (Section~\ref{sec:algorithm}) that integrates multiple pruning strategies supported by the \textit{iLabel} index.

    \item We conduct extensive experiments on both synthetic and real-world datasets (Section~\ref{sec:experiment}), confirming the efficiency and effectiveness of our \textsc{LIVE} for exact subgraph matching.
\end{itemize}

We review related work in Section~\ref{sec:related_work} and conclude in Section~\ref{sec:conclusion}.

\section{Problem Definition}
\label{sec:problem_definition}
This section formally defines the graph data model and the subgraph matching problem over a graph database. 
Table~\ref{tab:notations} summarizes the frequently used symbols and their descriptions.

\subsection{Preliminaries}
\label{subsec:preliminaries}
We start with basic notations.

\noindent{\bf Graph}.
A graph, $G$ is denoted by $G=(V(G),$ $E(G), \phi(G), L(G))$, where 
$V(G)$ is a set of vertices $v_i$, 
$E(G)$ is a set of edges $e_{ij}=(v_i, v_j)$ connecting vertex pairs, $\phi(G)$ is a mapping function from vertex pairs to edges 
(\ie~$\Phi(G)$:$V(G)\times V(G)\rightarrow E(G)$), and 
$L(G)$ is a vertex labeling function that assigns each vertex $v_i\in V(G)$ a label $L(v_i)$ describing its attribute or type.

In this paper, we consider an undirected labeled graph, which is widely used in applications such as social networks~\cite{wasserman1994social}, knowledge graphs~\cite{lian2011efficient}, and biological interaction networks~\cite{karlebach2008modelling}, where vertices represent entities and edges capture their relationships.

\noindent{\bf Graph Isomorphism~\cite{babai2018group,grohe2020graph}}.
Given two graphs $G_1=(V_1,E_1,\phi_1,L_1)$ and $G_2=(V_2,E_2,\phi_2,L_2)$, $G_1$ is said to be \textit{isomorphic} to $G_2$, denoted by $G_1 \equiv G_2$, 
if there exists a bijection mapping $f: V_1 \rightarrow V_2$ satisfying the following two conditions:
(i) \emph{label preservation}: for each vertex $v_i \in V_1$, $L_1(v_i)=L_2(f(v_i))$; and
(ii) \emph{edge preservation}: for any vertex pair $(v_i,v_j)\in E_1$, $(f(v_i),f(v_j))\in E_2$.\looseness=-1

Intuitively, $G_1$ and $G_2$ are isomorphic, if they have identical structural topology and vertex labels under the mapping $f$.

\noindent{\bf Subgraph Isomorphism}.
Let $G$ be a data graph and $q$ be a query graph. 
We say that $q$ is subgraph-isomorphic to $G$, 
denoted by $q \subseteq G$, if there exists a subgraph $g \subseteq G$ such that $g \equiv q$.

The subgraph isomorphism problem asks whether a query graph $q$ exactly matches a subgraph of $G$ while preserving structure and labels. 
The problem has been proven to be NP-complete~\cite{cordella2004sub,lewis1983michael}.

\subsection{Exact Subgraph Matching Query}
Based on the preliminaries in Section~\ref{subsec:preliminaries}, we formally define the subgraph matching query as follows:

\begin{definition} \textbf{(Exact Subgraph Matching Query)}
Given a large data graph $G$ and a query graph $q$, an exact subgraph matching query retrieves all subgraphs $g \subseteq G$ such that $g \equiv q$.
    \label{def:subquery}
\end{definition}

In other words, the goal is to enumerate all subgraphs of 
$G$ that are structurally identical to the query pattern $q$.
Exact subgraph matching is fundamental to many applications,
including pattern discovery in social networks \cite{qiao2017subgraph}, molecule search in bioinformatics \cite{alon2007network}, and entity-relation extraction in knowledge graphs \cite{sahu2017ubiquity}.

\subsection{Challenges}
Recent learning-based techniques~\cite{wang2022reinforcement,yang2025neuso} have explored exact subgraph matching via learned matching orders, while dominance-learning-based methods~\cite{ye2024efficient} further enable embedding-based candidate filtering without false dismissals. 
However, integrating learning into exact subgraph matching poses several challenges.

First, to guarantee correctness during candidate search, existing methods often require enumerating a massive number of subgraph--substructure pairs to learn dominance-preserving embeddings, resulting in substantial offline training costs and limited scalability. Second, prior embedding learning approaches are primarily driven by correctness constraints and do not explicitly optimize the pruning power of vertex embeddings, leading to suboptimal filtering effectiveness during query processing, especially on large graphs. Finally, even with dominance-preserving embeddings, efficiently retrieving all dominating vertices typically relies on complex indexing structures, incurring high storage and traversal overhead.

These challenges highlight the need for a new framework that preserves dominance correctness while scaling to large graphs in both offline vertex embedding training and online query processing.

\begin{algorithm}[t]
\caption{{\bf The \textsc{LIVE} Framework for Exact Subgraph Matching}}
\label{alg:framework}
{\footnotesize
\KwIn{
    a data graph $G$ and a query graph $q$\\
}
\KwOut{
    subgraphs $g$ ($\subseteq G$) that are isomorphic to $q$
}

\tcp{\bf Offline Phase}

\tcp{Generate vertex embeddings}

train a vertex embedding model $\text{Emb}(\cdot)$ that minimizes query processing cost over the data graph $G$\\

generate an embedding $o(v_i)$ for $\forall v_i \in G$ using the trained $\text{Emb}(\cdot)$\\

\tcp{Index construction}

build an \textit{iLabel} index $\mathcal{I}$ over the learned data 
vertex embeddings $o(v_i)$ 

\tcp{\bf Online Phase}
\For{each query vertex $q_i\in q$}{
    \tcp{Retrieve candidate matching vertices}

    generate the query vertex embedding $o(q_i)$ using the trained embedding model $\text{Emb}(\cdot)$

    obtain the candidate matching vertices $q_i.\textit{cand\_set}$ for $q_i$ by traversing the index $\mathcal{I}$
}

\tcp{Obtain and refine candidate subgraphs}
obtain a matching order $Q$ of query vertices $q_i \in q$ based on the sizes of the their 
candidate sets $|q_i.cand\_set|$\\

assemble and refine candidate subgraphs $g$ from candidate vertices in $q_i.cand\_set$\\
    
\Return subgraphs $g$ ($\equiv q$)
}
\end{algorithm}

\begin{figure*}[t]
    \begin{minipage}[t]{0.35\textwidth}
        \centering
        \subfigure[{\footnotesize $g_1(v_1)$}]{\label{subfig:one_hop_subgraph}
        {\includegraphics[height=2.2cm]{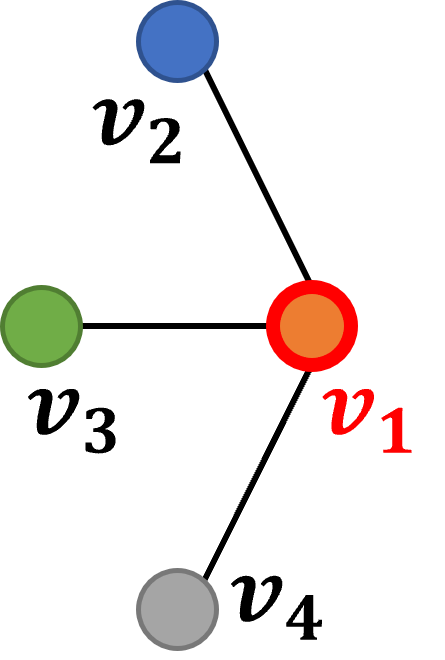}}} 
        \subfigure[{\footnotesize $s_1(v_1)\subseteq g_1(v_1)$}]{\label{subfig:substructure}
        {\includegraphics[height=2.5cm]{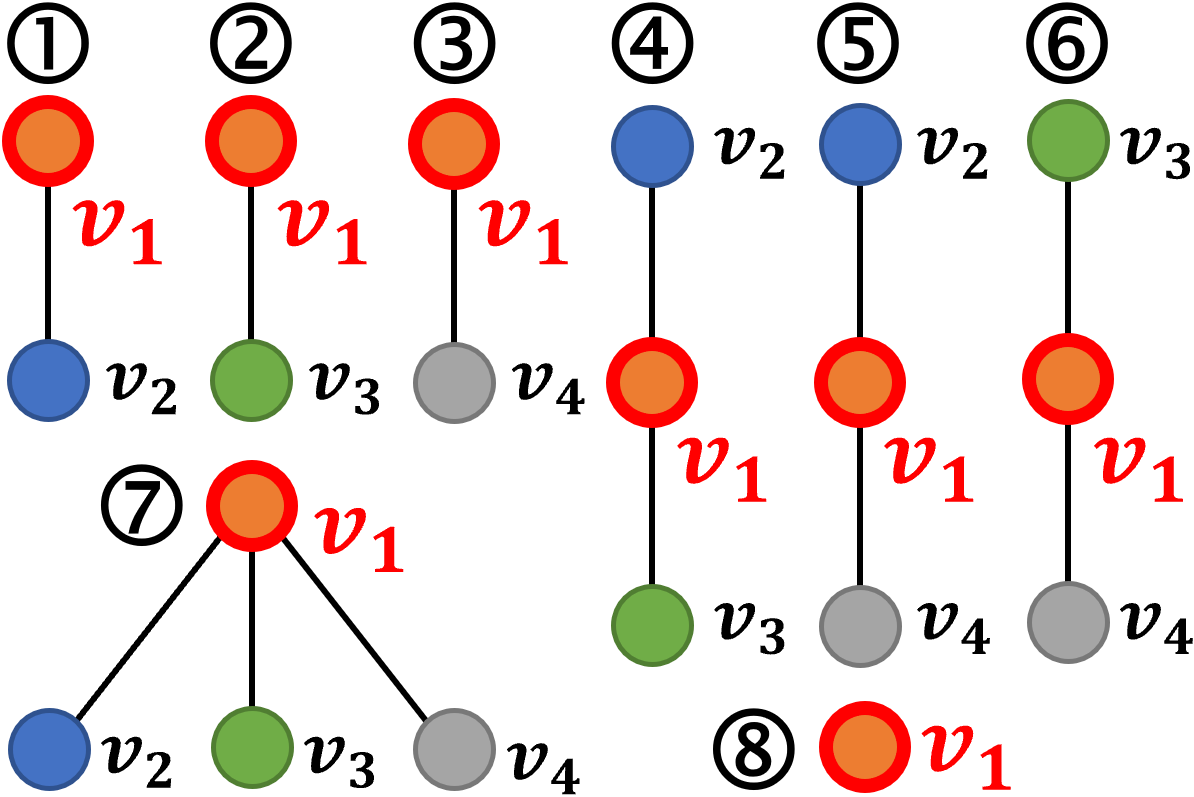}}} 
        \caption{$g_1(v_1)$ and all its possible $s_1(v_1)$.}
        \label{fig:subgraph_substructure}
    \end{minipage}\hfill
    \begin{minipage}[t]{0.65\textwidth}
        \centering
        \subfigure{
        {\includegraphics[height=2.5cm]{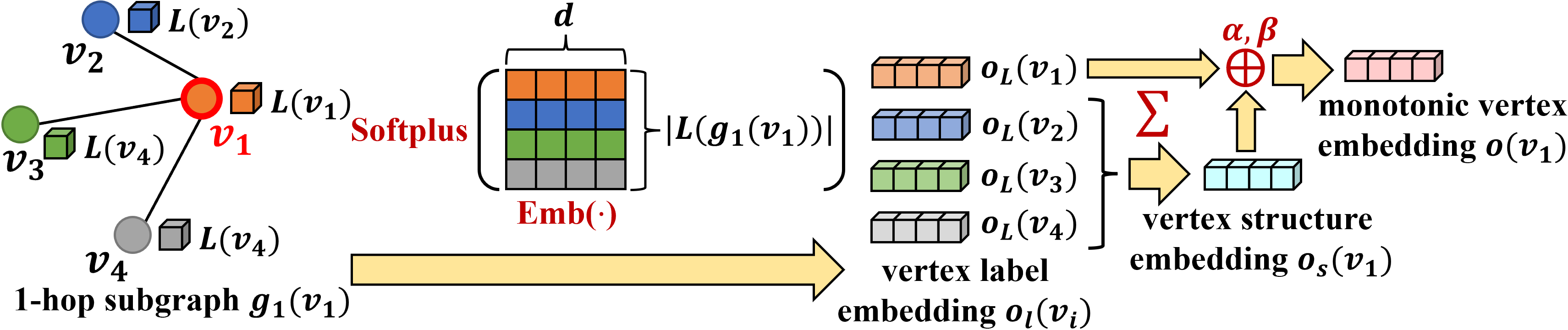}}} 
        \caption{Illustration of our monotonic vertex embedding $o(v_i)$.}
        \label{fig:vertex_embedding}
    \end{minipage}
\end{figure*}

\subsection{The \textsc{LIVE} Framework}
\label{subsec:framework}
Algorithm~\ref{alg:framework} illustrates a novel \underline{L}earn\underline{i}ng-based \underline{V}ertex \underline{E}mbedding (\textsc{LIVE}) framework for efficiently answering exact subgraph matching queries on large graphs using learned vertex embeddings. The framework consists of two phases: 
an offline precomputation phase (lines~1--3) and 
an online subgraph matching phase (lines~4--9).

\noindent{\bf Offline Phase}.
The offline phase learns vertex embeddings and builds an index to support efficient query processing. 
Specifically, \textsc{LIVE} trains an embedding model $\mathrm{Emb}(\cdot)$ on the data graph $G$ using a proposed cost model (see Section~\ref{subsec:embedding_optimizaition}) to 
minimize the expected query cost induced by vertex embeddings (line~1). 
After training, \textsc{LIVE} generates an embedding $o(v_i)$ for each data vertex $v_i \in V(G)$ using the learned model (line~2). 
Based on these embeddings, an \textit{iLabel} index $\mathcal{I}$ is constructed by mapping multi-dimensional vertex embeddings into a one-dimensional key space, enabling efficient range-based retrieval during query processing (line~3).

\noindent{\bf Online Phase}.
Given a query graph $q$, the online phase retrieves and refines candidate vertices to enumerate exact matches. 
For each query vertex $q_i \in V(q)$, \textsc{LIVE} first computes its embedding $o(q_i)$ using the trained model (lines~4--5). 
The \textit{iLabel} index $\mathcal{I}$ is then traversed to obtain a candidate set $q_i.\textit{cand\_set}$ of vertices in $G$ that 
may match $q_i$ (line~6). 
After candidate sets are generated for all query vertices, 
\textsc{LIVE} determines a matching order $Q$ of query vertices $q_i \in q$ based on their candidate set sizes, 
prioritizing query vertices with fewer candidates (line~7). 
Finally, an exact refinement procedure assembles and verifies candidate subgraphs by extending partial matches according to the matching order and enforcing subgraph isomorphism constraints (line~8). 
Finally, \textsc{LIVE} returns all subgraphs $g \subseteq G$ that are isomorphic to $q$ (line~9).

The design of \textsc{LIVE} is detailed in the following sections.
Sections~\ref{sec:embedding} and~\ref{sec:index} describe the offline phase,
including monotonic vertex embedding learning and \textit{iLabel} index construction, respectively, 
while Section~\ref{sec:algorithm} presents the online subgraph matching algorithm.

\noindent{\bf Note}.
The key advantage of \textsc{LIVE} lies in its principled separation of correctness guarantees from efficiency-oriented optimization. 
By enforcing dominance correctness as an intrinsic property of vertex embeddings through monotonicity, 
\textsc{LIVE} allows embedding learning to directly optimize vertex-level pruning power for improved query efficiency. 
Meanwhile, the dominance-preserving index reduces the cost of candidate retrieval while ensuring exactness, 
\ie~no false dismissals occur during index traversal.


\section{Learnable Monotonic Vertex Embedding}
\label{sec:embedding}
This section presents the design of monotonic vertex embeddings and the training of the embedding model $\text{Emb}(\cdot)$ in \textsc{LIVE} (lines~1--2 of Algorithm~\ref{alg:framework}). 
Specifically, Section~\ref{subsec:embedding_preliminary} introduces structural notions for characterizing vertex neighborhoods, Section~\ref{subsec:embedding_design} presents the monotonic embedding design and its dominance-preserving properties, and Section~\ref{subsec:embedding_optimizaition} describes cost-model-based embedding learning for achieving strong pruning power.

\subsection{Preliminaries}
\label{subsec:embedding_preliminary}

We introduce two structural notions used throughout the paper, which 
serve as conceptual units for characterizing local matching feasibility 
without requiring explicit enumeration during learning.

\noindent{\bf 1-hop Subgraph.}
Given a data graph $G$, the 1-hop subgraph centered at a vertex $v_i \in G$ is the induced subgraph containing $v_i$ and all its immediate neighbors: $g_1(v_i)=G[\{v_i\} \cup \mathcal{N}_1(v_i)]$,
where $\mathcal{N}_1(v_i)$ denotes the set of 1-hop neighbors of $v_i$ in $G$. The 1-hop subgraph captures both the label of the center vertex $v_i$ and the structural context of its local neighborhood.

In subgraph matching, $g_1(v_i)$ provides the necessary local context for determining whether $v_i \in G$ can serve as a candidate match for a query vertex $q_i \in q$. 
In particular, any valid match between $q_i$ and $v_i$ must satisfy local consistency constraints within their respective 1-hop neighborhoods.

\noindent{\bf 1-hop Substructure.}
Given the 1-hop subgraph $g_1(v_i)$ of a data vertex $v_i \in G$, 
a 1-hop substructure $s_1(v_i)$ is a subgraph induced by $v_i$ and a subset of its 1-hop neighbors $\mathcal{N}_1(v_i)$. 
For a query vertex $q_i \in q$ with 1-hop subgraph $g_1(q_i)$, 
if $q_i$ can be matched to $v_i$, then 
there exists a substructure $s_1(v_i) \subseteq g_1(v_i)$ such that 
$g_1(q_i) \equiv s_1(v_i)$.

This observation yields a necessary condition for vertex-level matching:  the local neighborhood $\mathcal{N}_1(q_i)$ of a query vertex $q_i \in q$ must be realizable as a substructure of the 1-hop subgraph of data vertex $v_i \in G$. 
Figure~\ref{subfig:one_hop_subgraph} illustrates an example 1-hop subgraph $g_1(v_1)$, and 
Figure~\ref{subfig:substructure} shows all $8\,(=2^3)$ possible 1-hop substructures contained in $g_1(v_1)$.

\begin{figure*}[t]
    \begin{minipage}[t]{0.56\textwidth}
        \centering
        \subfigure{
        {\includegraphics[height=3cm]{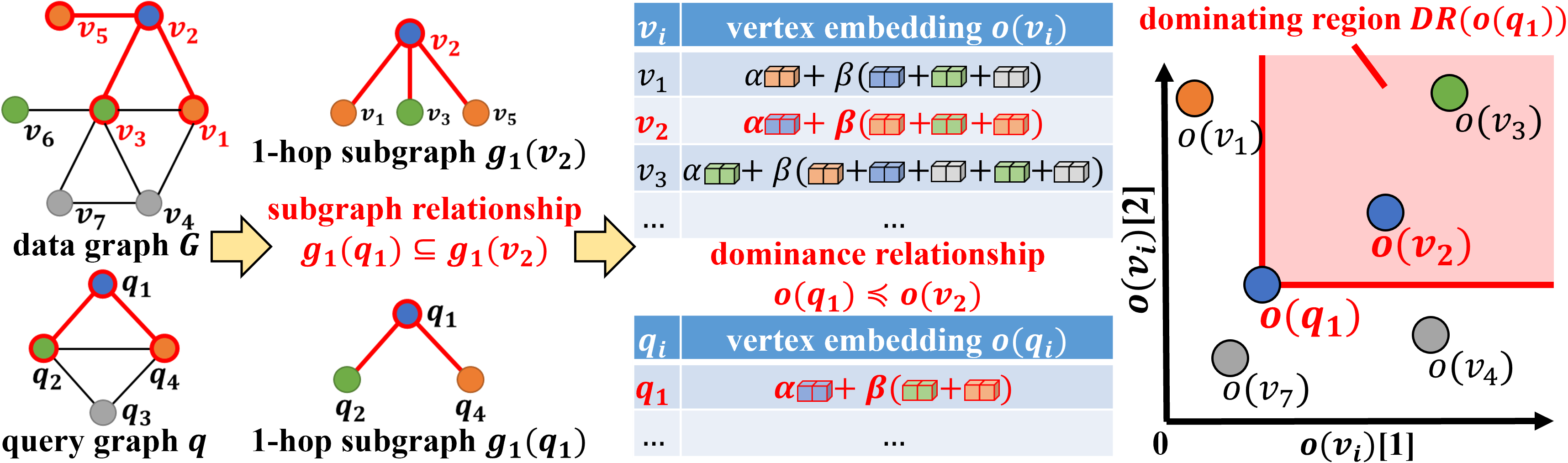}}}
        \caption{An illustration of the monotonicity of our vertex embedding.}
        \label{fig:dominance_example}
    \end{minipage}\hfill
    \begin{minipage}[t]{0.42\textwidth}
        \centering
        \subfigure[{\footnotesize initial VLE vectors $o_l(v_i)$}]{\label{subfig:vle_before}
        {\includegraphics[height=2.8cm]{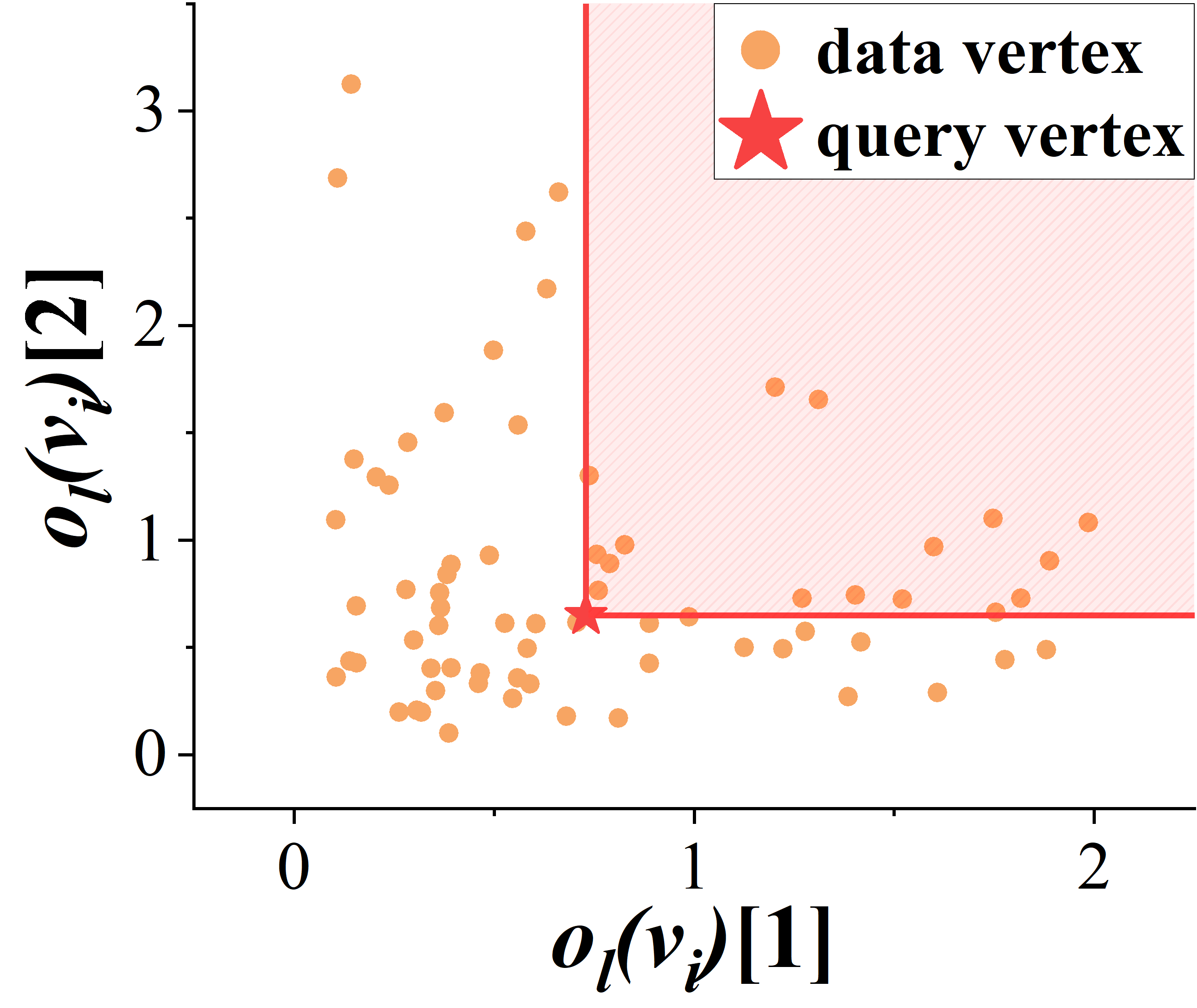}}} 
        \subfigure[{\footnotesize initial VSE vectors $o_s(v_i)$}]{\label{subfig:vse_before}
        {\includegraphics[height=2.8cm]{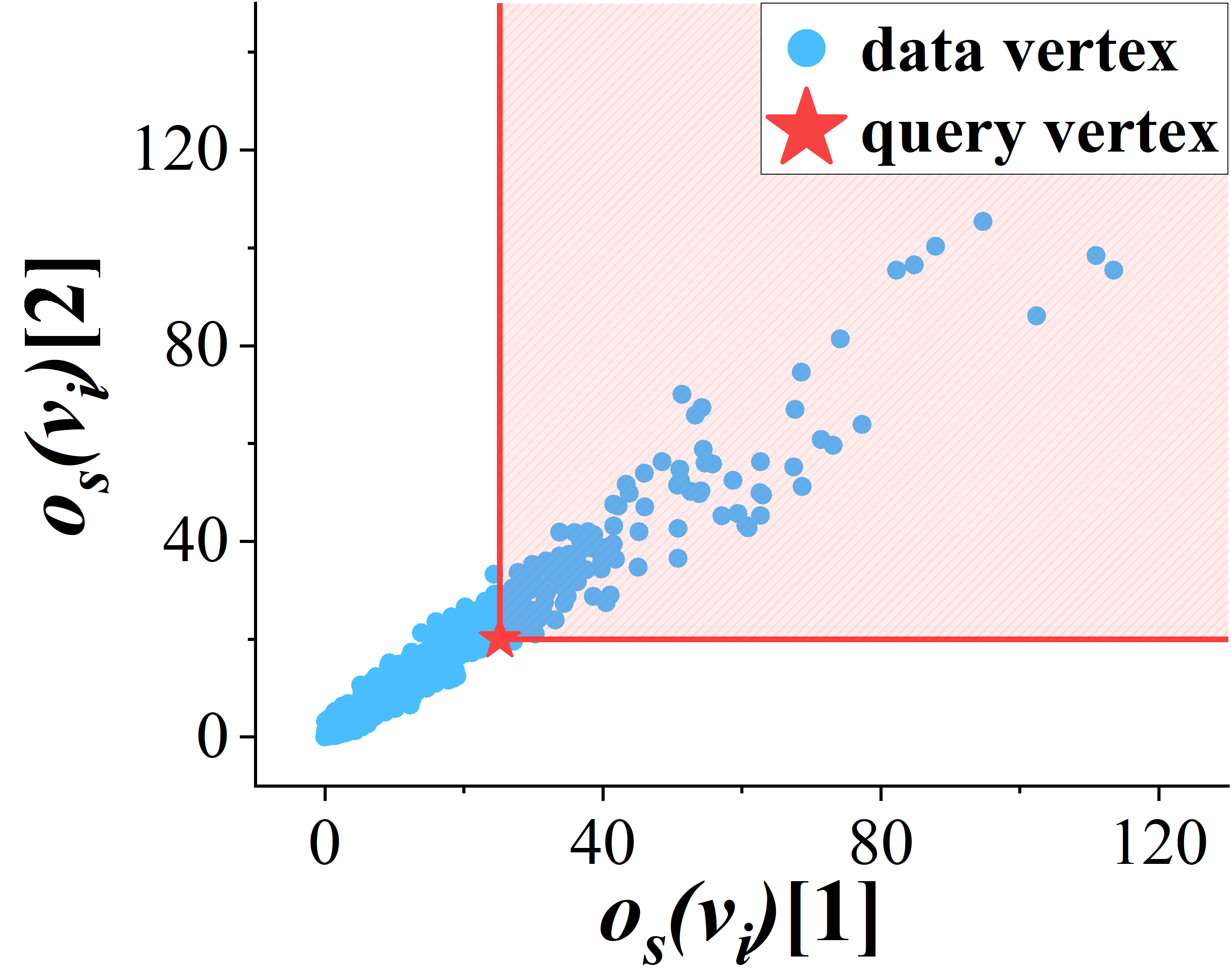}}}
        \caption{An example of initial VLE/VSE distributions.}
        \label{fig:visualization_before}
    \end{minipage}
\end{figure*}

\subsection{Monotonic Vertex Embedding Design}
\label{subsec:embedding_design}

In this subsection, we present a vertex embedding design that 
maps each vertex and its local neighborhood into an embedding space, 
enabling efficient and exact candidate filtering via vertex embeddings. 
Under this design, subgraph inclusion relationships are preserved by construction, rather than 
learned from enumerated subgraph--substructure pairs as in~\cite{ye2024efficient}.

\noindent{\bf Monotonic Vertex Embedding Function}.
We define a vertex embedding function
$f: g_1(v_i) \rightarrow o(v_i) \in \mathbb{R}^d$,
where $g_1(v_i)$ denotes the 1-hop subgraph centered at $v_i$, and $o(v_i)$ is the $d$-dimensional embedding of vertex $v_i$.

The embedding function $f(\cdot)$ is said to be \emph{monotonic} with respect to subgraph inclusion if
$s_1(v_i) \subseteq g_1(v_i)
    \;\Longrightarrow\;
    o(s_1(v_i)) \preceq o(g_1(v_i))$,
where $\preceq$ denotes \emph{coordinate-wise dominance} between two embeddings, 
\ie~$o(s_1(v_i))[j] \le o(g_1(v_i))[j]$ holds for the embedding values in all $d$ dimensions~\cite{ye2024efficient,ye2025continuous}. 
This monotonicity ensures that subgraph inclusion for each vertex in graph $G$ is faithfully reflected as dominance in the learned embedding space.

Below, we show how the embedding function $f(\cdot)$ is constructed.

\noindent{\bf Embedding Design.}
The embedding $o(v_i)$ of a vertex $v_i$ consists of two components: 
a vertex label embedding (VLE) that captures the semantic information of $v_i$, and 
a vertex structure embedding (VSE) that summarizes its local neighborhood structure.

\underline{\it Vertex Label Embedding (VLE).}
Let $o_l(v_i)$ denote the vertex label embedding of vertex $v_i$, defined as
\begin{equation}
    o_l(v_i) = \mathrm{softplus}(\mathrm{Emb}(L(v_i))),
    \label{eq:label_embedding}
\end{equation}
where $\mathrm{Emb}(\cdot)$ is a trainable embedding model (see Section~\ref{subsec:embedding_optimizaition} for its training) that maps the vertex label $L(v_i)$ to a $d$-dimensional vector, and 
$\mathrm{softplus}(x)=\log(1+e^x)$. 
The softplus activation ensures that all embedding dimensions are strictly non-negative, 
which is required to preserve dominance relationships among vertices.

As illustrated in Figure~\ref{fig:vertex_embedding}, 
vertex label embeddings $o_l(v_1)$--$o_l(v_4)$ are generated for vertices $v_1$--$v_4$ according to their respective labels $L(v_1)$--$L(v_4)$ using Eq.~\ref{eq:label_embedding}.

\underline{\it Vertex Structure Embedding (VSE).}
Let $o_s(v_i)$ denote the vertex structure embedding, which captures the local structural context of $v_i$ by aggregating the label embeddings of its 1-hop neighbors: 
\begin{equation}
    o_s(v_i) = \sum_{v_j \in \mathcal{N}_1(v_i)} o_l(v_j).
\end{equation}
This aggregation summarizes the neighborhood structure of $v_i$ in a manner consistent with its 1-hop subgraph $g_1(v_i)$. 
Since the aggregation is a summation over non-negative vectors, it is monotonic \wrt neighbor inclusion.
For example, in Figure~\ref{fig:vertex_embedding}, 
the structure embedding $o_s(v_i)$ of vertex $v_1$ is obtained by 
summing the label embeddings of its neighbors, \ie~$o_s(v_1)=o_l(v_2)+o_l(v_3)+o_l(v_4)$.

\underline{\it Monotonic Vertex Embedding.}
The final monotonic embedding of 
vertex $v_i$ is 
a weighted combination of its VLE and VSE components:\looseness=-1
\begin{equation}
    o(v_i) = \alpha\, o_l(v_i) + \beta\, o_s(v_i),
\end{equation}
where $\alpha$ and $\beta$ are non-negative weighting parameters 
(see Section~\ref{subsec:embedding_optimizaition} for their settings). 
This formulation jointly encodes vertex semantics and local structural information while preserving monotonicity.
As illustrated in Figure~\ref{fig:vertex_embedding}, 
the embedding of vertex $v_1$ is computed as $o(v_1)=\alpha\,o_l(v_1)+\beta\,o_s(v_1)$.

\underline{\it Monotonicity and Dominance Preservation.}
By construction, all components of $o_l(v_i)$ and $o_s(v_i)$ are non-negative, and both summation and weighted addition are monotonic operations.
Consequently, for any 1-hop subgraph $g_1(v_i)$ and its 1-hop substructure $s_1(v_i)$ with $s_1(v_i) \subseteq g_1(v_i)$, their embeddings satisfy $o(s_1(v_i))[j] \le o(g_1(v_i))[j]$ for all $j \in [1,d]$. 
That is, the embedding function $f(\cdot)$ preserves inclusion relationships among 1-hop subgraphs as dominance relationships in the embedding space.

\begin{lemma}
\textbf{(Monotonicity of Vertex Embeddings).}
Given a 1-hop subgraph $g_1(v_i)$ and one of its 1-hop substructures $s_1(v_i)$ with $s_1(v_i) \subseteq g_1(v_i)$, their embeddings satisfy $o(s_1(v_i)) \preceq o(g_1(v_i))$.
\label{lemma:vertex_monotonicity}
\end{lemma}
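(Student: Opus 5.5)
The plan is to unfold the definition of $o(\cdot)$ on both sides and compare the two embeddings coordinate by coordinate. First, make precise what $o(s_1(v_i))$ means. Since $s_1(v_i)$ is the subgraph induced by $v_i$ and a subset $S \subseteq \mathcal{N}_1(v_i)$, the center vertex and its label $L(v_i)$ are the same in both graphs. Applying the embedding design to the center of each graph gives
\begin{equation*}
o(g_1(v_i)) = \alpha\, o_l(v_i) + \beta \sum_{v_j \in \mathcal{N}_1(v_i)} o_l(v_j), \qquad o(s_1(v_i)) = \alpha\, o_l(v_i) + \beta \sum_{v_j \in S} o_l(v_j).
\end{equation*}
Each neighbor keeps its label in the substructure, so $o_l(v_j)$ is identical in both sums.

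Next, subtract the two expressions. The label terms $\alpha\, o_l(v_i)$ cancel exactly, which leaves
\begin{equation*}
o(g_1(v_i)) - o(s_1(v_i)) = \beta \sum_{v_j \in \mathcal{N}_1(v_i)\setminus S} o_l(v_j).
\end{equation*}
Fix any coordinate $k \in [1,d]$. By Eq.~(\ref{eq:label_embedding}), each $o_l(v_j)[k] = \mathrm{softplus}(\cdot) > 0$, and $\beta \ge 0$. Hence the $k$-th entry of the difference is a nonnegative multiple of a sum of nonnegative terms, so it is $\ge 0$. This gives $o(s_1(v_i))[k] \le o(g_1(v_i))[k]$ for all $k$, which is exactly $o(s_1(v_i)) \preceq o(g_1(v_i))$. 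The degenerate case $S=\mathcal{N}_1(v_i)$ gives equality, and $S=\emptyset$ is covered by the empty sum being $0$.

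The only real obstacle is the first step: fixing what "the embedding of a substructure" means, since $f$ is formally defined on 1-hop subgraphs. I would state explicitly that $o(s_1(v_i))$ is computed by the same formula, with the neighbor set replaced by $S$. I would also note that the edges among neighbors play no role, because VSE depends only on the neighbor multiset of labels. The inclusion $s_1(v_i)\subseteq g_1(v_i)$ therefore reduces entirely to $S \subseteq \mathcal{N}_1(v_i)$. The same argument then covers a query vertex $q_i$ whose $g_1(q_i)\equiv s_1(v_i)$, because isomorphic 1-hop graphs share the center label and the neighbor label multiset and hence have equal embeddings.
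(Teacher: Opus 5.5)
Your proof is correct and follows the paper's argument. The shared center gives identical label terms $\alpha\,o_l(v_i)$, and the structure term sums nonnegative label embeddings over $S \subseteq \mathcal{N}_1(v_i)$, so $\beta \ge 0$ yields coordinate-wise dominance; your explicit difference $\beta\sum_{v_j\in\mathcal{N}_1(v_i)\setminus S} o_l(v_j)$ is just a cleaner way of writing the paper's comparison of $o_s$ terms. One caution on your closing remark, which lies outside the lemma: an isomorphism $g_1(q_i)\equiv s_1(v_i)$ need not send center to center in general (for instance, a triangle whose three vertices carry distinct labels), so equal embeddings require the isomorphism to map $q_i$ to $v_i$, which does hold in the matching setting where $q_i$ is matched to $v_i$.
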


\begin{proof}
Both $g_1(v_i)$ and $s_1(v_i)$ share the same center vertex $v_i$ and thus have identical vertex label embeddings. 
Since $\mathcal{N}_1(s_1(v_i)) \subseteq \mathcal{N}_1(g_1(v_i))$, their structure embeddings satisfy
\[
o_s(s_1(v_i)) = \sum_{v_j \in \mathcal{N}_1(s_1(v_i))} o_l(v_j)
\preceq
\sum_{v_j \in \mathcal{N}_1(g_1(v_i))} o_l(v_j)
= o_s(g_1(v_i)).
\]
Combining this with the non-negative weighted sum defining $o(v_i)$ yields $o(s_1(v_i)) \preceq o(g_1(v_i))$, completing the proof.
\end{proof}

\noindent{\bf No-False-Dismissal Guarantee.}
The monotonicity property directly provides a no-false-dismissal guarantee for vertex-level candidate filtering. 
If a query vertex $q_i \in q$ can be matched to a data vertex $v_i \in G$,
then the 1-hop subgraph of $q_i$ must be a substructure of $g_1(v_i)$. 
By the monotonic embedding design, 
their embeddings satisfy $o(q_i) \preceq o(v_i)$.
Consequently, filtering candidates based on the dominance relationship among vertex embeddings preserves all valid matches and does not eliminate any true matches.

\begin{example}
Figure~\ref{fig:dominance_example} illustrates how 
monotonic vertex embedding enables safe and effective candidate filtering. 
In the data graph $G$, the 1-hop subgraph of vertex $v_2$ contains that of the query vertex $q_1$, \ie~$g_1(q_1) \subseteq g_1(v_2)$. 
By the monotonic embedding design, this inclusion relationship is preserved in the embedding space as a dominance relation, \ie~$o(q_1) \preceq o(v_2)$. 
Consequently, $v_2$ lies in the dominating region $DR(o(q_1))$, which consists of vertices whose embedding values are no smaller than $o(q_1)$ in all dimensions, and is correctly retained as a candidate for $q_1$ during index-based retrieval.

Vertices whose embeddings fall outside $DR(o(q_1))$ violate the dominance condition and can be safely pruned without affecting correctness. 
Note that some vertices (\eg~$v_3$) may satisfy the dominance condition yet fail to match $q_1$ due to higher-order structural constraints,
which will be eliminated later (Section~\ref{sec:algorithm}).
\end{example}

\begin{figure*}[t]
    \begin{minipage}[t]{0.6\textwidth}
    \centering
        \subfigure[{\footnotesize learned VLE vectors $o_l(v_i)$}]{\label{subfig:vle_after}
            {\includegraphics[height=2.8cm]{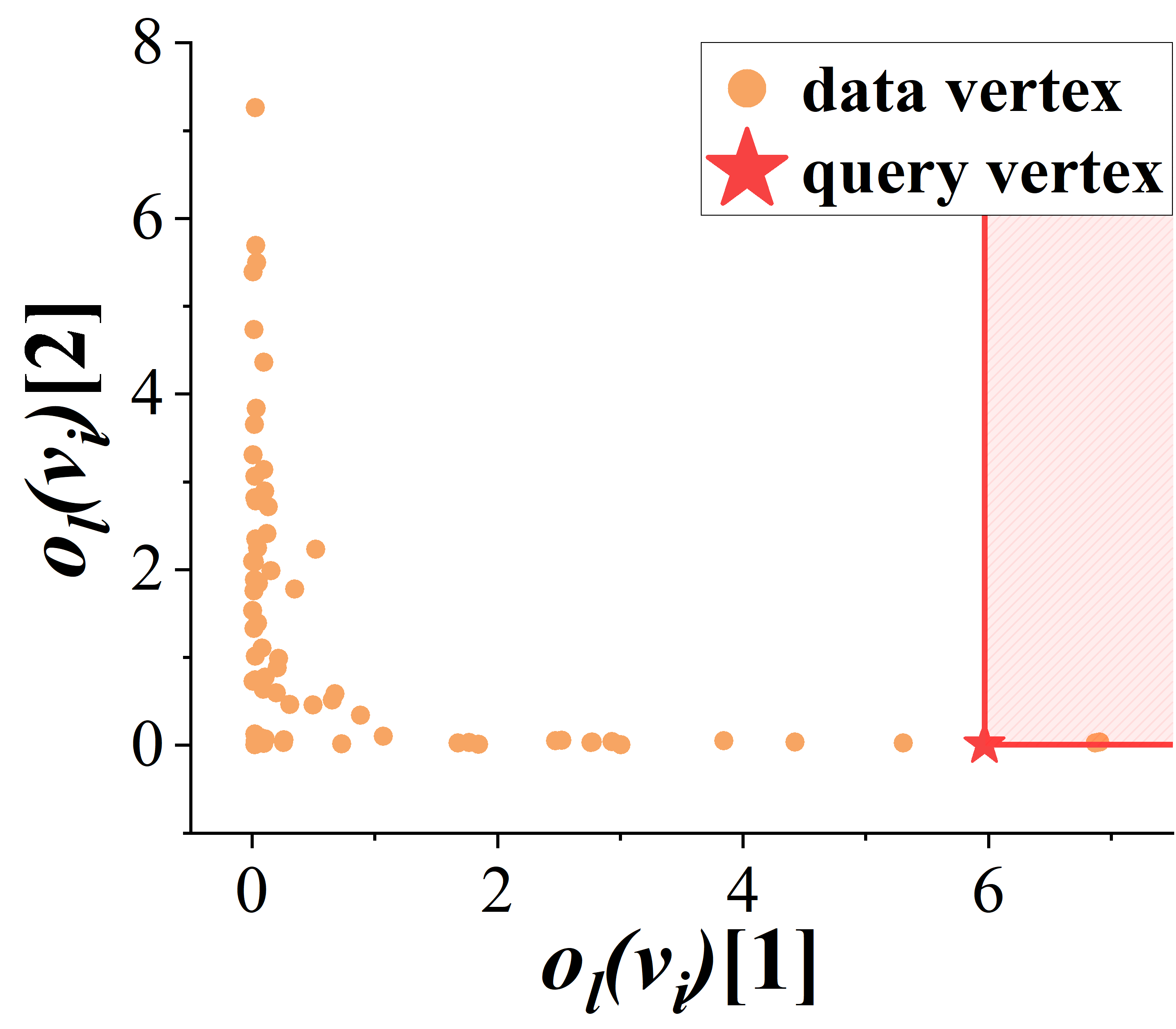}}} 
            \subfigure[{\footnotesize learned VSE vectors $o_s(v_i)$}]{\label{subfig:vse_after}
            {\includegraphics[height=2.8cm]{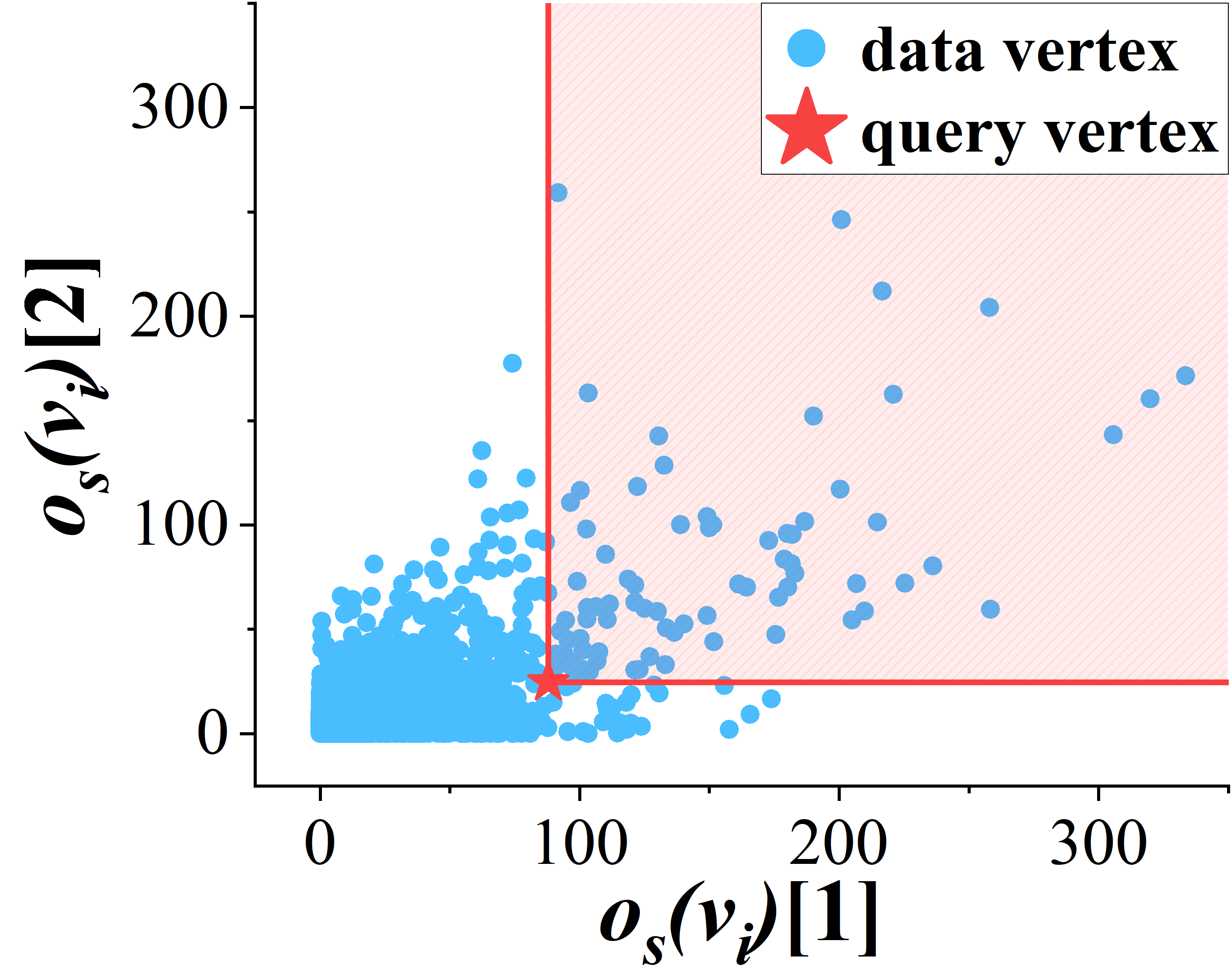}}} 
        \subfigure[{\footnotesize optimized VLE vectors $o_l(v_i)$}]{\label{subfig:vle_norm}
            {\includegraphics[height=2.8cm]{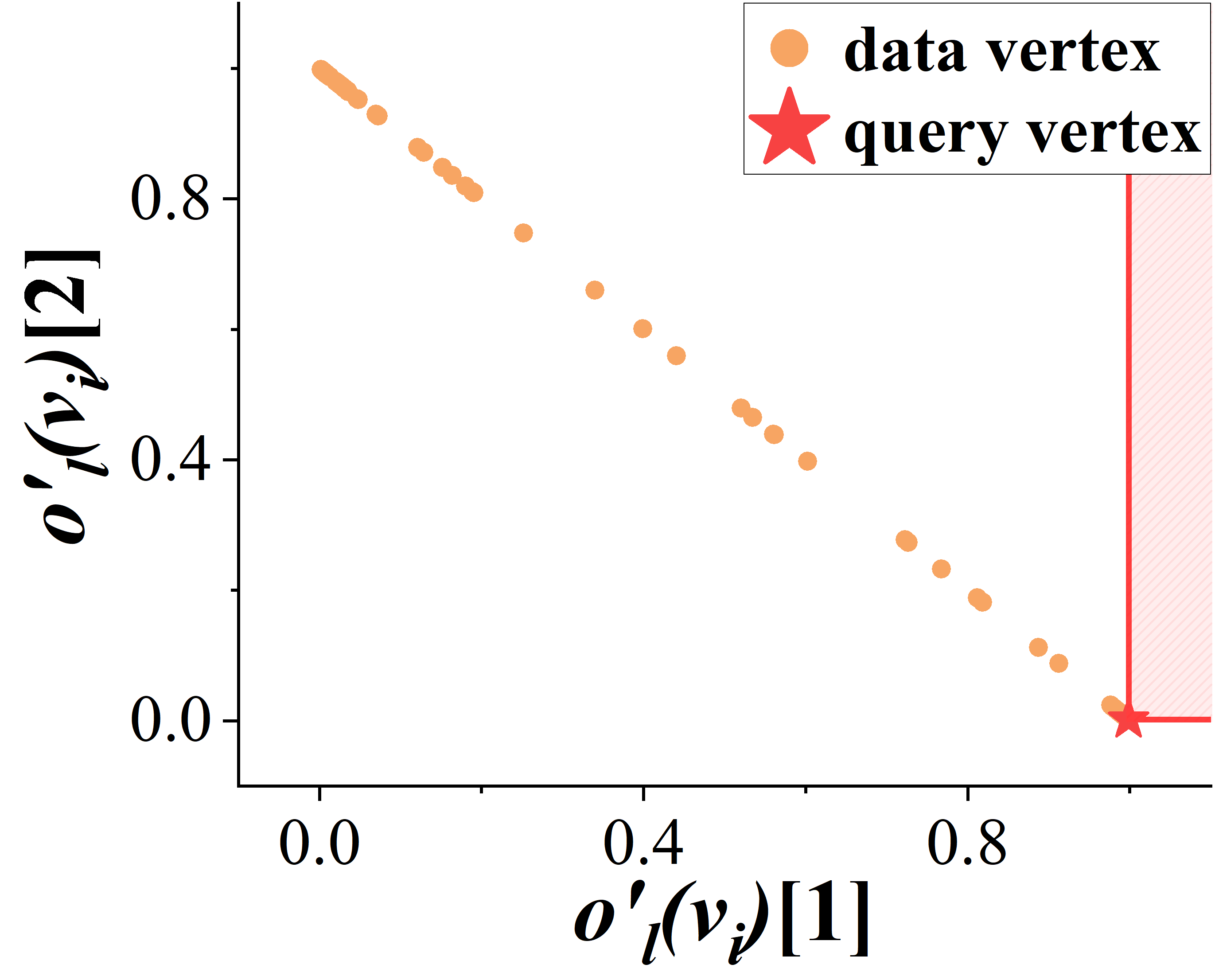}}} 
        \caption{An example of optimized VLE/VSE vector distributions.}
        \label{fig:visualization_after}
    \end{minipage}\hfill
    \begin{minipage}[t]{0.4\textwidth}
    \centering
        \subfigure[{\footnotesize $\alpha=10,\beta=1$}]{\label{subfig:mve_small}
            {\includegraphics[height=2.8cm]{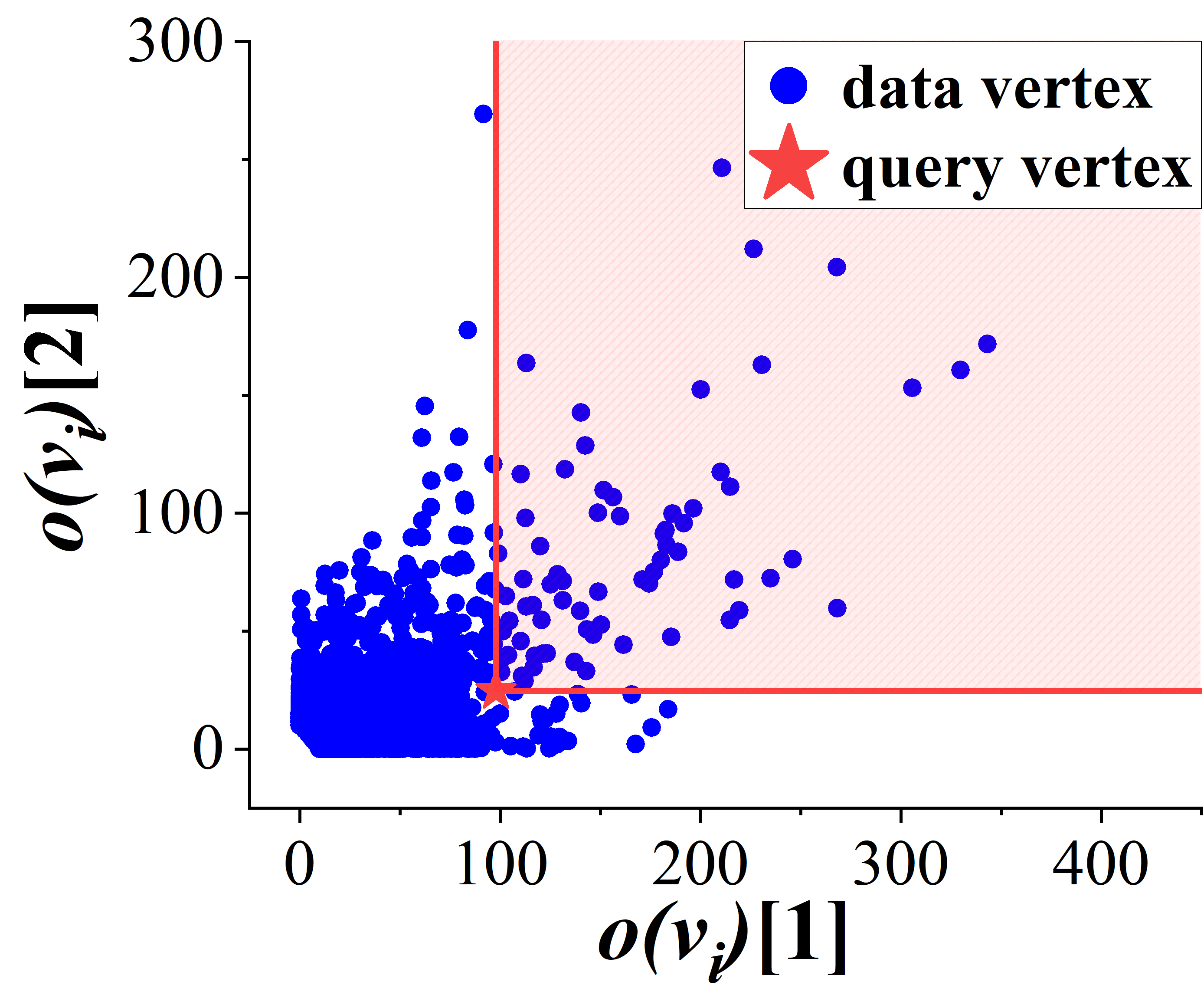}}} 
            \subfigure[{\footnotesize $\alpha=1,000,\beta=0.01$}]{\label{subfig:mve_large}
            {\includegraphics[height=2.8cm]{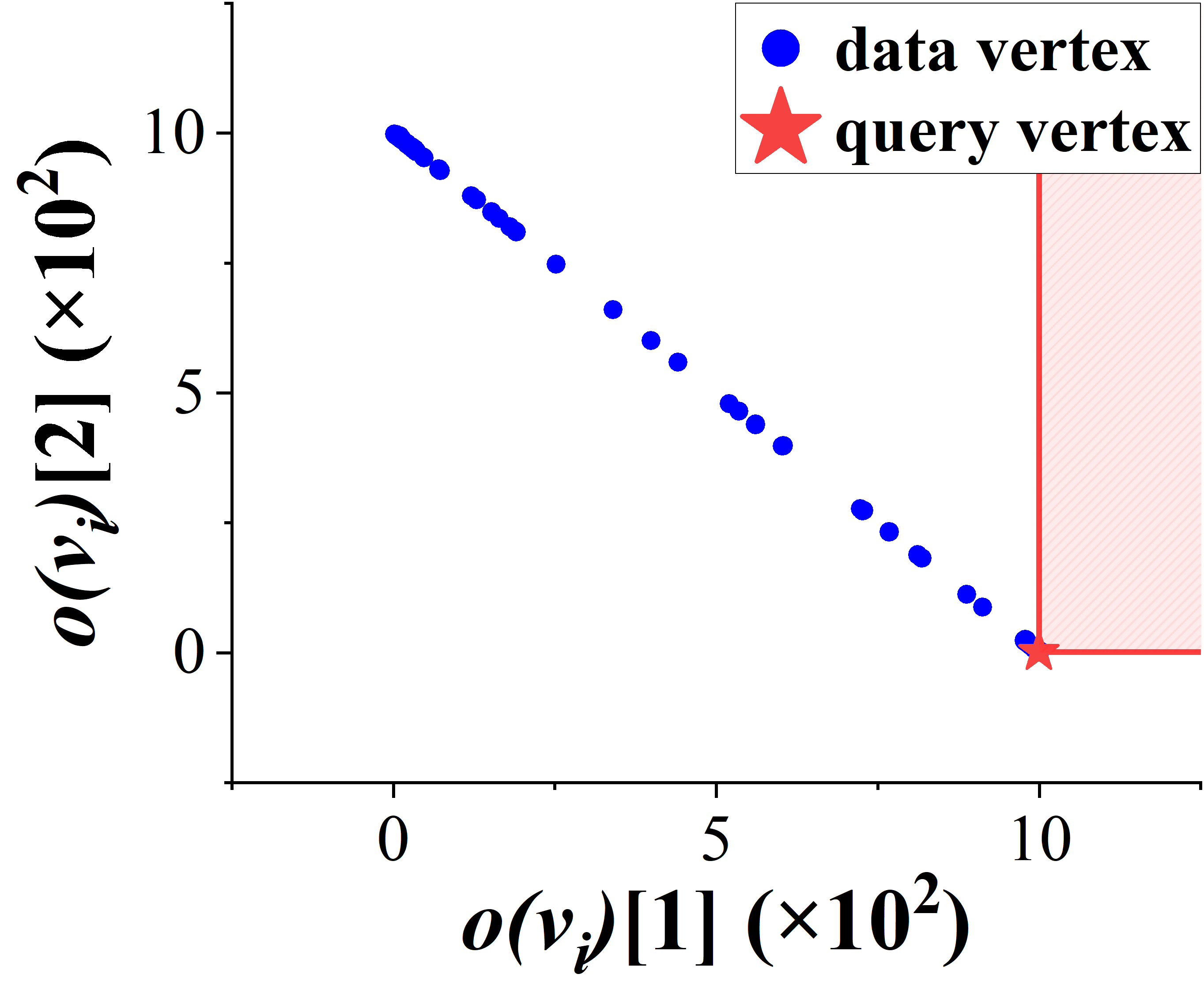}}} 
        \caption{An example of $o(v_i)$ with different $\alpha/\beta$.}
        \label{fig:visualization_final}
    \end{minipage}
\end{figure*}

\noindent{\bf Decoupling Correctness Guarantees from Embedding Optimization.}
Unlike existing dominance learning-based exact subgraph matching methods (\eg~\cite{ye2024efficient}) that tightly couple the no-false-dismissal guarantee with the embedding learning objective, LIVE preserves subgraph containment as dominance relations in the embedding space by design, rather than through learning.
As a result, the no-false-dismissal guarantee becomes an inherent structural property of the embedding, independent of the learning objective.

This decoupling enables a new optimization perspective.
Once correctness is ensured by construction, embedding learning is freed from dominance constraints and can directly target pruning power and query efficiency.
This insight motivates the cost-model–driven embedding optimization introduced in Section~\ref{subsec:embedding_optimizaition}, which explicitly minimizes query cost rather than learning dominance relations.

\subsection{Cost Model-based Embedding Optimization}
\label{subsec:embedding_optimizaition}
Given that the monotonic vertex embedding design in Section~\ref{subsec:embedding_design} guarantees dominance correctness by construction, in this subsection, we focus on improving embedding quality by maximizing pruning power to reduce query cost during candidate retrieval.

During online query processing, the cost of matching a query vertex $q_i$ is determined by the number of data vertices whose embeddings are dominated by $o(q_i)$, \ie~the size of the dominance-induced candidate set $\{v_j \mid o(q_i) \preceq o(v_j)\}$. 
Accordingly, embedding optimization aims to reduce the number of dominated vertices, thereby minimizing the size of candidates examined during query processing.
However, query embeddings are unknown during offline training, and enumerating all possible 1-hop substructures is infeasible due to their exponential growth with vertex degree ($\sum_{v \in V(G)}2^{deg(v)}$). 
To address this, we estimate query cost on data vertices and use it as a surrogate for expected query behavior.

\noindent{\bf Cost Model for Query Efficiency.}
Based on the above observation, we define a cost model that captures query efficiency by measuring the average number of vertices dominated by a data vertex embedding.
The expected query cost is approximated as
\begin{equation}
\mathrm{Cost} =
\frac{1}{|V(G)|}
\sum_{v_i \in V(G)}
\left|
\left\{
v_j \in V(G) \mid o(v_i) \preceq o(v_j)
\right\}
\right|,
\label{eq:cost_model}
\end{equation}
which represents the expected size of the candidate set induced by dominance-based filtering and serves as a proxy for query efficiency.

Figure~\ref{fig:visualization_before} visualizes the initial distributions of 2D vertex label embeddings (VLE) and vertex structure embeddings (VSE) on the Yeast dataset. 
The VLE vectors are randomly initialized and thus uniformly scattered in the embedding space (Figure~\ref{subfig:vle_before}), resulting in relatively small candidate sets. 
In contrast, VSE vectors, computed by summing non-negative VLE vectors, exhibit a strong concentration along the diagonal (Figure~\ref{subfig:vse_before}). 
This concentration causes many vertices to satisfy the dominance condition, leading to large candidate sets and high query cost.

\noindent{\bf Anti-Dominance Objective and Continuous Relaxation.}
As shown in Eq.~\ref{eq:cost_model}, the query cost induced by a query vertex embedding is determined by the number of data vertices whose embeddings are dominated by it. Accordingly, minimizing query cost is equivalent to minimizing the total number of dominance relations among vertex embeddings in the embedding space.

Intuitively, the notion of \emph{anti-dominance}, which indicates an object dominates or is dominated by only a few others~\cite{borzsony2001skyline}, naturally aligns with this optimization objective.
Following this intuition, we formalize Eq.~\ref{eq:cost_model} using the following \emph{anti-dominance cost}:

\begin{equation}
L_{\mathrm{cost}} =
\frac{1}{|V(G)|^2}
\sum_{v_i \in V(G)}
\sum_{v_j \in V(G)}
\mathbf{1}\{o(v_i) \preceq o(v_j)\},
\label{eq:discrete_cost}
\end{equation}
where $\mathbf{1}\{\cdot\}$ is an indicator function that equals $1$ if the embedding $o(v_i)$ dominates $o(v_j)$, and $0$ otherwise. 
This objective measures the expected number of dominance relations among all vertex pairs; minimizing it directly reduces the number of dominated vertices and thus improves vertex-level pruning power during query processing.

However, the anti-dominance cost in Eq.~\ref{eq:discrete_cost} is inherently discrete due to the indicator function induced by coordinate-wise dominance, and therefore cannot be optimized directly using gradient-based methods. To enable efficient optimization, we derive a continuous and differentiable relaxation of this objective. Specifically, we approximate the indicator function with a sigmoid-based surrogate and define the \emph{anti-dominance loss} as
\begin{equation}
\hspace{-2ex}
L =
\frac{1}{|V(G)|^2}
\sum_{v_i \in V(G)}
\sum_{v_j \in V(G)}
\sigma\!\left(
\frac{\min_k \left(o(v_i)[k] - o(v_j)[k]\right)}{\tau}
\right),
\label{eq:antidominance_loss}
\end{equation}
where $\sigma(x)=\frac{1}{1+e^{-x}}$ is the sigmoid function and $\tau$ is a temperature parameter that controls the smoothness of the approximation.
Note that the $\text{min}_k(\cdot)$ operator is piecewise linear and subdifferentiable.
Following~PointNet \cite{qi2017pointnet} and Deep Sets \cite{zaheer2017deep},
we use standard automatic differentiation \cite{paszke2019pytorch} to obtain a valid subgradient;
nondifferentiable ties occur with probability 0 for continuous parameters.

\begin{lemma}
\textbf{(Relationship Between Anti-Dominance Loss and Discrete Objective).}
Let $L_{\mathrm{cost}}$ and $L$ be defined in Eqs.~\ref{eq:discrete_cost} and~\ref{eq:antidominance_loss}, respectively.
As the temperature parameter $\tau \rightarrow 0^{+}$, the anti-dominance loss $L$ converges to the discrete anti-dominance cost $L_{\mathrm{cost}}$.
Moreover, for any finite $\tau > 0$, $L$ serves as a smooth upper bound of $L_{\mathrm{cost}}$.
\label{lemma:loss_relationship}
\end{lemma}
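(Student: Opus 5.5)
The plan is to reduce both claims to a termwise comparison between the sigmoid surrogate and the indicator, and then sum. For each ordered pair $(v_i,v_j)$ set $x_{ij}=\min_k\bigl(o(v_i)[k]-o(v_j)[k]\bigr)$. First I will align the two sign conventions. We have $x_{ij}\ge 0$ exactly when $o(v_j)\preceq o(v_i)$. Swapping the roles of $i$ and $j$ in the double sum of Eq.~\ref{eq:discrete_cost} does not change $L_{\mathrm{cost}}$, so
\[
L_{\mathrm{cost}}=\frac{1}{|V(G)|^2}\sum_{i,j}H(x_{ij}),
\]
where $H$ is the Heaviside step. This puts both objectives in the form $\frac{1}{|V(G)|^2}\sum_{i,j}\phi(x_{ij})$, with $\phi=\sigma(\cdot/\tau)$ for $L$ and $\phi=H$ for $L_{\mathrm{cost}}$. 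After this, both parts of Lemma~\ref{lemma:loss_relationship} become statements about the scalar functions $\sigma(x/\tau)$ and $H(x)$, and summing over the finite set of pairs preserves them.

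For the convergence claim, I would use that $\sigma(x/\tau)\to 1$ for $x>0$ and $\sigma(x/\tau)\to 0$ for $x<0$ as $\tau\to0^+$. Since the sum is finite, the limit passes through it. The delicate case is $x_{ij}=0$. It always occurs on the diagonal $i=j$, and off the diagonal only on ties, which the paper already notes have probability zero for continuous parameters. At $x=0$ the surrogate equals $\sigma(0)=1/2$ for every $\tau$. I would therefore read the indicator at the boundary with the standard Heaviside convention $H(0)=1/2$, the value the sigmoid relaxation actually targets. With that reading, termwise convergence is exact for every pair and $L\to L_{\mathrm{cost}}$ follows. I would also remark that with the convention $H(0)=1$ the two objectives differ by exactly $\frac{1}{2|V(G)|}$, which comes from the diagonal and vanishes relative to $L_{\mathrm{cost}}$ as $|V(G)|$ grows.

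For the upper-bound claim, I would split the pairs into non-dominance pairs ($x_{ij}<0$) and dominance pairs ($x_{ij}\ge0$). On non-dominance pairs, $\sigma(x_{ij}/\tau)>0=H(x_{ij})$, so the surrogate strictly overestimates, and this is the sense in which $L$ is an upper bound. Every spurious near-dominance relation is charged a positive penalty, which is what makes $L$ a conservative proxy for pruning loss. On dominance pairs, the plan is to show that the deficit $1-\sigma(x_{ij}/\tau)=\sigma(-x_{ij}/\tau)$ is dominated by the surplus collected on the non-dominance pairs. I would then combine the two parts and note smoothness of $L$ in $\tau$ and in the embeddings, away from measure-zero ties, from the composition of $\sigma$ with the piecewise-linear $\min_k$.

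The step I expect to be the main obstacle is exactly this deficit on dominance pairs. Pointwise, $\sigma(x/\tau)<1$ for every finite $\tau$, so a termwise inequality $\sigma\ge H$ fails for them. My first attempt would be to pair each dominance pair $(i,j)$ with its reverse $(j,i)$, using the identity $\sigma(x/\tau)+\sigma(-x/\tau)=1$. Under this pairing, each dominance pair together with its reverse contributes exactly $1$ to $L$, and at least $1$ to $L_{\mathrm{cost}}$. The surplus therefore has to come from incomparable pairs, which contribute $\sigma(x_{ij}/\tau)+\sigma(x_{ji}/\tau)>0$ to $L$ and $0$ to $L_{\mathrm{cost}}$. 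If this accounting does not close in general, I would state the upper bound under the natural regime the lemma targets, namely sufficiently many incomparable pairs relative to $\tau$, and make that condition explicit.
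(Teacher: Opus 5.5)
\textbf{The upper-bound half has a gap that cannot be closed, because the claim is false.} The step that fails is your pairing argument. You use $\sigma(x/\tau)+\sigma(-x/\tau)=1$ for a dominance pair and its reverse, but the reverse argument is not $-x_{ij}$:
\[
x_{ji}=\min_k\bigl(o(v_j)[k]-o(v_i)[k]\bigr)=-\max_k\bigl(o(v_i)[k]-o(v_j)[k]\bigr)\le -x_{ij},
\]
with equality only when all coordinate differences coincide, which is automatic only for $d=1$. Since $\sigma$ is increasing, every unordered pair $\{i,j\}$ with $i\neq j$, comparable or not, contributes at most $1$ to $|V(G)|^2L$. Adding the diagonal terms $\sigma(0)=\frac12$ gives $L\le\frac12$ for every $\tau>0$ and every configuration. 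So comparable pairs produce a (generally strict) deficit rather than cancelling. Consequently $L\ge L_{\mathrm{cost}}$ fails whenever $L_{\mathrm{cost}}>\frac12$, for instance whenever the embeddings form a chain under $\preceq$, where $L_{\mathrm{cost}}\ge\frac{|V(G)|+1}{2|V(G)|}$.

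The bound fails under your $H(0)=\frac12$ reading too. Take $o(v_1)=(1,1)$ and $o(v_2)=(2,3)$. Then $x_{12}=-2$ and $x_{21}=1$, so $4L=1+\sigma(-2/\tau)+\sigma(1/\tau)<2$. Meanwhile $L_{\mathrm{cost}}$ is $\frac34$ as defined and $\frac12$ under your convention, for every $\tau>0$. Restricting to ``enough incomparable pairs'' is therefore not a side condition to make explicit; it changes the statement.

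\textbf{The convergence half is proved only for a different objective.} Eq.~\ref{eq:discrete_cost} uses the reflexive relation $\preceq$, so $\mathbf{1}\{o(v_i)\preceq o(v_i)\}=1$. Setting $H(0)=\frac12$ redefines $L_{\mathrm{cost}}$ rather than reading it. Let $Z=|\{(i,j):x_{ij}=0\}|\ge|V(G)|$. The actual limit is $\lim_{\tau\to0^+}L=L_{\mathrm{cost}}-\frac{Z}{2|V(G)|^2}\le L_{\mathrm{cost}}-\frac{1}{2|V(G)|}$, which by itself refutes the upper bound for small $\tau$ in every instance. Your ``exactly $\frac{1}{2|V(G)|}$'' understates the gap. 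Off-diagonal zeros are not probability-zero events here: vertices with the same label and the same multiset of neighbor labels receive identical embeddings for all parameter values.

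The paper's proof does not supply the missing step either. It argues termwise from three claims, all of which are false:
\begin{itemize}
\item $o(v_i)\preceq o(v_j)\iff\min_k(o(v_i)[k]-o(v_j)[k])\le0$;
\item $\sigma(x/\tau)\to\mathbf{1}\{x\le0\}$;
\item $\sigma(x/\tau)\ge\mathbf{1}\{x\le0\}$, which fails at every $x\le0$.
\end{itemize}
Your symmetrization of the sum and your identification $x_{ij}\ge0\iff o(v_j)\preceq o(v_i)$ are the correct alignment.

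What is provable is the corrected limit above. In addition, since $2\sigma(x/\tau)\ge\mathbf{1}\{x\ge0\}$ for all $x$, the bound $L_{\mathrm{cost}}\le2L$ holds for every $\tau>0$. Turn your ``main obstacle'' into a counterexample and state these results instead.
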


\begin{proof}
By definition, the dominance relation $o(v_i) \preceq o(v_j)$ holds if and only if $\min_k \big(o(v_i)[k] - o(v_j)[k]\big) \le 0$.
Accordingly, the discrete indicator function
$\mathbf{1}\{x \le 0\}$ can be approximated by the sigmoid function
$\sigma(x/\tau)$, which converges pointwise to the indicator as
$\tau \rightarrow 0^{+}$.

Therefore, for any vertex pair $(v_i, v_j)$, we have:
\[
\hspace{-3ex}
\lim_{\tau \rightarrow 0^{+}}
\sigma\!\left(
\frac{\min_k \big(o(v_i)[k] - o(v_j)[k]\big)}{\tau}
\right)
=
\mathbf{1}\!\left\{
\min_k \big(o(v_i)[k] - o(v_j)[k]\big) \le 0
\right\}.
\]

Substituting this limit into the definition of $L$ yields:
\[
\lim_{\tau \rightarrow 0^{+}} L
=
\frac{1}{|V(G)|^2}
\sum_{v_i, v_j}
\mathbf{1}\{o(v_i) \preceq o(v_j)\}
=
L_{\mathrm{cost}}.
\]

Moreover, for any finite $\tau > 0$, the sigmoid function satisfies
$\sigma(x/\tau) \ge \mathbf{1}\{x \le 0\}$ for all $x$.
Thus, each term in $L$ upper-bounds the corresponding indicator term in
$L_{\mathrm{cost}}$, which implies $L \ge L_{\mathrm{cost}}$.
Equality holds asymptotically as $\tau \rightarrow 0^{+}$, completing the proof.
\end{proof}

Although the anti-dominance loss in Eq.~\ref{eq:antidominance_loss} provides a differentiable surrogate for the discrete cost objective, directly optimizing it requires evaluating all vertex pairs, incurring $O(|V(G)|^2)$ time/space cost and rendering it infeasible for large graphs.
To address this, we adopt a sampling-based approximation during training: instead of summing over all vertex pairs, we uniformly sample vertex pairs $(v_i, v_j)$ from $V(G) \times V(G)$ and optimize the expected anti-dominance loss:
\begin{equation}
\hspace{-2ex}
L =
\mathbb{E}_{(v_i, v_j) \sim \mathcal{U}(V(G)\times V(G))}
\left[
\sigma\!\left(
\frac{\min_k \left(o(v_i)[k] - o(v_j)[k]\right)}{\tau}
\right)
\right]
\label{eq:sampled_loss}
\end{equation}
This sampling strategy preserves the expectation of the full loss in Eq.~\ref{eq:antidominance_loss} while reducing the computational cost to be linear in the number of sampled pairs. Consequently, the anti-dominance loss can be efficiently optimized using stochastic gradient-based methods.

\noindent{\bf Model Training.}
We train the vertex embedding model $\mathrm{Emb}(\cdot)$ by minimizing the sampled anti-dominance loss in Eq.~\ref{eq:sampled_loss} using mini-batch stochastic gradient descent. 
In each iteration, a mini-batch of vertex pairs is sampled to estimate gradients, while the temperature parameter $\tau$ is gradually annealed to tighten the approximation to the discrete cost objective. 
After convergence, the trained model $\mathrm{Emb}(\cdot)$ is used to generate embeddings for all data vertices in $G$.

Figures~\ref{subfig:vle_after} and~\ref{subfig:vse_after} show the distributions of VLE and VSE embeddings on the Yeast dataset after training. Compared to the initial distributions in Figure~\ref{fig:visualization_before}, both embeddings become substantially more dispersed, resulting in fewer dominated vertices. For example, the average query cost decreases from 1,182.6 (resp.~1,403.97) to 522.28 (resp.~1,027.27) for VLE (resp.~VSE).

\noindent{\bf Optimizations.}
We further apply the following optimizations during vertex embedding training.

\underline{\it Optimization via $L_1$ Normalization.}
Although optimizing the structure embeddings VSE effectively reduces query cost, the average cost of label embeddings VLE remains relatively high (522.28), still exceeding the theoretical lower bound $\frac{|V(G)|}{|L(G)|}=3{,}112/71=43.83$. 
This indicates room for further refinement. 
To enhance discriminability, we apply $L_1$ normalization to VLE embeddings:
\begin{equation}
o_l'(v_i) = \frac{o_l(v_i)}{\|o_l(v_i)\|_1},
\end{equation}
which constrains embeddings to a constant-$L_1$ manifold and spreads them along the anti-diagonal direction.

Empirically, as shown in Figure~\ref{subfig:vle_norm}, $L_1$ normalization yields a clear anti-diagonal alignment of VLE embeddings and significantly reduces the average query cost (\eg~from 522.28 to 284.6) on the Yeast dataset. 
The remaining gap to the theoretical lower bound $\frac{|V(G)|}{|L(G)|}$ is mainly due to label distribution skew.

\underline{\it Optimization of $\alpha$ and $\beta$.}
We refine the vertex embedding $o(v_i)=\alpha\,o_l(v_i)+\beta\,o_s(v_i)$ by tuning the weighting parameters $\alpha$ and $\beta$. 
Since the optimized VLE embeddings 
exhibit strong global separability, we treat them as the base distribution and use VSE as a fine-grained structural perturbation. 
Setting $\alpha \gg \beta$ (\eg~$\alpha = 1000; \beta = 0.01$) preserves the favorable VLE distribution while incorporating local structural cues. 
As illustrated in Figures~\ref{subfig:mve_small} and~\ref{subfig:mve_large}, increasing the $\alpha/\beta$ ratio substantially reduces dominating-region overlap and lowers the average query cost from 732.6 to 103.34.

\begin{figure*}[t]
    \begin{minipage}[t]{0.5\textwidth}
        \centering
        \subfigure{
        {\includegraphics[height=2.5cm]{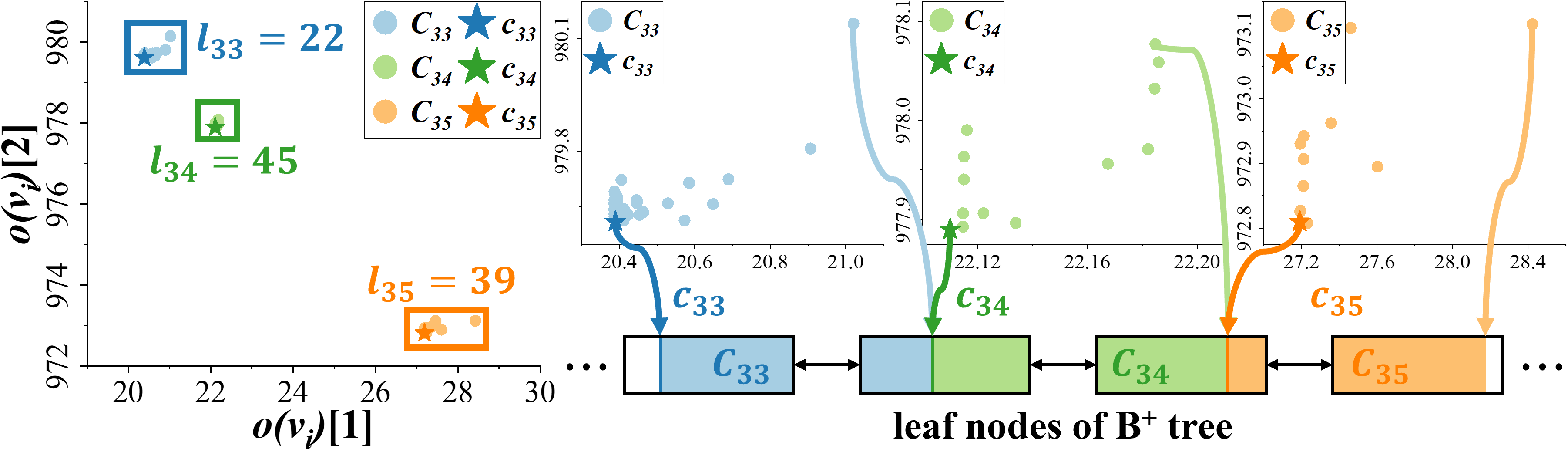}}} 
        \caption{An illustration of our \textit{iLabel} index design.}
        \label{fig:index_design}
    \end{minipage}\hfill
    \begin{minipage}[t]{0.5\textwidth}
        \centering
        \subfigure{
        {\includegraphics[height=2.8cm]{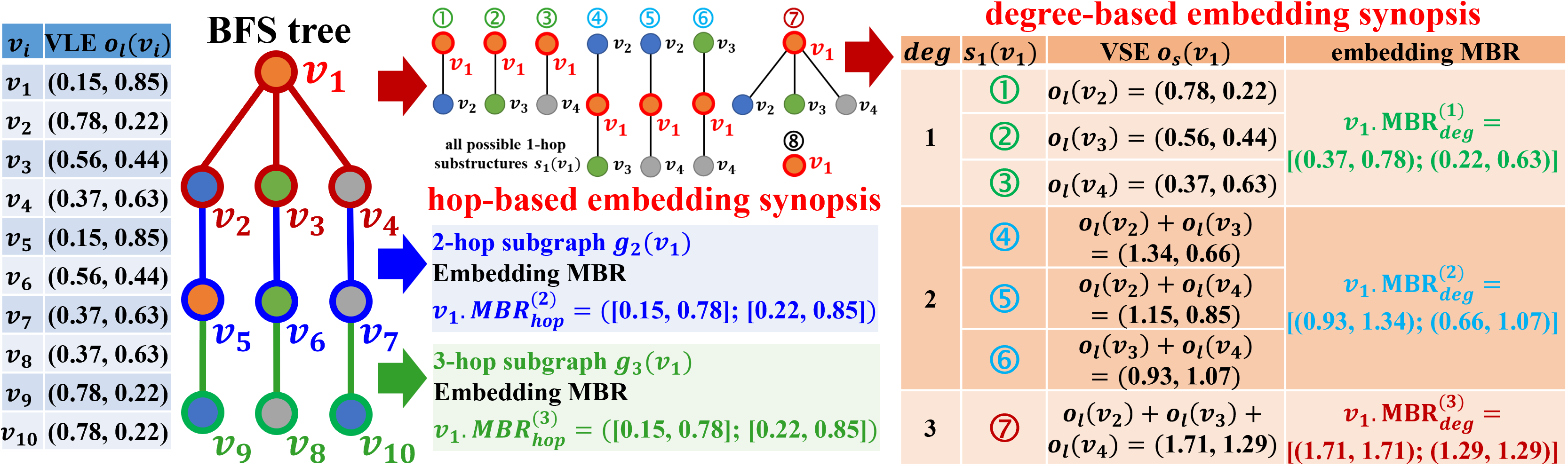}}}
        \caption{An illustration of our auxiliary synopsis design.}
        \label{fig:synopsis_example}
    \end{minipage}
\end{figure*}

\section{\textit{iLabel}: A Dominance-Preserving Index for Monotonic Vertex Embeddings}
\label{sec:index}
The monotonic vertex embeddings in Section~\ref{sec:embedding} enable exact vertex-level filtering via dominance relationships, but efficiently enumerating dominating vertices remains challenging for high-dimensional embeddings on large graphs. 
To address this, we propose \textit{iLabel}, a lightweight index that preserves dominance relationships while mapping multi-dimensional embeddings into a one-dimensional key space, enabling efficient range-based candidate retrieval.

\subsection{Design of the \textit{iLabel} Index}
\label{subsec:index_design}
The design of \textit{iLabel} exploits two properties of the learned monotonic vertex embeddings: 
(i) vertex label embeddings (VLE) naturally form semantic clusters with identical vertex labels, and 
(ii) the final embedding ($o(\cdot)$) is a weighted combination of label and structural components with $\alpha \gg \beta$, enabling non-overlapping clusters.

\noindent{\bf Label-based Clustering}
Recall that in the monotonic embedding design (Section~\ref{subsec:embedding_design}), the vertex label embedding (VLE) $o_l(v)$ is solely determined by the vertex label. Consequently, all vertices sharing the same label are mapped to the same VLE vector, naturally partitioning data vertices into disjoint label-based clusters:
\begin{equation}
    \mathcal{C} = \{C_1, C_2, \ldots, C_m\}, \qquad
    C_i = \{v_j \mid L(v_j) = l_i \},
\end{equation}
where each cluster $C_i$ consists of vertices in $G$ with label $l_i$.

For each cluster $C_i$, we define its center as the corresponding label embedding: $c_i = o_l(v_j), \forall v_j \in C_i$. Within a cluster, vertices differ only in their structural embeddings $o_s(v_j)$, which capture local neighborhood variations around the same semantic label.

During query processing, label-based clustering restricts candidate retrieval to clusters with labels compatible with the query vertex, thereby avoiding unnecessary access to irrelevant vertices.

\noindent{\bf One-dimensional Key Mapping.}
Recall that each vertex embedding is constructed as
$o(v)=\alpha\,o_l(v)+\beta\,o_s(v)$,
where $\alpha$ and $\beta$ are non-negative weights with $\alpha \gg \beta$.
This design ensures that inter-cluster separation is dominated by label embeddings,
while intra-cluster ordering is governed by structural embeddings.

\textit{iLabel} maps each vertex embedding to a one-dimensional key:
\begin{equation}
    \textit{key}(v) = \alpha \|o_l(v)\|_2 + \beta \|o_s(v)\|_2,
    \label{eq:index_key}
\end{equation}
where the first term defines a label-dependent base offset in the key space,
and the second term introduces a fine-grained ordering among vertices sharing the same label.
This mapping projects high-dimensional monotonic embeddings
into a one-dimensional space while preserving their relative dominance order.

We next show that this key mapping is sound for
dominance-based pruning, \ie~dominance in the embedding space implies
a consistent ordering in the key space.

\begin{lemma}
\textbf{(Dominance Preservation under One-dimensional Key Mapping).}
Let $v_i$ and $v_j$ be two data vertices with monotonic vertex embeddings
$o(v_i)$ and $o(v_j)$.
If $o(v_i) \preceq o(v_j)$ holds, 
then their corresponding keys satisfy
$\textit{key}(v_i) \le \textit{key}(v_j)$.
\label{lemma:key_dominance_preservation}
\end{lemma}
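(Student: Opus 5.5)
The plan is to reduce the claim to the elementary fact that the Euclidean norm is monotone under coordinate-wise dominance on the non-negative orthant: if $0 \preceq x \preceq y$ then $x[k]^2 \le y[k]^2$ for every $k$, so $\|x\|_2 \le \|y\|_2$. All vectors involved are non-negative. Indeed, $o_l(\cdot)$ is a softplus output (or its $L_1$-normalized version), $o_s(\cdot)$ is a sum of such vectors, and $\alpha,\beta \ge 0$. The difficulty is that $\textit{key}(v)$ is not a norm of $o(v)$ itself. It is the sum of the norms of the two components, and dominance $o(v_i) \preceq o(v_j)$ of the combined vectors does not by itself give dominance of each component. I would therefore split into two cases according to the label-based clustering $\mathcal{C}$.

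\emph{Case 1: $L(v_i) = L(v_j)$.} Then $o_l(v_i) = o_l(v_j)$, since the VLE depends only on the label. The $\alpha$-terms cancel in $o(v_i) \preceq o(v_j)$, leaving $\beta\, o_s(v_i) \preceq \beta\, o_s(v_j)$. If $\beta > 0$ this gives $o_s(v_i) \preceq o_s(v_j)$. Norm monotonicity on non-negative vectors then gives $\|o_s(v_i)\|_2 \le \|o_s(v_j)\|_2$, and adding the identical label term yields $\textit{key}(v_i) \le \textit{key}(v_j)$. If $\beta = 0$ the keys are simply equal.

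\emph{Case 2: $L(v_i) \neq L(v_j)$.} This is the main obstacle, and here the informal assumption $\alpha \gg \beta$ has to be made precise. I would first use the $L_1$ normalization. Summing coordinates of $o(v_i) \preceq o(v_j)$ gives $\alpha + \beta\|o_s(v_i)\|_1 \le \alpha + \beta\|o_s(v_j)\|_1$, so the structural mass cannot decrease. For the label part, I would argue that dominance across clusters forces the label centers to be ordered up to a structural perturbation. From $\alpha\, c_i \preceq \alpha\, c_j + \beta\, o_s(v_j)$ we get $\alpha\|c_i\|_2 \le \alpha\|c_j\|_2 + \beta\|o_s(v_j)\|_2$.

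To close the gap I would impose the separation condition implicit in the design:
\[
\alpha \,\bigl| \|c_a\|_2 - \|c_b\|_2 \bigr| \;>\; \beta \max_{v} \|o_s(v)\|_2
\]
for all distinct clusters $C_a, C_b$ whose norms differ. This condition makes the intervals of keys occupied by different clusters disjoint. Consequently, any cross-cluster dominance must go from the cluster with the smaller base offset to the one with the larger base offset, and every key in the lower cluster is below every key in the higher one. I would verify this with the displayed inequality: if $\|c_i\|_2 > \|c_j\|_2$, the separation condition contradicts it, so $\|c_i\|_2 < \|c_j\|_2$ and interval disjointness finishes the case. Ties $\|c_i\|_2 = \|c_j\|_2$ with $c_i \ne c_j$ are the delicate subcase. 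There I would try to show that dominance is impossible: both centers lie on the unit $L_1$ simplex, so neither dominates the other, and the $\beta$-perturbation is too small to repair this under the separation condition. Failing that, I would state the lemma under the separation assumption and note that the exact label filtering of \textit{iLabel} handles any residual cross-cluster pairs.
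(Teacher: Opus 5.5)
You have identified exactly the step the paper's own proof glosses over. The paper goes from $o(v_i)\preceq o(v_j)$ directly to $\|o_l(v_i)\|_2\le\|o_l(v_j)\|_2$ and $\|o_s(v_i)\|_2\le\|o_s(v_j)\|_2$ by ``monotonicity of the $L_2$ norm,'' which presupposes componentwise dominance of $o_l$ and $o_s$. That step is not only unjustified; the lemma as literally stated is false, because $\textit{key}(v)$ is not a function of $o(v)$. Take $d=2$, $L_1$-normalized label embeddings $c_a=(0.6,0.4)$ and $c_b=(0.4,0.6)$, $\alpha=\beta=1$, and a triangle $u,w,z$ labeled $a,b,b$. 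Then $o(u)=c_a+2c_b=(1.4,1.6)=c_b+(c_a+c_b)=o(w)$. Yet $\textit{key}(u)=3\sqrt{0.52}\approx 2.163$, while $\textit{key}(w)=\sqrt{0.52}+\sqrt{2}\approx 2.135$. So some separation hypothesis is unavoidable, and your route is the right repair: cancel the label term inside a cluster, and across clusters combine $\alpha\|c_i\|_2\le\alpha\|c_j\|_2+\beta\|o_s(v_j)\|_2$ with a separation condition and interval disjointness. In effect you prove the lemma under the hypothesis of Lemma~\ref{lemma:strict_cluster_separation}. Your Case~1 and the unequal-norm part of Case~2 are correct.

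The remaining gap is the tie subcase, and your plan for it would fail. Your separation condition constrains only label pairs whose norms differ, so for a pair with $\|c_i\|_2=\|c_j\|_2$ it places no bound on $\beta/\alpha$ at all. The example above is exactly such a pair. Both centers lie on the $L_1$ simplex and neither dominates the other, and your condition is vacuous (there are only two labels, with equal norms). Still, the structural term produces dominance with the keys in the wrong order, so the claim that ``the $\beta$-perturbation is too small'' cannot be derived. Appealing to label filtering in the index does not rescue the lemma either, since the lemma is a statement about keys.

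The fix is to exclude ties in the hypothesis. For example, assume Eq.~\ref{eq:strict_ratio_condition} with $\Delta_{\min}>0$. This forces pairwise distinct label norms and gives $\alpha(\|c_j\|_2-\|c_i\|_2)>\beta M$ whenever $\|c_j\|_2>\|c_i\|_2$. Your displayed inequality then rules out $\|c_i\|_2>\|c_j\|_2$, ties cannot occur, and $\textit{key}(v_i)\le\alpha\|c_i\|_2+\beta M<\alpha\|c_j\|_2\le\textit{key}(v_j)$ completes Case~2.
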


\begin{proof}
By definition of dominance,
$o(v_i) \preceq o(v_j)$ implies $o(v_i)[k] \le o(v_j)[k]$ for all dimensions $k$.
Since both $o_l(\cdot)$ and $o_s(\cdot)$ are non-negative vectors by construction,
their $L_2$ norms are monotonic with respect to coordinate-wise dominance.
Thus,
$\|o_l(v_i)\|_2 \le \|o_l(v_j)\|_2$
and
$\|o_s(v_i)\|_2 \le \|o_s(v_j)\|_2$.
Given $\alpha,\beta \ge 0$, it follows that
$\textit{key}(v_i)
= \alpha \|o_l(v_i)\|_2 + \beta \|o_s(v_i)\|_2
\le
\alpha \|o_l(v_j)\|_2 + \beta \|o_s(v_j)\|_2
= \textit{key}(v_j)$,
which completes the proof.
\end{proof}

\noindent{\bf Strict Separation of Label Clusters in the Key Space.}
While Lemma~\ref{lemma:key_dominance_preservation} guarantees that dominance
relations are preserved under the one-dimensional key mapping,
efficient index traversal further benefits from a stronger property:
key ranges of different label clusters should be strictly non-overlapping.
Intuitively, choosing $\alpha$ sufficiently larger than $\beta$
(\ie~$\alpha \gg \beta$) ensures that
label semantics dominate inter-cluster separation, while preserving structural discrimination within each cluster.
We formalize this intuition by giving a sufficient condition under which label-cluster key ranges in the \textit{iLabel} index are guaranteed to be strictly non-overlapping.

\begin{lemma}
\textbf{(Non-overlapping Condition for Label Clusters).}
Let $M=\max_{v_i\in V(G)} \|o_s(v_i)\|_2$
be the maximum $L_2$ norm of vertex structure embeddings (VSEs) over all data vertices, and let
$\Delta_{\min}=\min_{i}(\|o_l(l_{i+1})\|_2-\|o_l(l_i)\|_2)$
be the minimum difference between the $L_2$ norms of vertex label embeddings (VLEs) of two adjacent labels, where labels are ordered by increasing $\|o_l(l)\|_2$.
If the following holds:
\begin{equation}
\frac{\alpha}{\beta} > \frac{M}{\Delta_{\min}},
\label{eq:strict_ratio_condition}
\end{equation}
then the key ranges of any two distinct label clusters in the iLabel index are strictly non-overlapping.
\label{lemma:strict_cluster_separation}
\end{lemma}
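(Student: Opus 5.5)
The plan is to compute explicitly the interval of keys occupied by each label cluster and then show that consecutive intervals, ordered by label-embedding norm, cannot touch.

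\emph{Cluster intervals.} Fix a label $l_i$ and its cluster $C_i$. Every $v\in C_i$ has the same label embedding $o_l(v)=c_i=o_l(l_i)$, so by Eq.~\ref{eq:index_key}
\[
\textit{key}(v)=\alpha\|o_l(l_i)\|_2+\beta\|o_s(v)\|_2 .
\]
Since $o_s(v)$ is a sum of non-negative vectors, $0\le\|o_s(v)\|_2\le M$ by the definition of $M$. With $\beta\ge 0$, every key in $C_i$ therefore lies in
\[
I_i=\bigl[\alpha\|o_l(l_i)\|_2,\ \alpha\|o_l(l_i)\|_2+\beta M\bigr].
\]

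\emph{Adjacent labels.} Order the labels by increasing $\|o_l(l)\|_2$. For adjacent labels $l_i,l_{i+1}$ we need
\[
\alpha\|o_l(l_i)\|_2+\beta M<\alpha\|o_l(l_{i+1})\|_2,
\]
which is equivalent to $\beta M<\alpha\bigl(\|o_l(l_{i+1})\|_2-\|o_l(l_i)\|_2\bigr)$. The right-hand side is at least $\alpha\Delta_{\min}$. Condition~\eqref{eq:strict_ratio_condition}, read as $\alpha\Delta_{\min}>\beta M$, then gives the strict inequality, so $\max I_i<\min I_{i+1}$.

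\emph{Arbitrary pairs.} For any two distinct labels $l_i$ and $l_j$ with $i<j$, chain the adjacent-pair inequalities:
\[
\max I_i<\min I_{i+1}\le \max I_{i+1}<\cdots<\min I_j .
\]
Hence the key ranges of distinct clusters are disjoint and strictly separated.

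\emph{Main obstacle: degenerate quantities.} The argument only needs care at the edges.
\begin{itemize}
\item The ratio condition must be used in multiplied form, $\alpha\Delta_{\min}>\beta M$. This avoids dividing by $\beta=0$ or by $\Delta_{\min}=0$ in the degenerate cases.
\item If $\Delta_{\min}\le 0$, two labels share a norm and the hypothesis cannot hold (assuming $M\ge 0$ and $\beta>0$). The statement implicitly presumes strictly increasing norms, i.e.\ $\Delta_{\min}>0$, and I will state that assumption explicitly.
\item If $M=0$ (no edges), each cluster collapses to a single point and the same argument still applies.
\end{itemize}
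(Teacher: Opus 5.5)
Your proposal is correct and matches the paper's proof essentially step for step: you confine each cluster's keys to $[\alpha\|o_l(l)\|_2,\ \alpha\|o_l(l)\|_2+\beta M]$, then separate adjacent clusters via $\alpha(\|o_l(l_{i+1})\|_2-\|o_l(l_i)\|_2)\ge\alpha\Delta_{\min}>\beta M$. Your explicit chaining to non-adjacent pairs and your use of the multiplied form $\alpha\Delta_{\min}>\beta M$, which avoids dividing by zero in degenerate cases, are minor rigor additions that the paper leaves implicit.
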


\begin{proof}
For any data vertex $v_i\in V(G)$ with label $l=L(v_i)$, since $\|o_s(v_i)\|_2\in[0,M]$, its key value satisfies:
\[
key(v_i)
=
\alpha\|o_l(l)\|_2+\beta\|o_s(v_i)\|_2
\in
\big[
\alpha\|o_l(l)\|_2,\;
\alpha\|o_l(l)\|_2+\beta M
\big].
\]
Thus, the key range of the label cluster corresponding to $l$ is bounded by the above interval.

Consider two adjacent labels $l_i$ and $l_{i+1}$ in ascending order of $\|o_l(l)\|_2$.
The maximum key value in cluster $C_{l_i}$ is at most
$\alpha\|o_l(l_i)\|_2+\beta M$,
while the minimum key value in cluster $C_{l_{i+1}}$ is at least
$\alpha\|o_l(l_{i+1})\|_2$.
If $\alpha(\|o_l(l_{i+1})\|_2-\|o_l(l_i)\|_2) > \beta M$,
or equivalently Eq.~\ref{eq:strict_ratio_condition} holds, then the following condition can be satisfied:
\[
\max_{v\in C_{l_i}} key(v)
<
\min_{v\in C_{l_{i+1}}} key(v),
\]
which implies that the two cluster key ranges are strictly non-overlapping.
Since this condition holds for every adjacent pair of labels, all label clusters in the index are strictly separated.
\end{proof}

In practice, the value of $M$ can be obtained by a single linear scan during offline index construction, as $\|o_s(v_i)\|_2$ is already computed for key generation.
Similarly, $\Delta_{\min}$ can be computed after sorting label embeddings by their $L_2$ norms.
Thus, the condition in Eq.~\ref{eq:strict_ratio_condition} can be efficiently verified and enforced in the offline phase.

\begin{example}
Figure~\ref{fig:index_design} illustrates how \textit{iLabel} organizes monotonic vertex embeddings via label-based clustering and one-dimensional key mapping. In this example, vertices have three labels, forming clusters $C_{33}$, $C_{34}$, and $C_{35}$ corresponding to labels $l_{33}$, $l_{34}$, and $l_{35}$, respectively. Each cluster is anchored by its label embedding, serving as the cluster center, and vertices within a cluster share the same label embedding while differing in structural embeddings.

Following Eq.~\ref{eq:index_key}, each vertex is mapped to a one-dimensional key $key(v_i)=\alpha\|o_l(v_i)\|_2+\beta\|o_s(v_i)\|_2$ with $\alpha \gg \beta$. The first term determines the cluster offset, while the second induces a fine-grained ordering within the cluster. Consequently, key ranges of different clusters are strictly separated (\eg~$\max C_{33} < \min C_{34}$), enabling range queries to be performed independently within each cluster. Meanwhile, variations in the structural component $\beta\|o_s(v_i)\|_2$ order vertices within the same cluster, enabling efficient enumeration of candidates that satisfy dominance-based filtering.
\end{example}

\subsection{Index Construction}
\label{subsec:index_construction}
Based on the \textit{iLabel} key mapping in Section~\ref{subsec:index_design}, this subsection describes how the index is constructed and organized, and how auxiliary information is incorporated to improve pruning efficiency during traversal. 
The index structure is solely responsible for organizing vertex keys and supporting efficient range search, while all auxiliary synopses are precomputed offline and used only as lightweight pruning aids during online query processing.

\noindent{\bf Index Construction.}
The construction of the \textit{iLabel} index is performed entirely offline. After training the embedding model $\text{Emb}(\cdot)$ using the cost-driven objective in Section~\ref{subsec:embedding_design}, we generate the monotonic vertex embedding $o(v_i)$ for each data vertex $v_i \in V(G)$ and map it to a one-dimensional key using Eq.~\ref{eq:index_key}. A B$^+$-tree index $\mathcal{I}$ is then built over these keys to support efficient range-based access.

The \textit{iLabel} index follows a standard B$^+$-tree organization. Each leaf node stores data vertices together with their key values and vertex identifiers, while non-leaf nodes maintain routing entries that record the key ranges of their child subtrees. 
Non-leaf nodes store only key ranges, with no aggregated embedding or structural information, enabling traversal solely via key comparisons and keeping the index lightweight and scalable.

\noindent{\bf Auxiliary Synopses for Pruning.}
While key-based range search effectively narrows candidate retrieval, 
dominance-based filtering alone may still yield false positives. 
Thus, we precompute two types of auxiliary embedding synopses for each data vertex;
these synopses are lightweight geometric summaries that 
enable early pruning during index traversal without introducing false dismissals.

\underline{\it Hop-based Embedding Synopsis.}
We construct a \emph{Minimum Bounding Rectangle} (MBR)–based embedding synopsis $v_i.\textit{MBR}_{hop}^{(t)}$ for each vertex $v_i$ and hop level $t$ ($t \ge 1$), to capture higher-order neighborhood semantics. 
For the $t$-hop subgraph $g_t(v_i)$ centered at $v_i$, we compute per-dimension lower and upper bounds over the vertex label embeddings of all neighbor vertices of $v_i$ within $g_t(v_i)$:
\begin{equation}
\begin{aligned}
    v_i.\textit{MBR}_{hop}^{(t)}[2k]   &= \min_{v_j \in V(g_t(v_i))} o_l(v_j)[k], \\
    v_i.\textit{MBR}_{hop}^{(t)}[2k+1] &= \max_{v_j \in V(g_t(v_i))} o_l(v_j)[k],
\end{aligned}
\end{equation}
for $k \in [0, d)$. 
The resulting $2d$-dimensional vector bounds the $t$-hop neighborhood embeddings, enabling pruning based on multi-hop semantic inconsistency without explicit neighborhood traversal.

\underline{\it Degree-based Embedding Synopsis.}
We also construct a degree-based embedding synopsis $v_i.\textit{MBR}_{deg}^{(\delta)}$ to summarize degree-$\delta$ neighbor combinations within the 1-hop neighborhood of a vertex $v_i$. 
Specifically, given the 1-hop neighbors $\mathcal{N}_1(v_i)$ of the vertex $v_i$, we maintain sorted lists of their label embedding values, denoted by $v_i.\textit{vle\_list}_k$, for each embedding dimension $k \in [0,d)$. 
For a specified substructure degree $\delta$, the synopsis records lower and upper bounds on the aggregated structural embeddings of all possible $\delta$-neighbor combinations:
\begin{equation}
\begin{aligned}
    v_i.\textit{MBR}_{deg}^{(\delta)}[2k]   &= \sum_{r=1}^{\delta} v_i.\textit{vle\_list}_k[r], \\
    v_i.\textit{MBR}_{deg}^{(\delta)}[2k+1] &= \sum_{r=\deg(v_i)-\delta+1}^{\deg(v_i)} v_i.\textit{vle\_list}_k[r],
\end{aligned}
\end{equation}
for $k \in [0,d)$. 
This bounds the structural embeddings of all degree-$\delta$ 1-hop substructures, avoiding explicit enumeration of substructures whose number grows exponentially with vertex degree.

\begin{example}
Figure~\ref{fig:synopsis_example} illustrates the construction of the degree-based embedding synopsis for a vertex $v_1$ with three 1-hop neighbors $\{v_2, v_3, v_4\}$, where different neighbor combinations induce different 1-hop substructures.
For degree $\delta=1$, the three possible substructures yield VSEs $(0.78,0.22)$, $(0.56,0.44)$, and $(0.37,0.63)$, resulting in $v_1.\textit{MBR}_{deg}^{(1)} = ([0.37,0.78]; [0.22,0.63])$.
For $\delta=2$, the three neighbor pairs produce VSEs $(1.34,0.66)$, $(1.15,0.85)$, and $(0.93,1.07)$, giving $v_1.\textit{MBR}_{deg}^{(2)} = ([0.93,1.34]; [0.66,1.07])$.
When $\delta=3$, all neighbors are combined, yielding a single VSE $(1.71,1.29)$ and $v_1.\textit{MBR}_{deg}^{(3)} = ([1.71,1.71]; [1.29,1.29])$.

These synopses compactly bound the structural embeddings of all degree-$\delta$ ($\in [1, 3]$) 1-hop substructures of $v_1$. 
By precomputing them offline, \textsc{LIVE} avoids explicit substructure enumeration while enabling effective degree-aware pruning in online 
matching.
\end{example}

\noindent{\bf Complexity Analysis.}
We provide the time and space complexity of constructing the \textit{iLabel} index together with its auxiliary synopses as follows.

\underline{\it Time.}
\nop{
It needs $O(d \cdot |E(G)| + t \cdot (|V(G)| + |E(G)|)
+ \sum_{v_i \in V(G)} d \cdot deg(v_i)\log deg(v_i)
+ |V(G)|\log |V(G)|)$ time to build the \textit{iLabel} index and its auxiliary information, which is near-linear in the graph size for sparse graphs and small hop number~$t$.
}
Given a data graph $G$, for each vertex $v_i \in V(G)$, computing its monotonic embedding
$o(v_i)=\alpha o_l(v_i)+\beta o_s(v_i)$
requires aggregating the label embeddings of its 1-hop neighbors.
This incurs $O(d \cdot deg(v_i))$ time per vertex, where $d$ is the embedding dimension.
Since $\sum_{v_i \in V(G)} deg(v_i) = 2|E(G)|$, the total cost of embedding generation is
$O(d \cdot |E(G)|)$.

For hop-based synopses, we compute $t$-hop neighborhoods using BFS expansion.
For each hop level, all vertices and edges are visited at most once.
Thus, constructing hop-based MBRs for all vertices takes
$O(t \cdot (|V(G)| + |E(G)|))$ time, where $t$ is a small constant.

To computer degree-based synopses, for each vertex $v_i$, we sort the label embeddings of its 1-hop neighbors once per embedding dimension.
This yields a per-vertex cost of $O(d \cdot deg(v_i)\log deg(v_i))$.
Summing over all vertices, the total time complexity is
$O(\sum_{v_i \in V(G)} d \cdot $\\$ deg(v_i)\log deg(v_i))$.

After computing all key values, data vertices are sorted by $key(v_i)$, which takes
$O(|V(G)|\log |V(G)|)$ time.
Using bulk loading, B$^+$-tree can be constructed in linear time with respect to the number of sorted records, i.e., $O(|V(G)|)$.

Combining all steps, the overall time complexity of \textit{iLabel} index construction is: $O(
d \cdot |E(G)|
+ t \cdot (|V(G)| + |E(G)|)
+ \sum_{v_i \in V(G)} d \cdot $\\$deg(v_i)\log deg(v_i)
+ |V(G)|\log |V(G)|)$,
which is near-linear in the graph size and small $t$.

\underline{\it Space.}
\nop{
It needs $O(|V(G)| \cdot d \cdot t + d \cdot |E(G)|)$ space to
build the \textit{iLabel} index and the synopses,
which scales linearly with graph size and embedding dimension, making it suitable for large graphs.
}
Given a data graph $G$ and embedding dimension $d$, each data vertex $v_i$ stores its $d$-dimensional monotonic embedding $o(v_i)$ and two types of auxiliary synopses. Storing vertex embeddings for all vertices requires $O(|V(G)| \cdot d)$ space.

For hop-based embedding synopses, each vertex $v_i$ maintains $t$ hop-based MBRs $v_i.MBR_{hop}^{(t)}$, each consisting of two $d$-dimensional boundary vectors. The total space for hop-based synopses is therefore $O(|V(G)| \cdot t \cdot d)$.

For degree-based embedding synopses, each vertex $v_i$ stores one degree-based MBR $v_i.MBR_{deg}^{(\delta)}$ for each possible substructure degree
$\delta \le deg(v_i)$.
Each MBR consists of two $d$-dimensional vectors, resulting in
$O(d \cdot deg(v_i))$ space per vertex.
Summed over all vertices, the total space requirement is
$O(d \cdot |E(G)|)$.

The B$^+$-tree stores only one-dimensional keys and pointers.
Both leaf and non-leaf nodes require $O(|V(G)|)$ space in total, which is negligible compared to embedding and synopsis storage.

Overall, the space complexity of the \textit{iLabel} index is $O(|V(G)| \cdot d \cdot t$\\$ + d \cdot |E(G)|)$, which scales linearly with the graph size and embedding dimension, and is suitable for large graphs.

\section{Subgraph Matching with Monotonic Vertex Embedding}
\label{sec:algorithm}
This section presents the online subgraph matching algorithm built on the learned monotonic vertex embeddings and the \textit{iLabel} index.
Given a query graph, the online phase efficiently retrieves candidate vertices for each query vertex without false dismissals and assembles valid matches through refinement.

We introduce low-cost pruning strategies for index traversal and candidate retrieval in Section~\ref{subsec:pruning}, and then present the integrated online subgraph matching algorithm in Section~\ref{subsec:online-algorithm}.

\subsection{Pruning Strategies}
\label{subsec:pruning}
During online query processing, candidates for each query vertex are retrieved by traversing the \textit{iLabel} index and scanning data vertices within a query-specific key range. To efficiently eliminate invalid candidates without sacrificing correctness, we apply the following pruning strategies to progressively refine the candidate set. These strategies are derived from necessary conditions under monotonic embeddings and therefore introduce no false dismissals.

\noindent{\bf Key Lower-Bound Pruning (Strategy~1).}
For a query vertex $q_i$, we first derive a query-specific lower bound $key(q_i)$ from the B$^+$-tree in the \textit{iLabel} index. 
Due to the consistency between embedding dominance and key ordering, any data vertex with a key value smaller than $key(q_i)$ cannot be dominated by $o(q_i)$ and therefore cannot be a valid candidate. 
Such vertices, along with their corresponding index entries, can be safely pruned.

\begin{lemma}
\textbf{(Key Lower-Bound Pruning).}
Given a query vertex $q_i \in q$ with $key(q_i)$, for any data vertex $v_i \in G$ with $key(v_i)$ or index entry $N_i \in \mathcal{I}$ with key range $[N_i.min, N_i.max)$, if $key(v_i) < key(q_i)$ or $N_i.max < key(q_i)$ holds, then $v_i$ or the entire subtree rooted at $N_i$ can be safely pruned.
\label{lemma:key_lower}
\end{lemma}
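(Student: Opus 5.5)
The plan is to prove the contrapositive via Lemma~\ref{lemma:key_dominance_preservation}, together with the no-false-dismissal guarantee derived from Lemma~\ref{lemma:vertex_monotonicity}. Throughout, $key(q_i)$ is computed from $o(q_i)$ by the same mapping as in Eq.~\ref{eq:index_key}, namely $\alpha\|o_l(q_i)\|_2+\beta\|o_s(q_i)\|_2$.

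\emph{Vertex case.} Suppose $v_i$ is a valid match for $q_i$ in some isomorphic subgraph $g$. Then $g_1(q_i)$ is isomorphic to a 1-hop substructure $s_1(v_i)\subseteq g_1(v_i)$. The label embeddings of the matched neighbors coincide, because they have the same labels. Hence $o(q_i)=o(s_1(v_i))$, and Lemma~\ref{lemma:vertex_monotonicity} gives $o(q_i)\preceq o(v_i)$.

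Lemma~\ref{lemma:key_dominance_preservation} only needs non-negativity of the VLE and VSE parts, which holds equally for query vertices. However, it is stated componentwise: it requires $o_l(q_i)\preceq o_l(v_i)$ and $o_s(q_i)\preceq o_s(v_i)$. I would therefore apply the monotonicity argument to each component separately. The label embeddings are equal since $L(q_i)=L(v_i)$. The structure embeddings satisfy $o_s(q_i)\preceq o_s(v_i)$ because the neighbor set of $s_1(v_i)$ is a subset of $\mathcal{N}_1(v_i)$ and every summand is non-negative. Monotonicity of the $L_2$ norm on the non-negative orthant, combined with $\alpha,\beta\ge 0$, then yields $key(q_i)\le key(v_i)$. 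Contrapositively, $key(v_i)<key(q_i)$ implies $v_i$ cannot match $q_i$, so pruning it causes no false dismissal.

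\emph{Index-entry case.} By the B$^+$-tree invariant, every data vertex $v$ stored in the subtree rooted at $N_i$ has $key(v)\in[N_i.min,N_i.max)$. If $N_i.max<key(q_i)$, then $key(v)<N_i.max<key(q_i)$ for every such $v$. The vertex case applies to each of them, so the whole subtree can be discarded.

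The main obstacle is the step from dominance of the full embeddings $o(\cdot)$ to inequality of the keys. The key is built from the separate norms of the two components, not from $\|o(\cdot)\|$. Lemma~\ref{lemma:key_dominance_preservation}'s proof implicitly assumes componentwise dominance, and this does not follow from $o(q_i)\preceq o(v_i)$ alone. I will avoid the gap by deriving componentwise dominance directly from the matching (equal labels, neighbor-subset sums) instead of from $o(q_i)\preceq o(v_i)$.
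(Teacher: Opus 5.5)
Your proof is correct. Its skeleton matches the paper's: a contrapositive based on the consistency between dominance and key order, followed by the B$^+$-tree range invariant for the subtree case.

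The difference lies in how you obtain the key inequality. The paper asserts $key(v_i)<key(q_i)\Rightarrow o(q_i)\npreceq o(v_i)$ by appealing to Lemma~\ref{lemma:key_dominance_preservation}, then invokes the no-false-dismissal property. You instead derive two facts directly from the existence of a match. First, $o_l(q_i)=o_l(v_i)$, because the labels are equal and the VLE depends on the label alone. Second, $o_s(q_i)\preceq o_s(v_i)$, because the match sends the neighbors of $q_i$ injectively into $\mathcal{N}_1(v_i)$ with equal labels. Together these give $key(q_i)\le key(v_i)$.

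Your suspicion about Lemma~\ref{lemma:key_dominance_preservation} is justified. Dominance of $\alpha o_l+\beta o_s$ does not yield dominance of each component, and the paper's intermediate claim can actually fail. Take $\alpha=\beta=1$ and label embeddings $(0.9,0.1)$, $(0.5,0.5)$, $(0.1,0.9)$ for labels $A,B,C$. A query vertex labeled $A$ with a single $C$-neighbor and a data vertex labeled $B$ with a single $B$-neighbor both have embedding $(1,1)$, so $o(q_i)\preceq o(v_i)$. Yet their keys are $2\sqrt{0.82}\approx 1.81$ and $2\sqrt{0.5}\approx 1.41$, so the data vertex has the smaller key.

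The paper's route therefore passes through a stronger statement that is false in general: that every vertex with a smaller key fails the dominance test. Your route proves exactly what safe pruning needs, namely that no genuine match has a smaller key. It does so by exploiting label equality, which holds precisely for true matches. What you give up is any justification for treating key order as a proxy for dominance between arbitrary vertices, but the lemma never requires that.
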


\begin{proof}
By the construction of the key mapping (Eq.~\ref{eq:index_key}), key ordering is consistent with the dominance ordering of vertex embeddings.
If $key(v_i) < key(q_i)$, then $o(q_i) \npreceq o(v_i)$, implying that $v_i$ cannot be dominated by $o(q_i)$ and thus cannot be a valid candidate.

Similarly, if $N_i.\max < key(q_i)$, then all data vertices in the subtree rooted at $N_i$ have keys smaller than $key(q_i)$ and therefore cannot be dominated by $o(q_i)$. 
Pruning under these conditions preserves all valid candidates and introduces no false dismissals.
\end{proof}

\noindent{\bf Key Upper-Bound Pruning (Strategy~2).}
After determining the lower bound of the query range, we further restrict candidate retrieval to a label-consistent key interval.
Since key ranges of different label clusters are strictly disjoint, any data vertex with a key no smaller than the key of the next label (\ie~$key(L(q_i)+1)$) belongs to a different label cluster and cannot match the query vertex. 
Such vertices and index entries can therefore be safely pruned.

\begin{lemma}
\textbf{(Key Upper-Bound Pruning).}
Given a query vertex $q_i \in q$ with label $L(q_i)$, for any data vertex $v_i \in G$ with key $key(v_i)$ or index entry $N_i \in \mathcal{I}$ with key range $[N_i.min, N_i.max)$, if $key(v_i) \ge key(L(q_i)+1)$ or $N_i.min \ge key(L(q_i)+1)$ holds, then $v_i$ or the entire subtree rooted at $N_i$ can be safely pruned.
\label{lemma:key_upper}
\end{lemma}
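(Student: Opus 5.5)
The argument reduces upper-bound pruning to the strict separation of label clusters in key space, Lemma~\ref{lemma:strict_cluster_separation}. Before starting, I would fix the meaning of $key(L(q_i)+1)$. Order labels by increasing $\|o_l(l)\|_2$, as in Lemma~\ref{lemma:strict_cluster_separation}, and let $L(q_i)+1$ denote the label that immediately follows $L(q_i)$ in this order. Set $key(L(q_i)+1)=\alpha\|o_l(L(q_i)+1)\|_2$, which is the smallest key any vertex of that cluster can attain. If $L(q_i)$ is the last label, this bound is $+\infty$ and nothing is pruned.

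The first step is to state the necessary condition for a candidate. Any valid match $v_i$ of $q_i$ satisfies label preservation, so $L(v_i)=L(q_i)$ and $v_i\in C_{L(q_i)}$. From the interval bound in the proof of Lemma~\ref{lemma:strict_cluster_separation},
\[
key(v_i)\in\big[\alpha\|o_l(L(q_i))\|_2,\ \alpha\|o_l(L(q_i))\|_2+\beta M\big].
\]
The second step uses condition Eq.~\ref{eq:strict_ratio_condition}. Taking the adjacent pair $L(q_i)$ and $L(q_i)+1$ gives $\alpha\|o_l(L(q_i))\|_2+\beta M<\alpha\|o_l(L(q_i)+1)\|_2=key(L(q_i)+1)$. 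Therefore every genuine candidate satisfies $key(v_i)<key(L(q_i)+1)$.

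Taking the contrapositive, $key(v_i)\ge key(L(q_i)+1)$ implies $L(v_i)\neq L(q_i)$. In fact it forces $v_i$ into a later label cluster, so $v_i$ cannot match $q_i$, and discarding it causes no false dismissal. For an index entry $N_i$ with $N_i.min\ge key(L(q_i)+1)$, every key stored in the subtree rooted at $N_i$ is at least $N_i.min$, by the routing invariant of the B$^+$-tree. The single-vertex argument then applies to each of those vertices, so the whole subtree can be pruned.

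The step that needs the most care is the first one: the definition of $key(L(q_i)+1)$, and the fact that the argument depends on Eq.~\ref{eq:strict_ratio_condition} being enforced offline. Without that condition, clusters could overlap and the claim would fail. The paper's remark after Lemma~\ref{lemma:strict_cluster_separation} says the condition is checked and enforced during index construction, so I would invoke it as a standing hypothesis. I would also note that ties in label norms ($\Delta_{\min}>0$) are excluded by that same condition.
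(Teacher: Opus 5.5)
Your proof is correct and follows the paper's route: label preservation plus the strict key separation of label clusters from Lemma~\ref{lemma:strict_cluster_separation}, extended to whole subtrees through the B$^+$-tree key-range invariant. You are more explicit than the paper in two places: you pin down $key(L(q_i)+1)$ as $\alpha\|o_l(\cdot)\|_2$ of the next label in norm order, and you treat condition Eq.~\ref{eq:strict_ratio_condition} as a standing hypothesis, whereas the paper's proof relies on it only implicitly.
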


\begin{proof}
By the \textit{iLabel} key construction, all vertices with label $L(q_i)$ are indexed within the interval $[key(L(q_i)),\, key(L(q_i)+1))$, and key intervals of different labels are disjoint. Hence, any data vertex with $key(v_i) \ge key(L(q_i)+1)$ belongs to a different label cluster and cannot match $q_i$.

Similarly, if $N_i.\min \ge key(L(q_i)+1)$, then all vertices in the sub-tree rooted at $N_i$ fall outside the label-consistent key range of $q_i$. Pruning under these conditions preserves all valid candidates and introduces no false dismissals.
\end{proof}

\noindent{\bf Embedding Dominance Pruning (Strategy~3).}
Within the query-specific key range, we scan data vertices and apply a dominance check against the query embedding. 
Any vertex whose embedding are not dominated by the query embedding cannot satisfy 1-hop containment and is safely pruned.

\begin{lemma}
\textbf{(Embedding Dominance Pruning).}
Given a query vertex $q_i$ and a data vertex $v_i$, if their monotonic vertex embeddings satisfy $o(q_i) \npreceq o(v_i)$ (\ie~$o(v_i)$ is not dominated by $o(q_i)$), then $v_i$ cannot be a valid candidate for $q_i$ and can be safely pruned.
\label{lemma:vertex_dominance}
\end{lemma}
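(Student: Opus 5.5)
We argue by contraposition, using the No-False-Dismissal Guarantee derived from Lemma~\ref{lemma:vertex_monotonicity}. Suppose $v_i$ is a valid candidate for $q_i$, \ie~there exists a subgraph isomorphism $f$ from $q$ into $G$ with $f(q_i)=v_i$. We will show that then $o(q_i)\preceq o(v_i)$. Hence, if $o(q_i)\npreceq o(v_i)$, no match maps $q_i$ to $v_i$, and $v_i$ can be pruned without any false dismissal.

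\emph{Step 1 (local containment).} We restrict $f$ to the 1-hop neighborhood of $q_i$. By label preservation, $L(q_i)=L(v_i)$. By edge preservation, every neighbor $q_j\in\mathcal{N}_1(q_i)$ is sent to $f(q_j)\in\mathcal{N}_1(v_i)$ with $L(q_j)=L(f(q_j))$. Since $f$ is injective, the image $S=f(\mathcal{N}_1(q_i))$ is a subset of $\mathcal{N}_1(v_i)$ of the same size. The substructure $s_1(v_i)$ induced by $v_i$ and $S$ therefore satisfies $g_1(q_i)\equiv s_1(v_i)\subseteq g_1(v_i)$, exactly as noted in the 1-hop Substructure paragraph.

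\emph{Step 2 (embedding equality).} We check that $o(q_i)=o(s_1(v_i))$. Because $\mathrm{Emb}(\cdot)$ is applied to labels, $o_l(q_i)=o_l(v_i)$. The structure embedding is
\[
o_s(q_i)=\sum_{q_j\in\mathcal{N}_1(q_i)} o_l(q_j)=\sum_{v_j\in S} o_l(v_j)=o_s(s_1(v_i)),
\]
where the middle equality is reindexing through the label-preserving bijection $f$ restricted to $\mathcal{N}_1(q_i)$. With the same $\alpha,\beta$, it follows that $o(q_i)=o(s_1(v_i))$. If the $L_1$ normalization optimization is used, it is applied per label identically to query and data vertices, so the equality is unaffected.

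\emph{Step 3.} Lemma~\ref{lemma:vertex_monotonicity} gives $o(s_1(v_i))\preceq o(g_1(v_i))=o(v_i)$, so $o(q_i)\preceq o(v_i)$. This contradicts the hypothesis $o(q_i)\npreceq o(v_i)$.

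The main point needing care is Step~2: the query and data embeddings must be computed by the same trained label map. Query-side embeddings depend only on labels and neighbor multisets, so they are independent of vertex identities, and the reindexing over $f$ is legitimate. The rest follows directly from the earlier lemma.
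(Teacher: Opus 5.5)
Your proposal is correct and takes the same route as the paper's proof: a valid match forces $g_1(q_i)$ to be realized as a 1-hop substructure of $g_1(v_i)$, monotonicity (Lemma~\ref{lemma:vertex_monotonicity}) then yields $o(q_i)\preceq o(v_i)$, and contraposition gives the pruning rule. Your Step~2 checks $o(q_i)=o(s_1(v_i))$ by reindexing the neighbor label embeddings through the label-preserving injection $f$, which makes explicit what the paper leaves implicit in its one-line appeal to the monotonic embedding design.
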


\begin{proof}
A data vertex $v_i$ matches $q_i$ only if the 1-hop subgraph $g_1(q_i)$ is a substructure of $g_1(v_i)$.
Under the monotonic vertex embedding design (Section~\ref{subsec:embedding_design}), this subgraph containment relationship implies $o(q_i) \preceq o(v_i)$ in the embedding space.
If $o(q_i) \npreceq o(v_i)$ holds, then the necessary dominance condition is violated, and $v_i$ cannot satisfy the required subgraph relationship.
Therefore, pruning $v_i$ does not introduce false dismissals.
\end{proof}

\noindent{\bf Hop-based Synopsis Pruning (Strategy~4).}
Vertex-level dominance alone does not capture higher-order structural feasibility. To further prune false positives at low cost, we exploit hop-based embedding synopses that bound multi-hop neighborhoods in the embedding space. During candidate evaluation, we compare query and data synopses in a coarse-to-fine manner, starting from larger hop distances and terminating early upon detecting a violation.

\begin{lemma}
\textbf{(Hop-based Synopsis Pruning).}
Given a query vertex $q_i$ and a data vertex $v_i$, let $q_i.MBR_{hop}^{(t)}$ and $v_i.MBR_{hop}^{(t)}$ denote their hop-based embedding synopses at hop distance $t$.
If there exists $t \in \{2, \ldots, k\}$ such that $q_i.MBR_{hop}^{(t)} \nsubseteq v_i.MBR_{hop}^{(t)}$, then $v_i$ cannot be a valid candidate for $q_i$ and can be safely pruned.
\label{lemma:hop_synopsis}
\end{lemma}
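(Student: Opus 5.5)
We prove the contrapositive: if $v_i$ is a valid candidate for $q_i$, then for every hop distance $t \in \{2,\ldots,k\}$ we have $q_i.MBR_{hop}^{(t)} \subseteq v_i.MBR_{hop}^{(t)}$. Here containment of MBRs means interval containment in every dimension $k'$:
\[
v_i.MBR_{hop}^{(t)}[2k'] \le q_i.MBR_{hop}^{(t)}[2k'] \quad\text{and}\quad q_i.MBR_{hop}^{(t)}[2k'+1] \le v_i.MBR_{hop}^{(t)}[2k'+1].
\]
Throughout, the query synopsis $q_i.MBR_{hop}^{(t)}$ is taken to be defined exactly as in Section~\ref{subsec:index_construction}, with the $t$-hop subgraph $g_t(q_i)$ computed in $q$.

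\textbf{Step 1: a valid match maps $t$-hop vertex sets into $t$-hop vertex sets, preserving labels.} Suppose $q_i$ is matched to $v_i$ in some match, so there is an injective, label- and edge-preserving map $f: V(q)\to V(G)$ with $f(q_i)=v_i$. Take any $u \in V(g_t(q_i))$. By definition there is a path $q_i=u_0,u_1,\ldots,u_s=u$ in $q$ with $s\le t$. Edge preservation makes $f(u_0),\ldots,f(u_s)$ a walk of length $s$ in $G$ from $v_i$ to $f(u)$. Hence the distance from $v_i$ to $f(u)$ in $G$ is at most $t$, so $f(u)\in V(g_t(v_i))$. Label preservation gives $L(f(u))=L(u)$.

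\textbf{Step 2: equal labels give equal label embeddings.} Since $o_l(\cdot)$ depends only on the vertex label (Eq.~\ref{eq:label_embedding}), and the same trained $\mathrm{Emb}(\cdot)$ is used for query and data vertices, we get $o_l(u)=o_l(f(u))$. Consequently the multiset
\[
\{o_l(u)[k'] : u\in V(g_t(q_i))\}
\]
is contained in
\[
\{o_l(w)[k'] : w\in V(g_t(v_i))\}.
\]
The minimum over a superset is no larger and the maximum over a superset is no smaller. This yields both interval inequalities in every dimension, so $q_i.MBR_{hop}^{(t)}\subseteq v_i.MBR_{hop}^{(t)}$ for every $t$.

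\textbf{Step 3: conclude.} By Step 2, a violation of containment at any $t\in\{2,\ldots,k\}$ contradicts the existence of a match of $q_i$ to $v_i$. Therefore $v_i$ can be pruned without false dismissals. The coarse-to-fine order of checking the hop levels does not affect correctness, since each level gives an independent necessary condition.

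\textbf{Main obstacle.} The delicate point is Step 1: it must be made precise that $g_t$ refers to the set of vertices within distance $t$, not to an induced structure requiring extra edges, and that the query-side synopsis uses the same notion. Subgraph isomorphism as defined in the paper only preserves edges, not non-edges. That suffices here because distances can only shrink under an edge-preserving map, which is exactly the direction needed. We would also note that the argument only needs label preservation and edge preservation, not injectivity of $f$.
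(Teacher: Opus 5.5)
Your proof is correct and takes the same route as the paper's: a match carries the $t$-hop neighbourhood of $q_i$ into that of $v_i$, so the per-dimension min/max bounds nest. You make explicit what the paper leaves implicit, namely that walks map to walks under edge preservation, and that the containment rests on $o_l$ being determined by the label alone rather than on monotonicity. One small caveat on your closing remark: injectivity is unnecessary only because the displayed formula ranges over all of $V(g_t(v_i))$, centre included; if the synopsis ranged only over the neighbours of $v_i$, as the surrounding prose suggests, injectivity would be needed to rule out $f(u)=v_i$.
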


\begin{proof}
The synopsis $v_i.\textit{MBR}_{hop}^{(t)}$ bounds the embeddings of all vertices in the $t$-hop neighborhood of $v_i$.
If $v_i$ were a valid candidate for $q_i$, the $t$-hop neighborhood of $q_i$ would have to be realizable within that of $v_i$, which under monotonic embeddings requires $q_i.\textit{MBR}_{hop}^{(t)} \subseteq v_i.\textit{MBR}_{hop}^{(t)}$.
Violation of this condition for any $t$ implies infeasibility.
Thus, $v_i$ can be safely pruned.
\end{proof}

\noindent{\bf Degree-based Synopsis Pruning (Strategy~5).}
For high-degree data vertices, explicitly enumerating 1-hop substructures is prohibitive. Instead, we leverage degree-based embedding synopses (Section~\ref{subsec:index_construction}) that bound the structural embeddings of all 1-hop substructures of a given degree.
A necessary condition for matching a query vertex to a data vertex is that the query’s structural embedding lies within the corresponding degree-based bounding region; otherwise, the data vertex can be safely pruned.

\begin{lemma}
\textbf{(Degree-based Synopsis Pruning).}
Given a query vertex $q_i$ and a data vertex $v_i$, let $q_i.vse$ denote the vertex structure embedding of $q_i$.
If $q_i.vse \notin v_i.MBR_{deg}^{(deg(q_i))}$, then $v_i$ cannot be a valid candidate for matching $q_i$ and can be safely pruned.
\label{lemma:degree_synopsis}
\end{lemma}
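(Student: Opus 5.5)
The plan is a contrapositive argument. Suppose $v_i$ is a valid candidate for $q_i$. We show that $q_i.vse \in v_i.MBR_{deg}^{(deg(q_i))}$; pruning under the negated condition then introduces no false dismissals.

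The first step takes a valid match and extracts an injective neighbor correspondence. Since $q_i$ is matched to $v_i$, the discussion of 1-hop substructures (Section~\ref{subsec:embedding_preliminary}) gives a substructure $s_1(v_i) \subseteq g_1(v_i)$ with $g_1(q_i) \equiv s_1(v_i)$. The bijection restricts to an injective map $\pi: \mathcal{N}_1(q_i) \to \mathcal{N}_1(v_i)$ that preserves labels, and its image $S=\pi(\mathcal{N}_1(q_i))$ has exactly $\delta = deg(q_i)$ elements; in particular $\delta \le deg(v_i)$. Because the VLE depends only on the label, $o_l(q_j) = o_l(\pi(q_j))$ for every neighbor $q_j$. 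Hence
\[
q_i.vse = \sum_{q_j \in \mathcal{N}_1(q_i)} o_l(q_j) = \sum_{v_j \in S} o_l(v_j),
\]
so the query's structure embedding equals the VSE of a degree-$\delta$ 1-hop substructure of $v_i$.

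The second step bounds this sum coordinatewise. Fix a dimension $k$. The values $o_l(v_j)[k]$ for $v_j \in S$ form a $\delta$-element sub-multiset of the sorted list $v_i.\textit{vle\_list}_k$. The sum of any $\delta$ entries of a list is at least the sum of its $\delta$ smallest entries and at most the sum of its $\delta$ largest entries. This follows from a pairing argument: the $r$-th smallest chosen entry is at least the $r$-th smallest overall and at most the $r$-th largest overall. Therefore
\[
v_i.MBR_{deg}^{(\delta)}[2k] \le q_i.vse[k] \le v_i.MBR_{deg}^{(\delta)}[2k+1]
\]
for every $k$, which means $q_i.vse \in v_i.MBR_{deg}^{(\delta)}$, contradicting the hypothesis.

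The main subtlety is the first step. It needs the substructure to use exactly $deg(q_i)$ distinct data neighbors, which holds because $\pi$ is injective and neighbors map to neighbors. It also needs the degree-based synopsis to be indexed by $\delta = deg(q_i)$ and to exist, i.e.\ $deg(q_i) \le deg(v_i)$. If $deg(q_i) > deg(v_i)$, the synopsis is undefined; we treat membership as failing, and $v_i$ is pruned correctly since no injective $\pi$ can exist. Everything else is the elementary order-statistics inequality, applied independently in each dimension. Unlike the earlier dominance lemmas, this step uses no monotonicity or non-negativity of $o_l$.
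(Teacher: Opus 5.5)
Your argument is correct and follows the paper's proof: a valid match makes $g_1(q_i)$ correspond to a degree-$deg(q_i)$ 1-hop substructure of $v_i$, and $v_i.MBR_{deg}^{(deg(q_i))}$ bounds the VSEs of all such substructures by construction. You make explicit two facts the paper leaves implicit, namely that the label-preserving injection gives $q_i.vse=\sum_{v_j\in S} o_l(v_j)$, and that the sorted-list sums bound every $\delta$-subset sum coordinatewise. One small slip is in your pairing justification for the upper bound: pair the $r$-th \emph{largest} chosen entry with the $r$-th largest overall. The $r$-th smallest chosen entry can exceed the $r$-th largest overall (for instance, choosing the minimum and maximum of a three-element list with $\delta=2$), although the sum inequality you need still holds.
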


\begin{proof}
If $v_i$ is a valid candidate for $q_i$, then the 1-hop subgraph $g_1(q_i)$ must correspond to one of the 1-hop substructures of $v_i$ with degree $deg(q_i)$.
By construction, $v_i.MBR_{deg}^{(\delta)}$ bounds the structural embeddings of all such substructures with degree $\delta$.

Hence, $q_i.vse$ must lie within $v_i.MBR_{deg}^{(deg(q_i))}$.
If this condition is violated, no feasible 1-hop substructure of $v_i$ can match $g_1(q_i)$.
Pruning $v_i$ under this condition introduces no false dismissals.
\end{proof}

\begin{algorithm}[t]
\caption{{\bf Exact Subgraph Matching with Learnable Monotonic Vertex Embedding}}
\label{alg:online_query}
{\footnotesize
\KwIn{
    a query graph $q$, a data graph $G$, 
    a trained embedding model $\text{Emb}(\cdot)$, and an \textit{iLabel} 
    index $\mathcal{I}$
}
\KwOut{
    a set $\mathcal{S}$ of matching subgraphs in $G$
}

\tcp{Candidate retrieval for each query vertex}
\For{each query vertex $q_i \in V(q)$}{
    generate query vertex embedding $o(q_i)$ using $\text{Emb}(\cdot)$\;
    compute hop-based synopsis $q_i.MBR_{hop}$ of $q_i$\;

    \tcp{Locate the first relevant leaf node}
    $N \leftarrow \mathcal{I}.root$\;
    \While{$N$ is not a leaf node}{
        \For{each entry $N_j \in N$}{
            \If{$[N_j.min, N_j.max) \cap [key(q_i), key(L(q_i)+1)) \neq \emptyset$}{\tcp{Lemmas \ref{lemma:key_lower} and \ref{lemma:key_upper}}
                $N \leftarrow N_j$\;
                \textbf{break}\;
            }
        }
    }

    \tcp{Sequential leaf scan within the query range}
    \While{$N$ is not null}{
        \For{each data vertex $v_i \in N$}{
            \If{$key(v_i) \ge key(L(q_i)+1)$}{\tcp{Lemma \ref{lemma:key_upper}}
                \textbf{break}\;
            }
            \If{$o(q_i) \preceq o(v_i)$}{\tcp{Lemma \ref{lemma:vertex_dominance}}
                \If{$q_i.MBR_{hop}^{(t)} \subseteq v_i.MBR_{hop}^{(t)}$ for all $t \in [2,k]$}{\tcp{Lemma \ref{lemma:hop_synopsis}}
                    \If{$q_i.vse \in v_i.MBR_{deg}^{(deg(q_i))}$}{\tcp{Lemma \ref{lemma:degree_synopsis}}
                        $q_i.cand\_set.add(v_i)$\;
                    }
                }
            }
        }
        $N \leftarrow N.next$\;
    }
}

\tcp{Matching order generation}
generate an ordered list $Q$ of query vertices $q_i \in q$ based on candidate set sizes $|q_i.cand\_set|$\;

\tcp{Candidate refinement}
invoke a backtracking search procedure to enumerate valid matches in $G$ and add them to $\mathcal{S}$\;

\Return{$\mathcal{S}$}\;
}
\end{algorithm}

\subsection{Online Subgraph Matching Algorithm}
\label{subsec:online-algorithm}
Algorithm~\ref{alg:online_query} presents the complete online subgraph matching procedure based on monotonic vertex embeddings and the \textit{iLabel} index $\mathcal{I}$.
It consists of three stages: candidate retrieval with index traversal and pruning (lines~1--18), matching order determination (line~19), and candidate refinement (line~20).

\noindent{\bf Candidate Retrieval (Stage~1).}
For each query vertex $q_i \in V(q)$, the algorithm first computes its monotonic embedding $o(q_i)$ using the trained embedding model $\text{Emb}(\cdot)$ and computes the hop-based synopsis $q_i.MBR_{hop}$ with the embeddings of its hop-based neighbors (lines~1--3).
To retrieve candidate data vertices, a key-range search is performed on the \textit{iLabel} index $\mathcal{I}$.
In particular, starting from the root, the algorithm descends to the first leaf node whose key range intersects the query interval $[key(q_i),\, key(L(q_i)+1))$, leveraging key lower- and upper-bound pruning to skip irrelevant subtrees (lines~4--9; Lemmas~\ref{lemma:key_lower} and~\ref{lemma:key_upper}).

When reaching the leaf level, it performs a sequential scan over leaf nodes within the query range using the linked-leaf structure and terminates once the upper bound is exceeded (lines~10--13).
For each encountered data vertex, a cascade of pruning strategies is applied, including embedding dominance pruning (line~14; Lemma~\ref{lemma:vertex_dominance}), hop-based synopsis pruning (line~15; Lemma~\ref{lemma:hop_synopsis}), and degree-based synopsis pruning (lines~16--17; Lemma~\ref{lemma:degree_synopsis}).
Vertices that pass all pruning checks are added to the candidate set $q_i.\textit{cand\_set}$ (line~17).

\noindent{\bf Matching Order Generation (Stage~2).}
After constructing candidate sets for all query vertices, the algorithm determines a matching order $Q$ to guide the refinement phase (line~19).
To minimize intermediate join results, it selects the query vertex from $q$ with the smallest candidate set as the starting point and then iteratively appends adjacent query vertices with the smallest candidate sets until all vertices are ordered.

\noindent{\bf Refinement (Stage~3).}
Then, a backtracking-based refinement procedure~\cite{kankanamge2017graphflow,shang2008taming,ye2025continuous} is applied to enumerate valid matches using the candidate sets and the matching order (line~20).
During refinement, partial mappings are incrementally extended while enforcing injective mapping constraints and edge-consistency conditions between the query graph $q$ and the data graph $G$; and all complete mappings that satisfy the query are added into the result set $\mathcal{S}$.
Finally, the algorithm returns $\mathcal{S}$ as the final output (line~21).

\noindent{\bf Complexity Analysis.}
\nop{
The overall time complexity of Algorithm~\ref{alg:online_query} is $O(\sum_{q_i \in V(q)}(deg(q_i)\cdot d + |E(q)| + k \cdot |V(q)| \cdot d + h + R_i \cdot k \cdot d) + |V(q)|^2 + \prod_{q_i \in V(q)} |q_i.cand\_set|)$, where $h$ is the height of the \textit{iLabel} index. 
Please refer to Appendix \ref{subsec:app_query_time} in~\cite{appendix} for more details.
}
For each query vertex $q_i \in V(q)$, generating its monotonic vertex embedding requires aggregating embeddings of its 1-hop neighbors, which costs $O(deg(q_i)\cdot d)$ time (line~2), where $d$ is the embedding dimension.
Computing the hop-based synopsis $q_i.MBR_{hop}$ involves performing BFS expansions on the query graph up to $k$ hops while maintaining per-dimension minimum and maximum values.
This costs $O(|E(q)| + k \cdot |V(q)| \cdot d)$ per query vertex (line~3).
Since query graphs are typically small, this cost is negligible compared to index traversal and refinement.

For each query vertex, descending the B$^+$-tree from the root to the first relevant leaf node incurs $O(h)$ time, where $h$ is the height of the index (lines~4--9).
Let $R_i$ denote the number of data vertices whose keys fall within the query-specific range $[key(q_i), key(L(q_i)+1))$.
During the sequential leaf scan, each such vertex is examined once.
For each data vertex, embedding dominance pruning costs $O(d)$ (line~14), hop-based synopsis pruning costs $O((k-1)\cdot d)$ (line~15), and degree-based synopsis pruning costs $O(d)$ (lines~16--17).
Thus, the total pruning cost per query vertex is $O(R_i \cdot k \cdot d)$ (lines~10--18).

After candidate sets are constructed, the greedy matching order generation procedure selects query vertices based on candidate set sizes.
In the worst case, this procedure takes $O(|V(q)|^2)$ time (line~19).

Finally, the refinement phase performs a backtracking-based exact subgraph matching guided by the matching order and candidate sets (line~20).
In the worst case, it enumerates all combinations of candidates, yielding 
$O(\prod_{q_i \in V(q)} |q_i.cand\_set|)$ time.

Therefore, the overall time complexity of Algorithm~\ref{alg:online_query} is $O($\\$\sum_{q_i \in V(q)}(deg(q_i)\cdot d + |E(q)| + k \cdot |V(q)| \cdot d + h + R_i \cdot k \cdot d) + |V(q)|^2 + \prod_{q_i \in V(q)} |q_i.cand\_set|)$.

\begin{table}[t]\scriptsize
\setlength{\tabcolsep}{1pt}
\begin{center}
\caption{Parameter settings.}
\label{tab:parameters}
\begin{tabular}{|l||l|}
\hline
\textbf{Parameters}&\textbf{Values} \\
\hline\hline
    the dimension, $d$, of the vertex embedding vector  & {\bf 2}, 3, 4, 5\\\hline
    the weight ratio, $\alpha/\beta$ & 1K, 10K, {\bf 100K}, 1M\\\hline
    the number, $t$, of synopsis hops & 1, {\bf 2}, 3, 4\\\hline    
    the size, $|V(q)|$, of the query graph $q$ & 5, 6, {\bf 8}, 10, 12\\\hline
    the average degree, $avg\_deg(q)$, of the query graph $q$ & 2, {\bf 3}, 4\\\hline  
    the number, $|\Sigma|$, of distinct labels & 5, 10, {\bf 15}, 20, 25\\\hline
    the average degree, $avg\_deg(G)$, of the data graph $G$ & 3, 4, {\bf 5}, 6, 7\\\hline     
    the size, $|V(G)|$, of the data graph $G$ & 10K, 30K, {\bf 50K}, 80K, 100K, 500K, 1M, 5M, 10M\\\hline 
\end{tabular}
\end{center}
\end{table}

\begin{table}[t]\scriptsize
\begin{center}
\caption{Statistics of real-world graph data sets.}
\label{tab:datasets}
\begin{tabular}{|l||c|c|c|c|}
\hline
\textbf{\text{ }\text{ }Data Sets}&\textbf{$|V(G)|$}&\textbf{$|E(G)|$}&\textbf{$|\Sigma|$}&\textbf{$avg\_deg(G)$} \\
\hline\hline
    Yeast (ye) & 3,112 & 12,519 & 71 & 8.0\\\hline
    HPRD (hp) & 9,460 & 34,998 & 307 & 7.4\\\hline
    DBLP (db) & 317,080 & 1,049,866 & 15 & 6.6\\\hline
    Youtube (yt) & 1,134,890 & 2,987,624 & 25 & 5.3\\\hline
    US Patents (up) & 3,774,768 & 16,518,947 & 20 & 8.8\\\hline
\end{tabular}
\end{center}
\end{table}

\section{Experimental Evaluation}
\label{sec:experiment}
To evaluate the effectiveness and efficiency of \textsc{LIVE}, we conducted extensive experiments on synthetic/real-world graph datasets.

\subsection{Experiment Settings}
\label{subsec:exp_setting}
All experiments were performed on an Ubuntu server equipped with an Intel Core i9-12900K CPU, 128GB memory, and an NVIDIA GeForce RTX~4090 GPU.
Offline embedding optimization and model training were implemented in PyTorch, while the online exact subgraph matching components,
including index traversal, pruning, and refinement, were implemented in C++.
We used the Adam optimizer with a learning rate of $\eta=0.01$ for embedding training.
To balance GPU memory usage and training cost, 
we fixed the number of sampled vertex pairs to $K=4{,}096$ and trained for $1{,}000$ epochs across all datasets.
The source code and all tested datasets are available at \url{https://github.com/JamesWhiteSnow/LIVE}.

\noindent{\bf Baseline Methods.}
We compared \textsc{LIVE} with eleven representative exact subgraph matching methods, 
including both classical rule-based approaches and recent learning-based solutions:
GraphQL (GQL)~\cite{he2008graphs},
QuickSI (QSI)~\cite{shang2008taming},
RI~\cite{bonnici2013subgraph},
CFLMatch (CFL)~\cite{bi2016efficient},
VF2++ (VF)~\cite{juttner2018vf2++},
DP-iso (DP)~\cite{han2019efficient},
CECI~\cite{bhattarai2019ceci},
Hybrid~\cite{sun2020memory} (a hybrid of GQL, RI, and QSI),
RapidMatch (RM)~\cite{sun2020rapidmatch},
GNN-PE~\cite{ye2024efficient},
and BSX~\cite{lu2025b}.
These baselines collectively represent the state of the art in exact subgraph matching with different indexing, pruning, and learning strategies, providing a comprehensive comparison against \textsc{LIVE}.

\noindent{\bf Graph Datasets.}
We evaluated \textsc{LIVE} on both synthetic and real-world graph datasets, following widely adopted experimental settings as in prior works~\cite{sun2020memory,sun2020rapidmatch,ye2024efficient,ye2025continuous,jiang2025comprehensive}.

\begin{figure*}[t]
    \begin{minipage}[t]{0.6\textwidth}
    \centering
        \subfigure[{\small varying $d$}]{\label{subfig:parameter_dim}
            {\includegraphics[height=2.7cm]{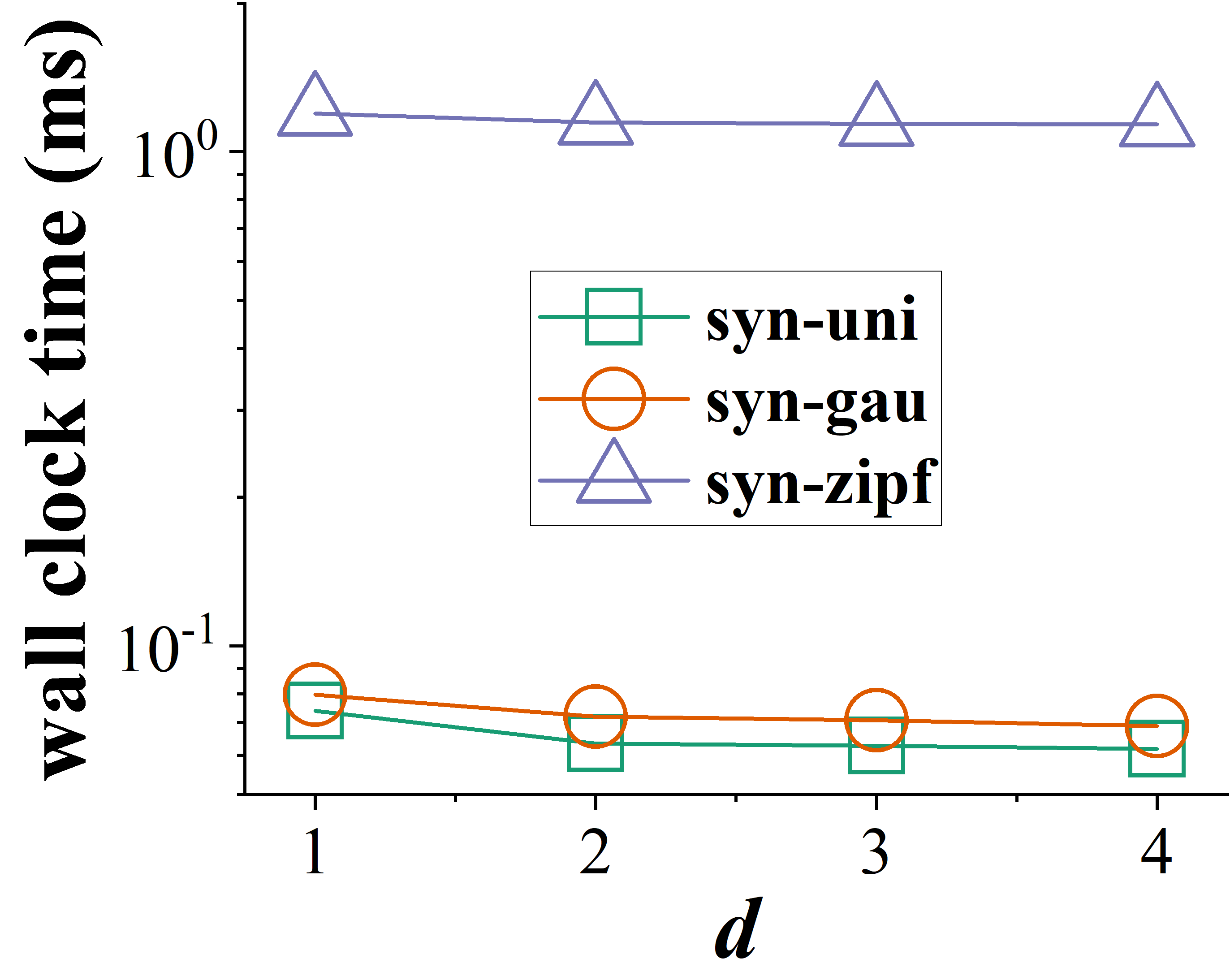}}} 
            \subfigure[{\small varying $\alpha/\beta$}]{\label{subfig:parameter_ratio}
            {\includegraphics[height=2.7cm]{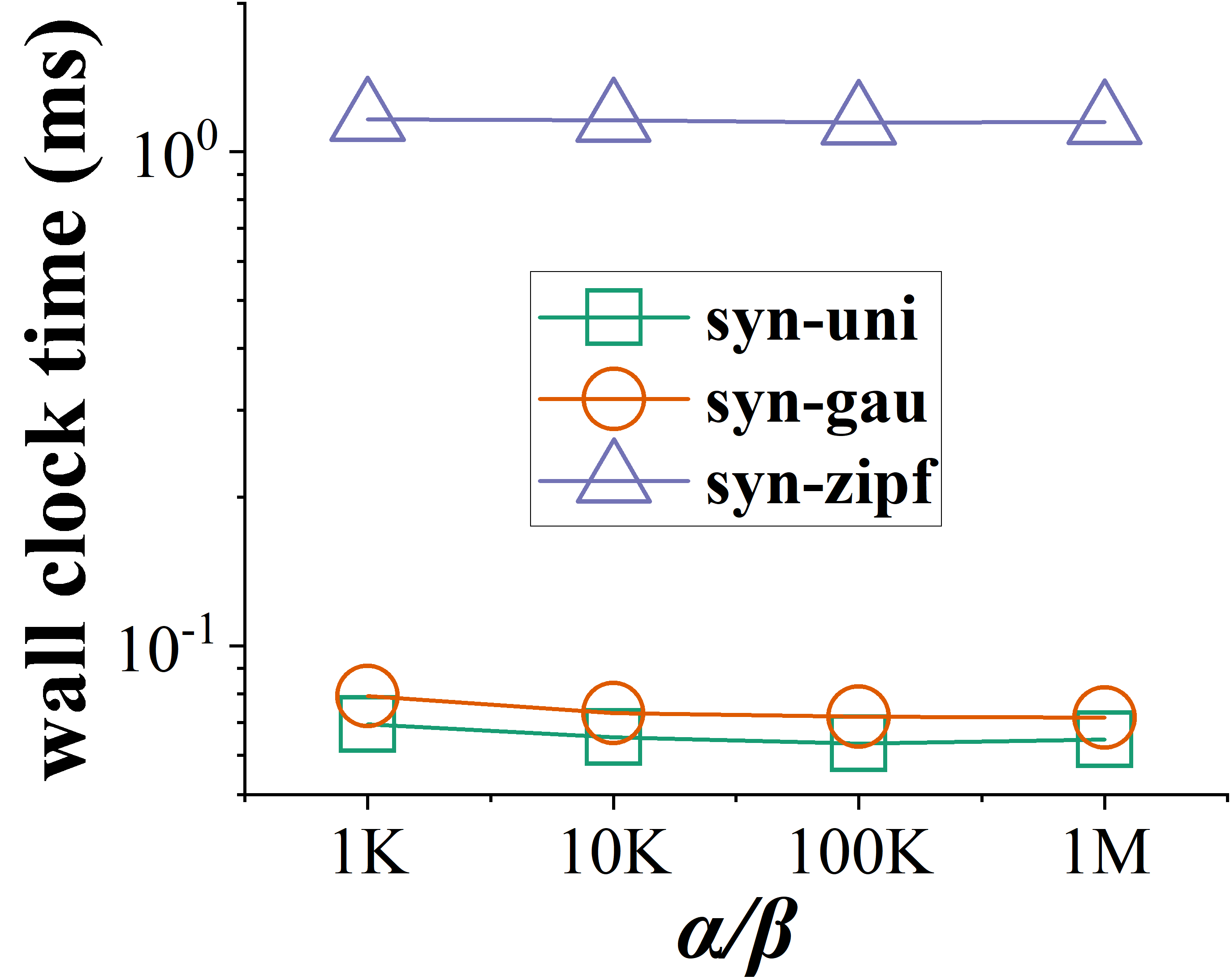}}} 
        \subfigure[{\small varying $t$}]{\label{subfig:parameter_hop}
            {\includegraphics[height=2.7cm]{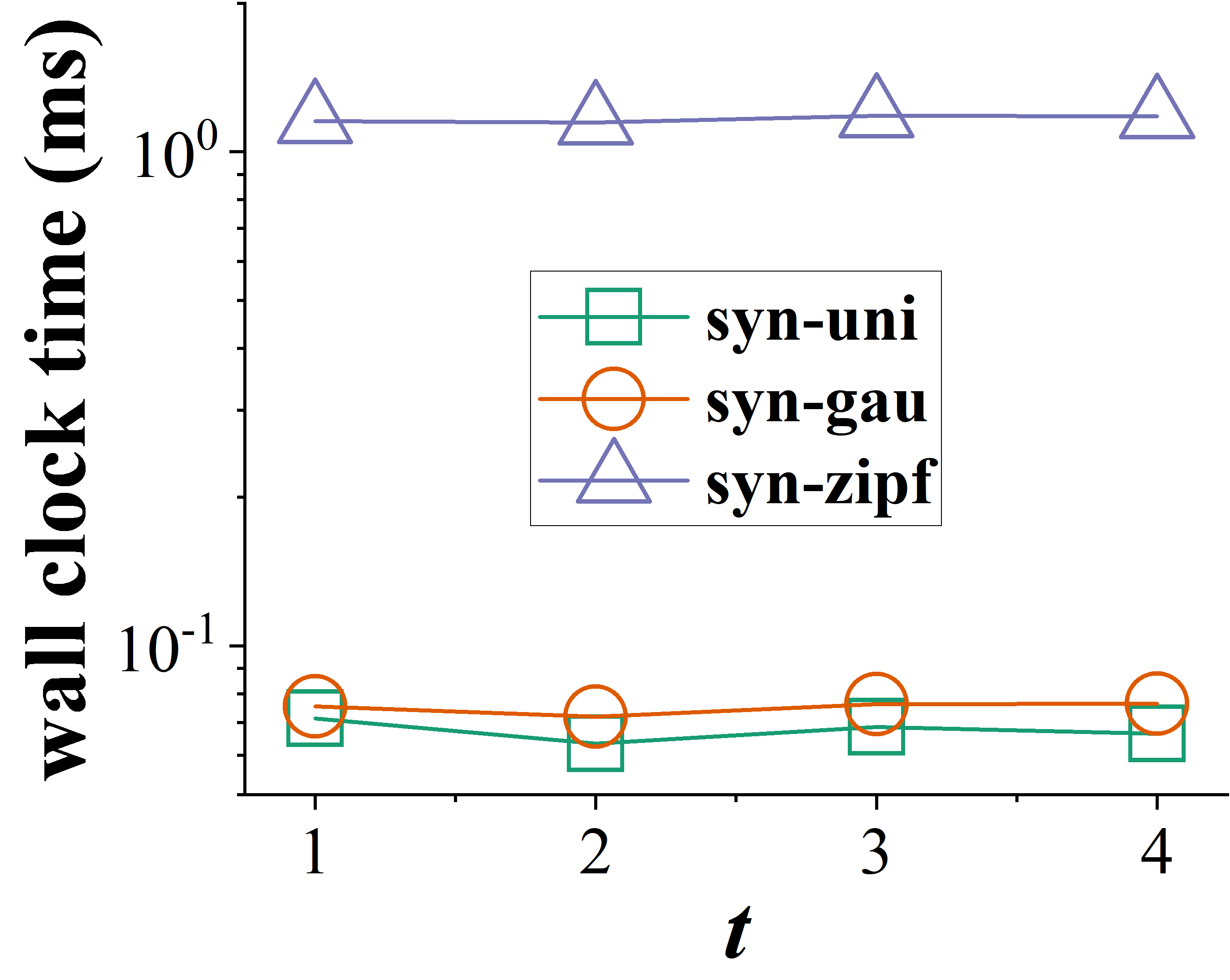}}} 
        \caption{\textsc{LIVE} efficiency w.r.t different parameters $\bm{d}$, $\bm{\alpha/\beta}$, and $\bm{t}$.}
        \label{fig:parameters}
    \end{minipage}\hfill
    \begin{minipage}[t]{0.4\textwidth}
    \centering
        \subfigure[{\small real-world graphs}]{\label{subfig:pruning_real}
            {\includegraphics[height=2.7cm]{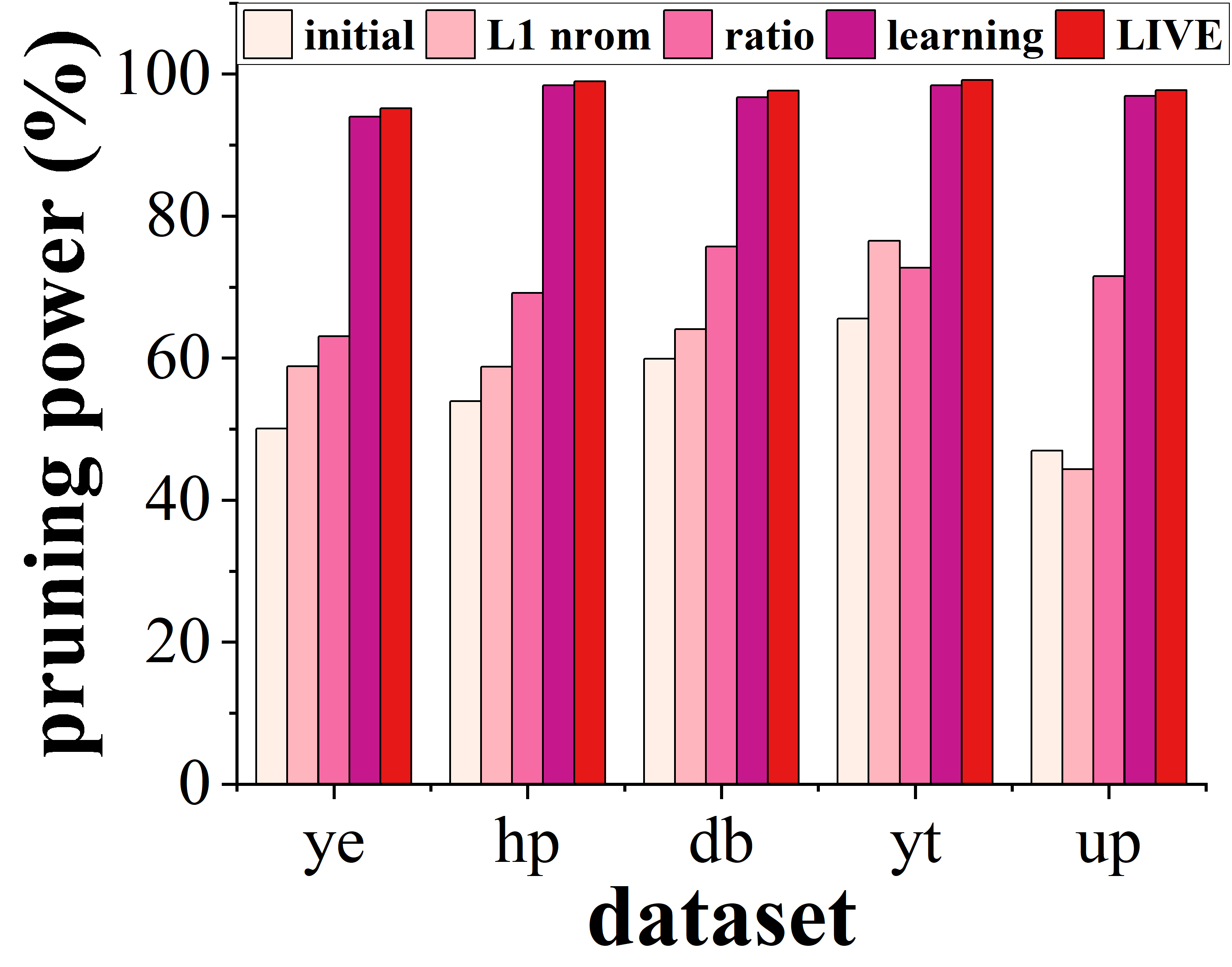}}} 
            \subfigure[{\small synthetic graphs}]{\label{subfig:pruning_syn}
            {\includegraphics[height=2.7cm]{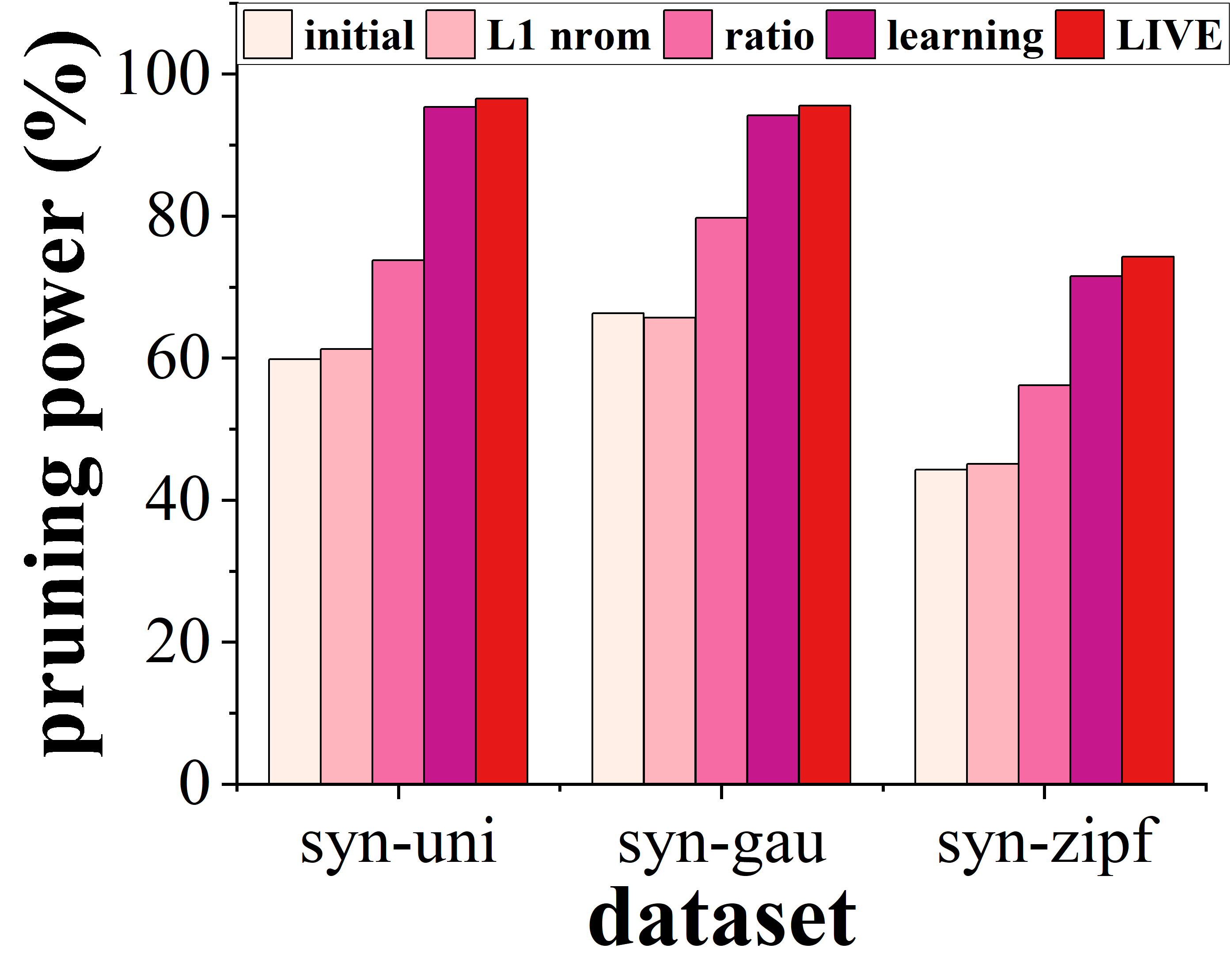}}} 
        \caption{\textsc{LIVE} pruning power.}
        \label{fig:pruning_power}
    \end{minipage}
\end{figure*}

\underline{\it Synthetic Graphs.}
Following \cite{sun2020memory,ye2024efficient,lu2025b}, we generated synthetic graphs using the Newman--Watts--Strogatz model~\cite{watts1998collective} via the NetworkX package~\cite{hagberg2020networkx}.
For each vertex $v_i$, its label $L(v_i)$ was randomly assigned from the range $[1, |\Sigma|]$ following one of three distributions: \kw{Uniform}, \kw{Gaussian}, or \kw{Zipf}.
Accordingly, we constructed three types of synthetic graphs, denoted as \textit{syn-uni}, \textit{syn-gau}, and \textit{syn-zipf}.
Detailed parameter settings are summarized in Table~\ref{tab:parameters}.

\underline{\it Real-world Graphs.}
We further evaluated \textsc{LIVE} on five real-world graph datasets that have been widely used in prior subgraph matching studies~\cite{he2008graphs, shang2008taming, zhao2010graph, sun2012efficient, han2013turboiso, ren2015exploiting, bi2016efficient, bhattarai2019ceci, han2019efficient, sun2020memory}.
The statistics of these datasets are reported in Table~\ref{tab:datasets}.

\noindent{\bf Query Graphs.}
For each data graph $G$, we generated 
100 connected query graphs following standard practices~\cite{sun2012efficient, han2013turboiso, ren2015exploiting, bi2016efficient, katsarou2017subgraph, archibald2019sequential, bhattarai2019ceci, han2019efficient}.
Each query graph $q$ was obtained by performing a random walk on $G$ until $|V(q)|$ vertices were collected.
If the induced subgraph exceeded the target average degree $avg\_deg(q)$, edges were randomly removed to match the desired density; otherwise, the sampling process was restarted.
Detailed query parameters are listed in Table~\ref{tab:parameters}.

\noindent{\bf Evaluation Metrics.}
Following \cite{sun2020memory,sun2020rapidmatch,ye2024efficient,lu2025b}, we evaluated both efficiency and effectiveness of \textsc{LIVE} and all baselines.
Efficiency was measured by the wall-clock query processing time, including both filtering and refinement phases.
Effectiveness was evaluated in terms of pruning power, defined as the percentage of data vertices eliminated during candidate retrieval by the proposed pruning strategies (Section~\ref{subsec:pruning}).

Unless otherwise stated, all reported results were averaged over 100 query graphs.
In addition, we reported the offline pre-computation costs of \textsc{LIVE}, including embedding model training time, \textit{iLabel} index construction time, and \textit{iLabel} index storage overhead, 
to demonstrate the practicality of the proposed framework.

Table~\ref{tab:parameters} summarizes the default parameter settings used in our experiments, where default values are highlighted in bold.
In subsequent experiments, we vary one parameter at a time while keeping all others fixed at their default values.

\subsection{Parameter Tuning}
In this subsection, we studied the impact of key parameters in \textsc{LIVE} on query efficiency using synthetic datasets, varying one parameter at a time while holding the others at their default values.

\noindent{\bf Impact of Vertex Embedding Dimension $\bm{d}$.}
Figure~\ref{subfig:parameter_dim} reports the query processing time of \textsc{LIVE} as the embedding dimension $d$ varied from 1 to 4.
As shown there, increasing $d$ from 1 to 2 significantly reduces query time due to stronger pruning from more expressive embeddings 
, while further increases yield marginal benefits as pruning saturated and per-comparison cost grew linearly.
This indicates that low-dimensional monotonic vertex embeddings (\eg~$d=2$) 
can support efficient subgraph matching on widely tested datasets.
Nevertheless, \textsc{LIVE} maintains consistently low query latency across different values of $d$ (\eg~below 0.07$ms$ on \textit{syn-uni} and \textit{syn-gau}, and below 1.19$ms$ on \textit{syn-zipf}), demonstrating robustness to the embedding dimensionality.

\noindent{\bf Impact of Weight Ratio $\bm{\alpha/\beta}$.}
Figure~\ref{subfig:parameter_ratio} evaluated the impact of the weight ratio $\alpha/\beta$ on query efficiency, 
where $\alpha$ and $\beta$ control the relative contributions of vertex label embeddings (VLE) and vertex structure embeddings (VSE), respectively.
As $\alpha/\beta$ increases from $10^3$ to $10^5$, query time decreases 
and then stabilizes when $\alpha/\beta$ exceeds $10^5$.
This behavior aligns with the design of \textit{iLabel}: 
(i) a larger $\alpha/\beta$ enhances separation among label-based clusters in the key space, resulting in tighter query ranges and fewer vertices examined during index traversal; and 
(ii) an effective $\alpha/\beta$ can be achieved with a relatively small value (e.g., $10^5$).
Overall, \textsc{LIVE} maintains low query latency across a wide range of $\alpha/\beta$ values (ranging from 0.06$ms$ to 1.16$ms$).

\noindent{\bf Impact of Synopsis Hop Parameter $\bm{t}$.}
Figure~\ref{subfig:parameter_hop} examined the effect of the hop parameter $t$ in the hop-based embedding synopsis, which controls the amount of multi-hop structural information used for pruning.
As shown in the figure, query time first decreases as $t$ increases from 1 to 2, and then 
increases for larger values of $t$.
This behavior is expected: while a larger $t$ enables stronger pruning by incorporating richer multi-hop information, the overhead of comparing higher-dimensional synopses eventually outweighs the additional pruning benefit.
The resulting U-shaped trend reflects the trade-off between pruning effectiveness and comparison cost.
Nevertheless, \textsc{LIVE} maintains low query latency across all settings, ranging from 0.06$ms$ to 1.18$ms$.

\noindent{\bf Default Parameter Selection.}
Based on the above observations, we set
$d=2$, $\alpha/\beta=100$K, and $t=2$ as the default parameters in the remaining experiments.
These values strike a good balance between pruning effectiveness and computational overhead, and provide stable performance across different datasets and query workloads.

\begin{figure}[t]
\centering
\subfigure[][{\small real-world graphs}]{                    
\scalebox{0.17}[0.15]{\includegraphics{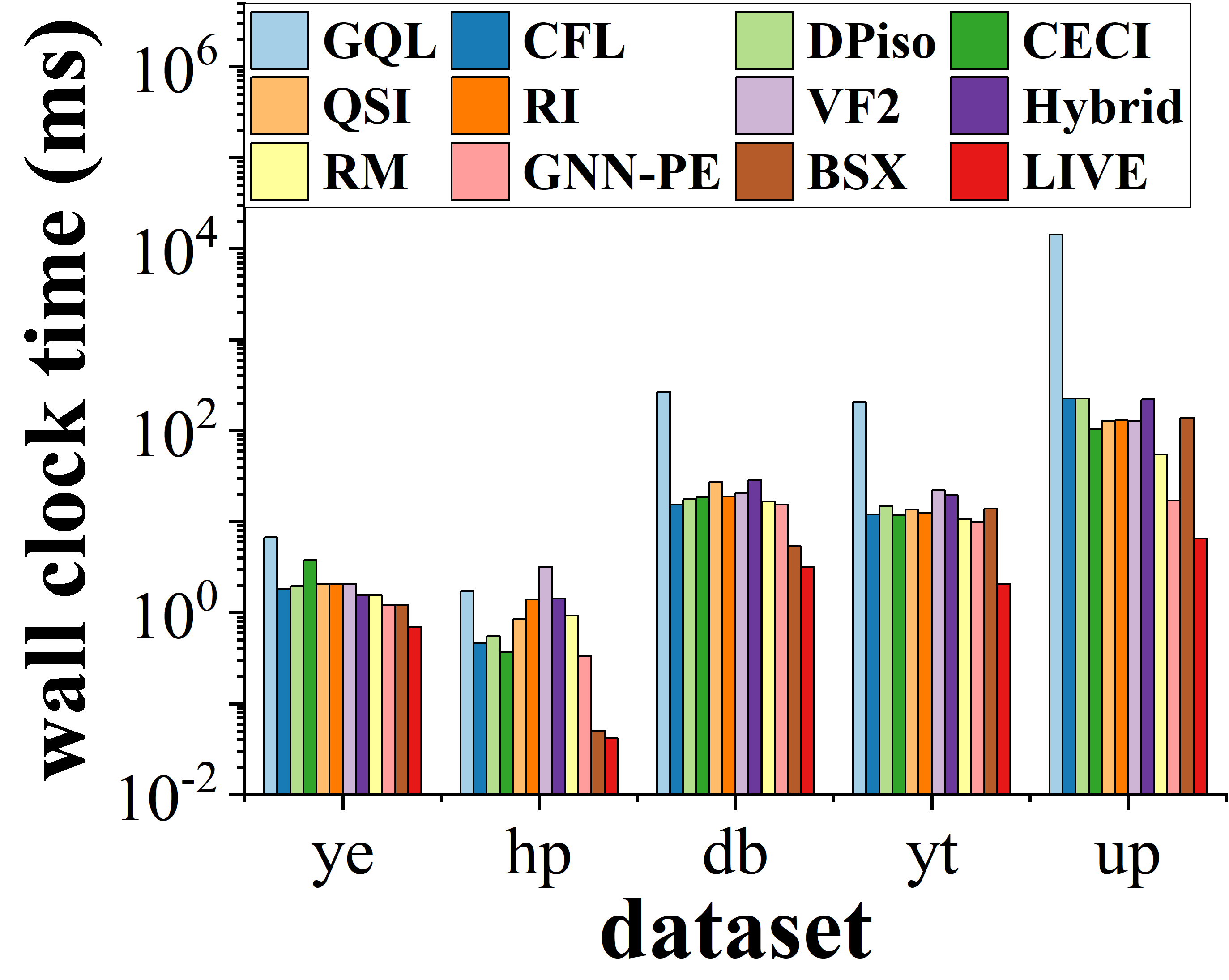}}\label{subfig:efficiency_real}}
\qquad
\subfigure[][{\small synthetic graphs}]{
\scalebox{0.15}[0.15]{\includegraphics{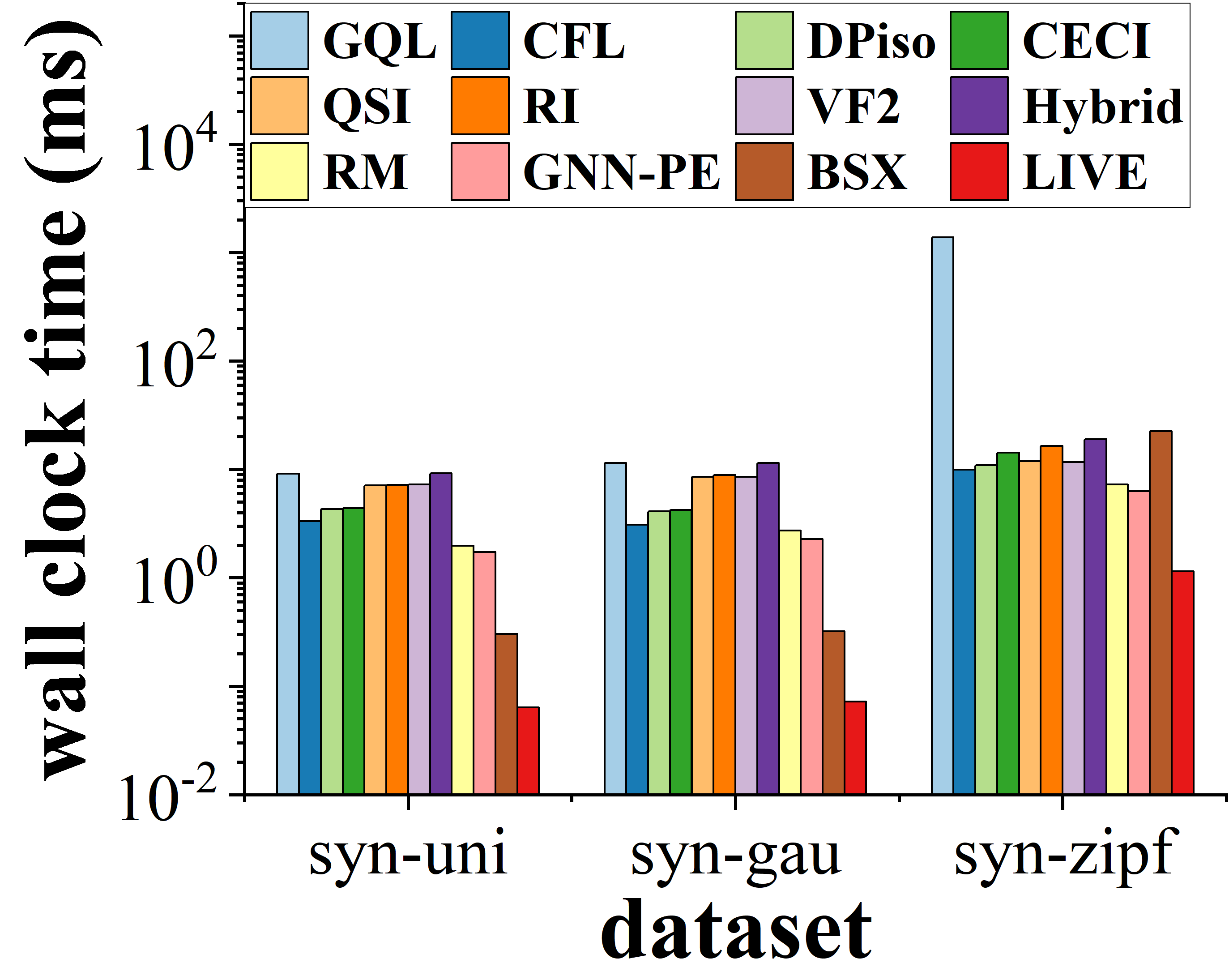}}\label{subfig:efficiency_syn}}
\caption{\textsc{LIVE} efficiency on synthetic/real-world graphs.} 
\label{fig:efficiency}
\end{figure}

\subsection{Evaluation of \textsc{LIVE} Effectiveness}
In this subsection, we evaluated the effectiveness of \textsc{LIVE} in terms of pruning power, measured as the percentage of data vertices 
safely eliminated during candidate retrieval.

\begin{figure*}[t]
\centering
\subfigure[][{\small varying $|\Sigma|$}]{                    
\label{subfig:data_label}
    {\includegraphics[height=2.8cm]{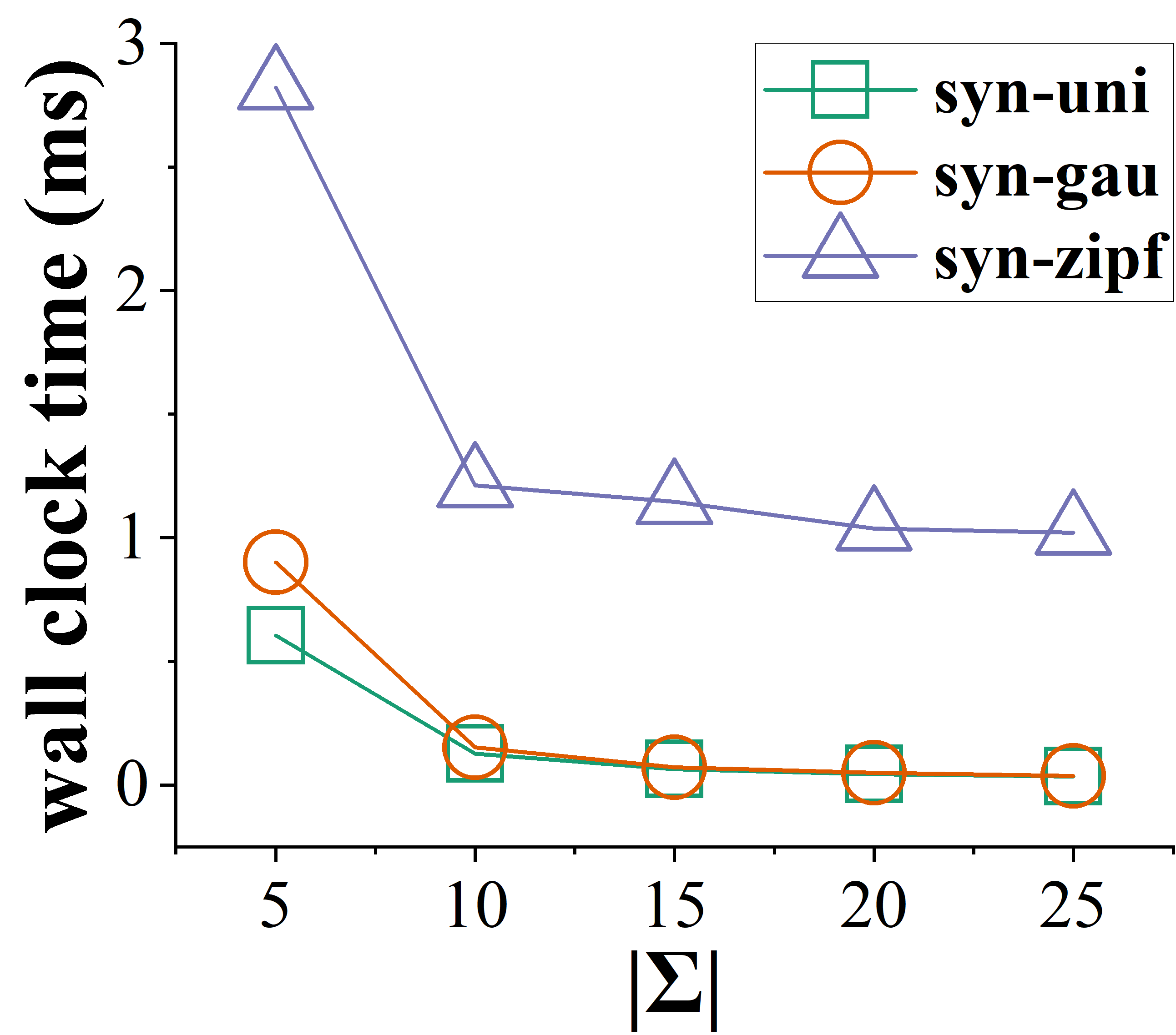}}}
\subfigure[][{\small varying $avg\_deg(q)$}]{                    
\label{subfig:query_degree}
    {\includegraphics[height=2.8cm]{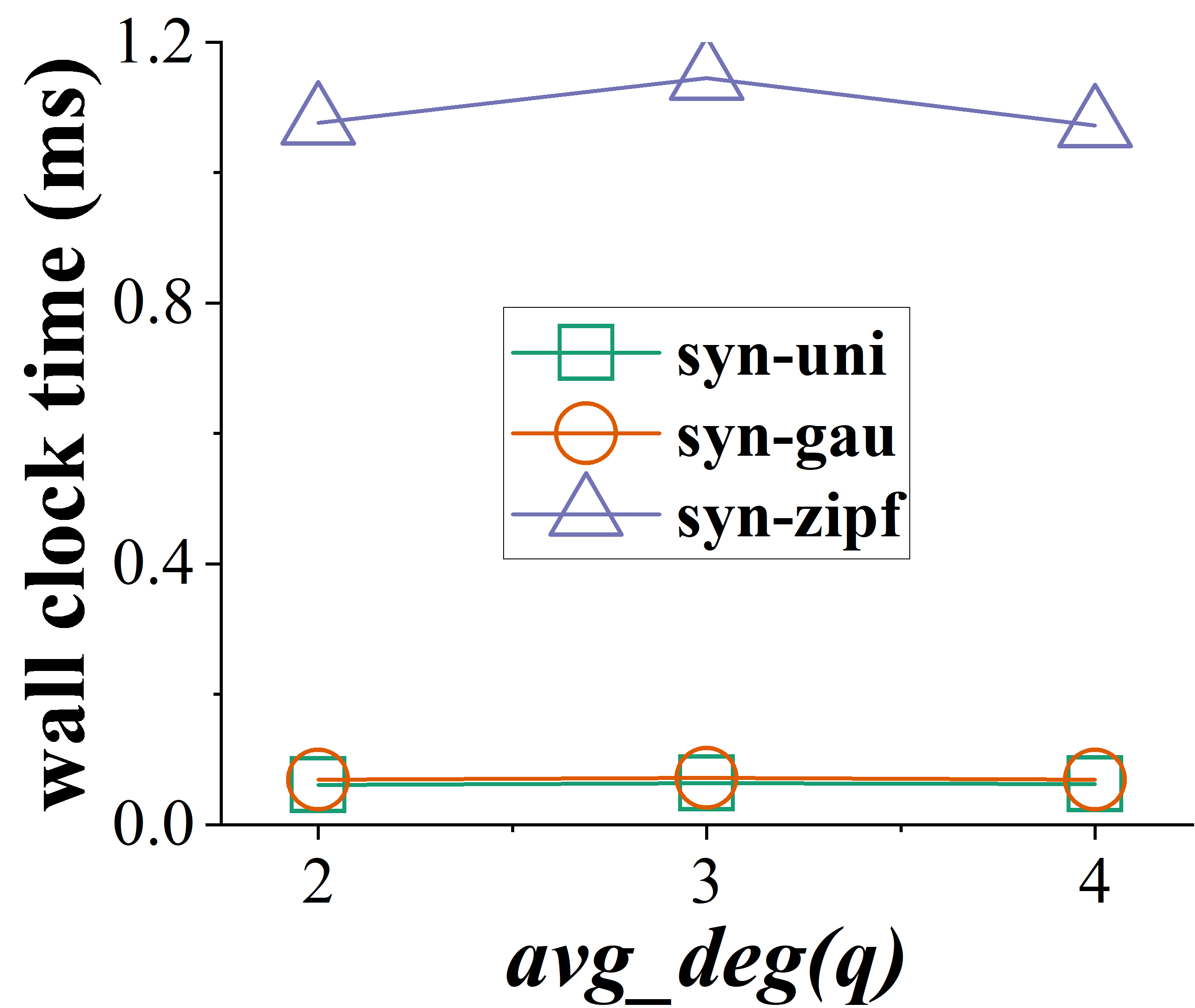}}}
\subfigure[][{\small varying $|V(q)|$}]{                    
\label{subfig:query_size}
    {\includegraphics[height=2.8cm]{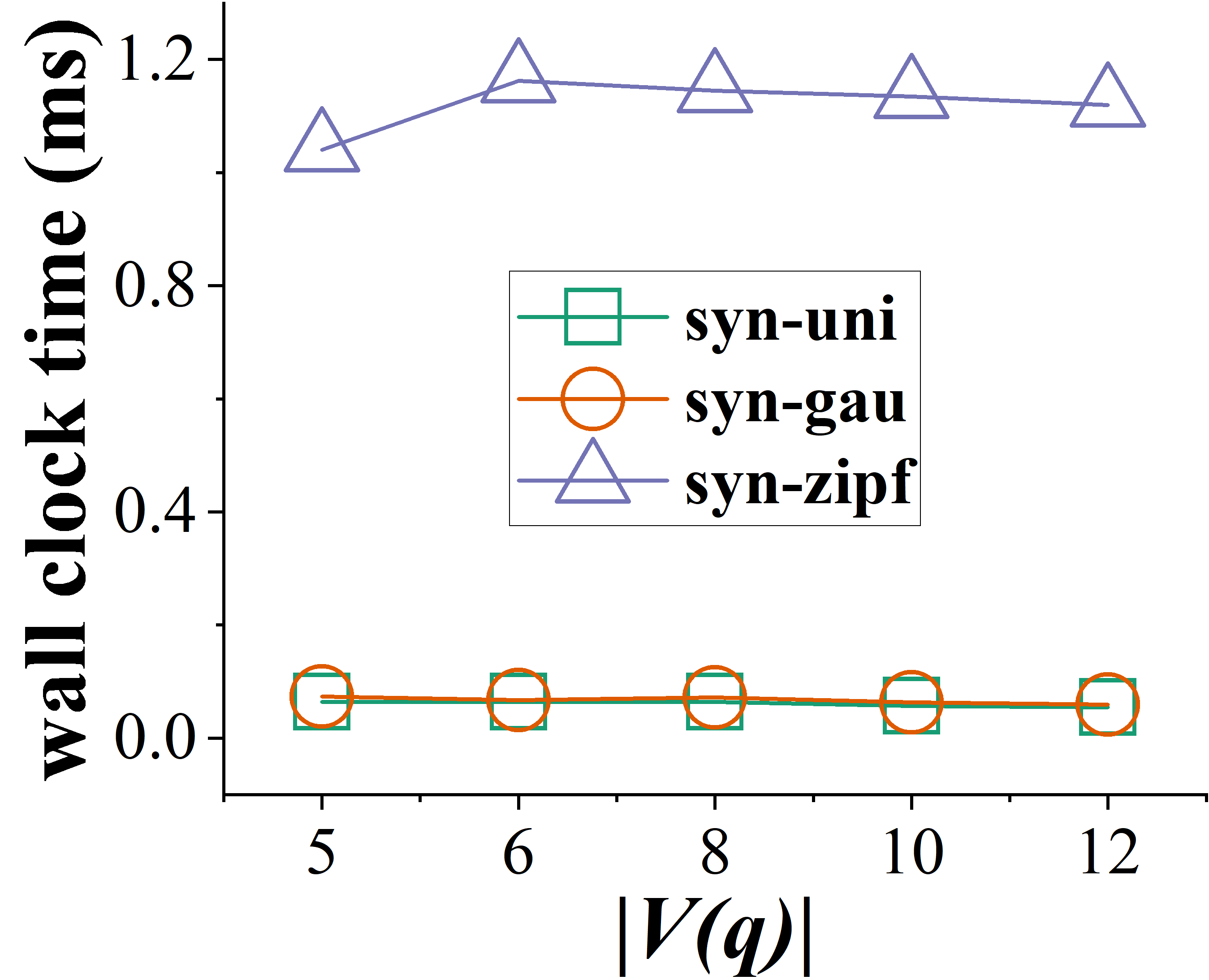}}}
\subfigure[][{\small varying $avg\_deg(G)$}]{                    
\label{subfig:data_degree}
    {\includegraphics[height=2.8cm]{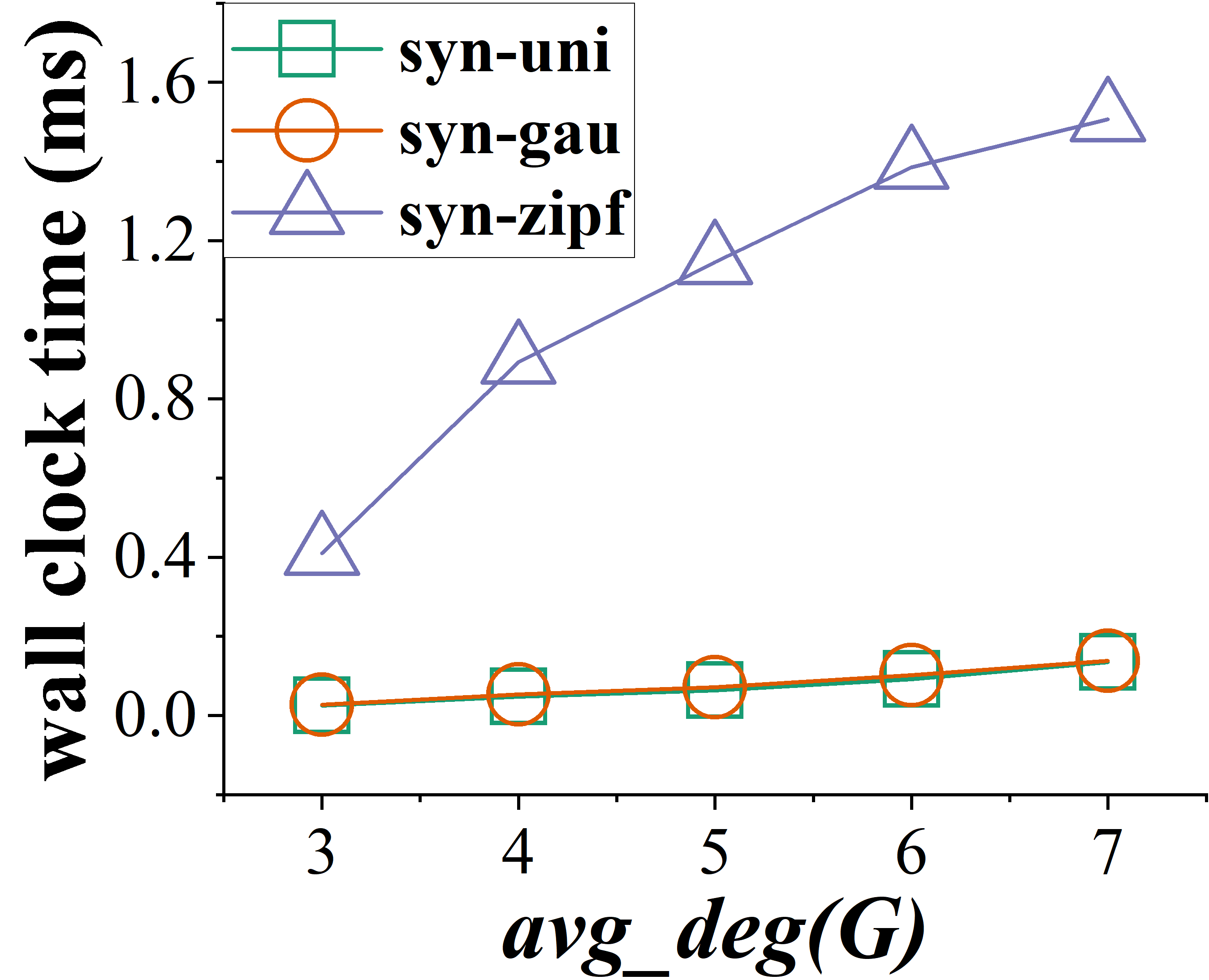}}}
\subfigure[][{\small varying $|V(G)|$}]{                    
\label{subfig:data_size}
    {\includegraphics[height=2.8cm]{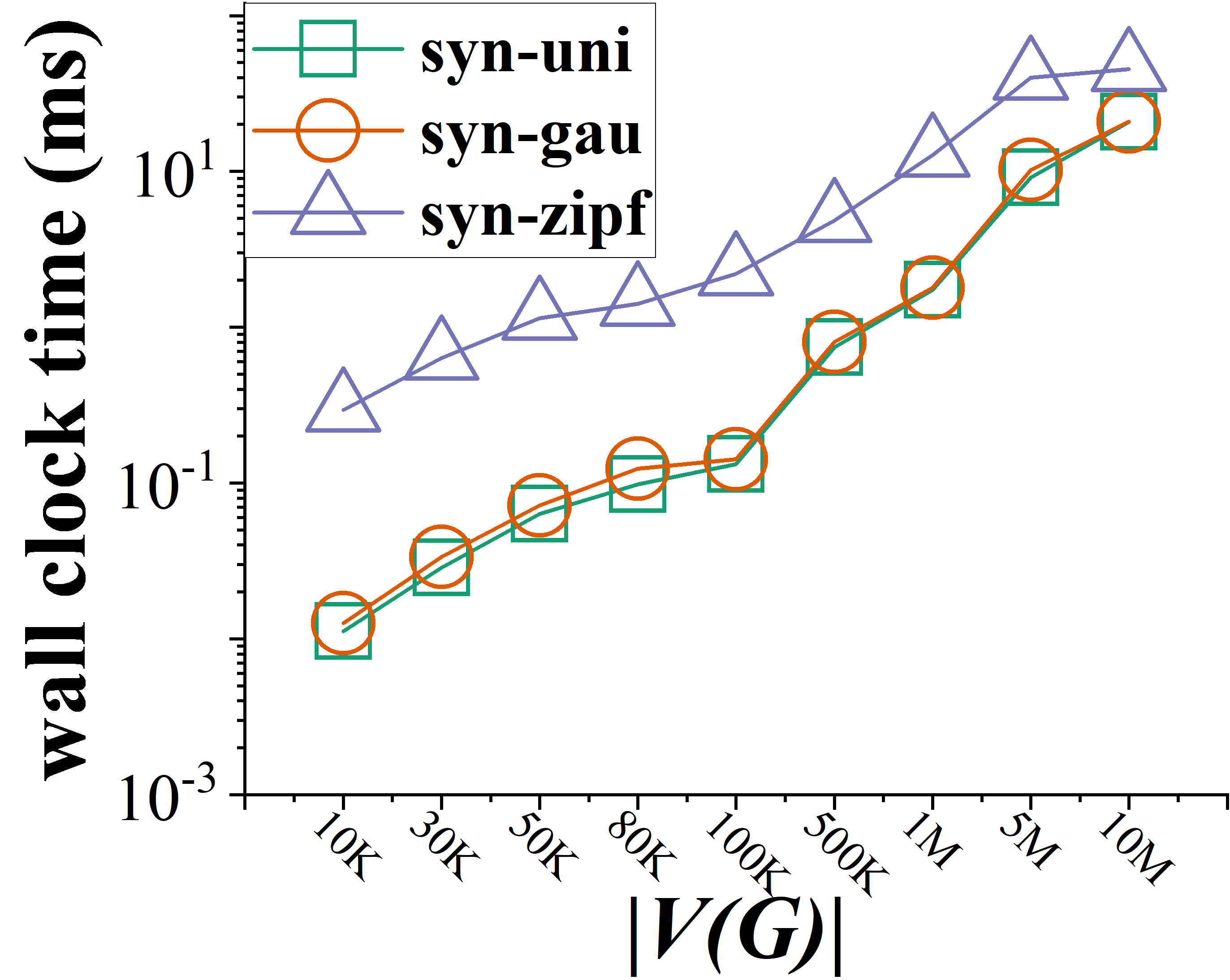}}}
\caption{\textsc{LIVE} efficiency evaluation on synthetic graphs.}
\label{fig:efficiency_parameter}
\end{figure*}

\noindent{\bf Pruning Power of \textsc{LIVE} on Real-world and Synthetic Graphs.}
Figure~\ref{fig:pruning_power} evaluated the pruning power of \textsc{LIVE} under different embedding configurations and optimizations on both real-world and synthetic graphs, with all parameters set to their default values.

(i) Initial embeddings (\textit{initial}). Vertex embeddings are randomly initialized. While monotonicity guarantees correctness, pruning power is limited (\eg~only up to 66.32\% of data vertices are eliminated).
(ii) Lightweight optimizations (\textit{L1 norm} and \textit{ratio}). These apply non-learning-based optimizations, including $L_1$ normalization of vertex label embeddings and tuning the weight ratio $\alpha/\beta$. Both reshape the embedding distribution: $L_1$ normalization reduces dominance region size, while adjusting $\alpha/\beta$ strengthens label-based separation in the \textit{iLabel} key space, yielding consistent but moderate gains (\eg~ up to 76.49\% and 79.73\% pruning).
(iii) Embedding learning (\textit{learning}). Optimizing embeddings with the proposed anti-dominance loss significantly increases dispersion in the embedding space, resulting in much stronger pruning (\eg~up to 98.43\%).
(iv) Full \textsc{LIVE} pruning (\textit{\textsc{LIVE}}). Combining embedding learning with all pruning strategies, \ie~key-based, dominance-based, hop-based, and degree-based pruning, further eliminates false positives and achieves high pruning power, up to 99.18\%, removing over 99\% of data vertices before refinement.


These results confirm the effectiveness of \textsc{LIVE}'s embedding design and pruning strategies.
More importantly, they highlight a clear separation between correctness guarantees and pruning effectiveness.
While monotonicity alone ensures no false dismissals
(even with randomly initialized embeddings),
strong pruning power is achieved only when vertex embeddings are explicitly optimized to reduce dominance.
This empirical evidence shows that dominance correctness by itself does not imply efficient filtering, and validates the core design of \textsc{LIVE}: decoupling correctness guarantees from pruning-oriented embedding optimization.

\subsection{Evaluation of \textsc{LIVE} Efficiency}
In this subsection, we evaluated the efficiency of \textsc{LIVE} in terms of wall-clock time, and compared it with state-of-the-art baseline methods on both real-world and synthetic graphs.
Unless otherwise stated, all parameters are set to their default values.

\noindent{\bf Efficiency of \textsc{LIVE} on Synthetic and Real-world Graphs.}
Figure~\ref{fig:efficiency} compared the time cost of \textsc{LIVE} with eleven baselines on both synthetic and real-world datasets.
As shown, \textsc{LIVE} consistently outperforms all baselines across all datasets, achieving up to an order-of-magnitude speedup over the strongest competitors on large-scale real-world graphs such as \textit{yt} and \textit{up}.
Notably, \textsc{LIVE} answers exact subgraph matching queries on the largest \textit{up}, which contains 3.77$M$ vertices, in only 6.53$ms$.
These results demonstrate the high efficiency of \textsc{LIVE} and its strong scalability to large graphs.

To further evaluate the efficiency of \textsc{LIVE}, 
we varied key parameters (e.g., $|\Sigma|$, $avg\_deg(q)$, $|V(q)|$, $avg\_deg(G)$, and $|V(G)|$) on synthetic graphs in subsequent tests.
To better illustrate performance trends, baseline results are omitted below.

\noindent{\bf Efficiency of \textsc{LIVE} w.r.t. $\bm{\#}$ of Distinct Vertex Labels, $\bm{|\Sigma|}$.}
Figure~\ref{subfig:data_label} shows the query time of \textsc{LIVE} as the number of distinct vertex labels $|\Sigma|$ increases from 5 to 25.
As $|\Sigma|$ increases, label-based clustering in the \textit{iLabel} index becomes more effective, leading to smaller query ranges and fewer candidate vertices, and thus lower query time.
Although different label distributions introduce some variation in pruning behavior, \textsc{LIVE} consistently maintains low query latency across all settings, ranging from 0.03$ms$ to 2.82$ms$.

\noindent{\bf Efficiency of \textsc{LIVE} w.r.t. Average Degree, $\bm{avg\_deg(q)}$, of the Query Graph $\bm{q}$.}
Figure~\ref{subfig:query_degree} evaluated the impact of the query graph's average degree on efficiency.
As $avg\_deg(q)$ increases, query time first rises and then declines. 
This is because although higher query density enables stronger pruning due to tighter matching requirements, it introduces additional overhead in computing hop-based synopses and enforcing structural constraints.
Nevertheless, query time remains low across all settings 
($0.06ms\sim 1.15ms$).

\noindent{\bf Efficiency of \textsc{LIVE} w.r.t. Query Graph Size $\bm{|V(q)|}$.}
Figure~\ref{subfig:query_size} reports query time as the query graph size increased from 5 to 12 vertices.
Intuitively, larger queries involve more vertices and can increase index traversal and refinement costs, but they also yield richer hop-based synopses that strengthen pruning and reduce candidate sets.
As a result, \textsc{LIVE} exhibits decreasing query time as $|V(q)|$ grows, indicating that stronger structural constraints effectively offset the additional processing overhead.
Across all query sizes, query time remains low, ranging from 0.05$ms$ to 1.16$ms$.

\noindent{\bf Efficiency of \textsc{LIVE}  w.r.t. Average Degree, $\bm{avg\_deg(G)}$, of the Data Graph $\bm{G}$.}
Figure~\ref{subfig:data_degree} evaluated \textsc{LIVE} under varying data graph densities, with $avg\_deg(G)$ from 3 to 7.
As shown in the figure, query time increases only moderately as graph density grows, even on graphs with skewed label distributions,
\eg~from 0.41$ms$ to 1.51$ms$ on \textit{syn-zipf}.
This trend arises because higher density weakens pruning effectiveness and enlarges candidate sets.
\textsc{LIVE} mitigates this effect through degree-based synopsis pruning, which restricts structural checks to substructures whose degrees match those of query vertices.
Overall, \textsc{LIVE} is efficient across all settings.

\noindent{\bf Scalability of \textsc{LIVE} w.r.t. Data Graph Size $\bm{|V(G)|}$.}
Figure~\ref{subfig:data_size} evaluated the scalability of \textsc{LIVE} as the data graph size increased from 10$K$ to 10$M$ vertices.
\textsc{LIVE} consistently achieves the lowest query time across all settings,
\eg~it answers exact subgraph matching queries within 45.26$ms$ on graphs with 10$M$ vertices.
This efficiency stems from the one-dimensional \textit{iLabel} index design combined with effective pruning strategies, which tightly bound candidate exploration even as graph size grows.
This indicates that \textsc{LIVE} scales well for exact subgraph matching on large-scale graphs.

\begin{figure}[t]
\subfigure[][{\small model training time}]{                    
\scalebox{0.11}[0.11]{\includegraphics{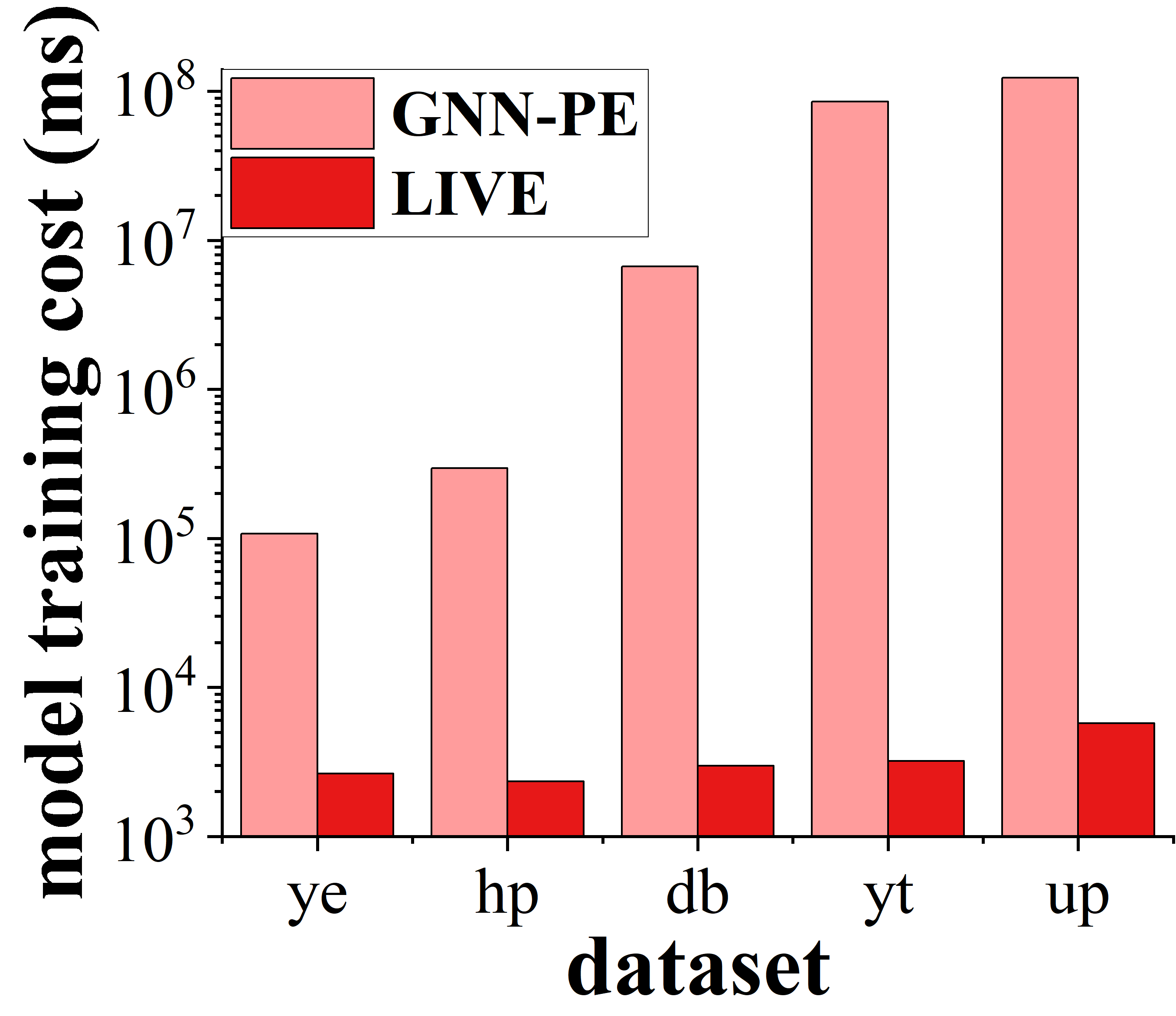}}\label{subfig:model_training_time}}
\subfigure[][{\small index building time}]{
\scalebox{0.11}[0.11]{\includegraphics{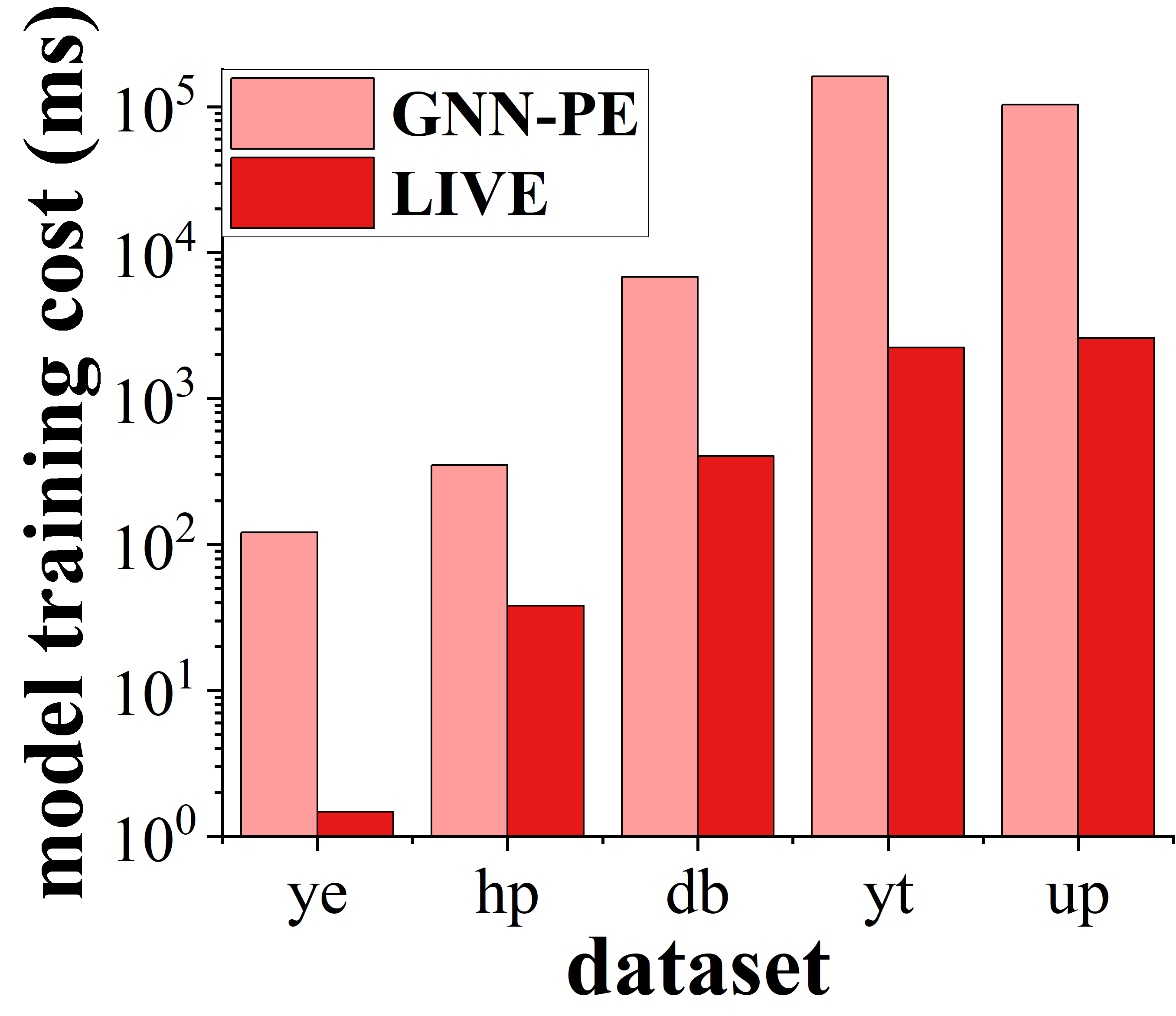}}\label{subfig:index_construction_time}}
\subfigure[][{\small index storage cost}]{
\scalebox{0.11}[0.11]{\includegraphics{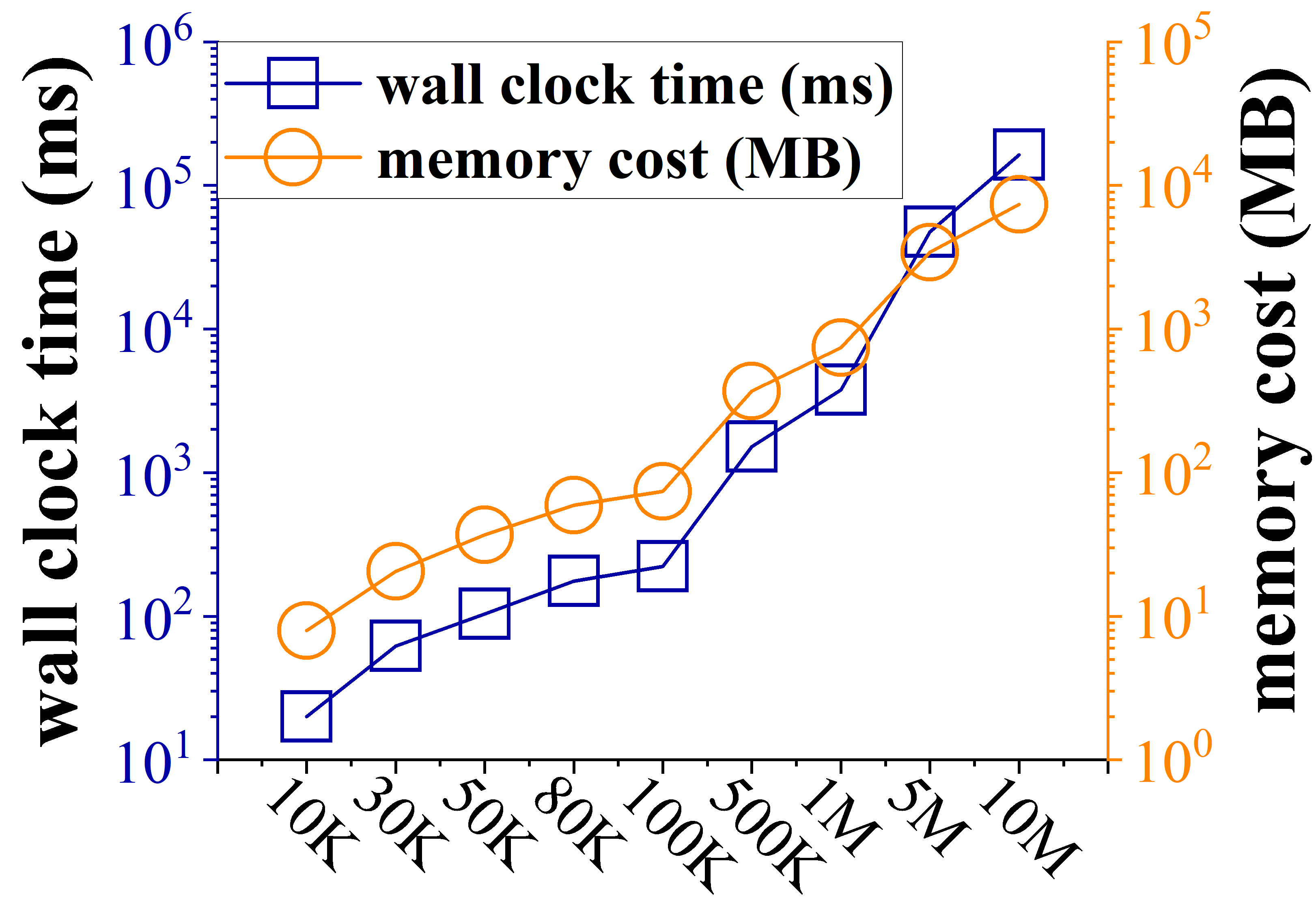}}\label{subfig:index_cost}}
\caption{The \textsc{LIVE} offline pre-computation cost.}
\label{fig:precomputation_cost}
\end{figure}

\subsection{Offline Pre-Computation Performance}
\label{subsec:offline_performance}
In this subsection, we evaluated the offline pre-computation cost of \textsc{LIVE}, including embedding model training and index construction, and compared it with the representative learning-based baseline GNN-PE.
Experiments were conducted on both real-world and synthetic graphs using default parameter settings.

\noindent{\bf Offline Pre-Computation Cost on Real-world Graphs.}
Figures~\ref{subfig:model_training_time} and~\ref{subfig:index_construction_time} compared the model training and index construction costs of \textsc{LIVE} and GNN-PE on real-world datasets.

As shown in Figure~\ref{subfig:model_training_time}, GNN-PE incurs 10$^4$$\times$ higher training cost, 
which increases sharply with graph size due to exhaustive enumeration of 1-hop subgraphs and their substructures for learning dominance relations.
In contrast, \textsc{LIVE} decouples dominance correctness from the learning objective and adopts a sampling-based anti-dominance loss with a fixed number of sampled vertex pairs, resulting in low and stable training cost across datasets, with only minor overhead from embedding computation.
Figure~\ref{subfig:index_construction_time} further shows that although index construction time increases with graph size for both methods, \textsc{LIVE} consistently achieves 82$\times$ lower cost by building a lightweight \textit{iLabel} index over vertex embeddings, whereas GNN-PE relies on path-based structures that substantially inflate index size and construction overhead.


These results demonstrate the superior cost performance of \textsc{LIVE} in both model training and index construction on large-scale graphs.
More importantly, they highlight a fundamental limitation of dominance learning-based approach (\eg~GNN-PE): by tightly coupling correctness guarantees with the learning objective, such methods inevitably incur prohibitively high training costs due to exhaustive subgraph–substructure enumeration.
In contrast, \textsc{LIVE} attains the same no-false-dismissal guarantee through monotonic embedding construction and decouples correctness from embedding optimization.
This structural decoupling enables lightweight, sampling-based training and avoids the offline cost bottleneck inherent in dominance learning-based frameworks.

\noindent{\bf Index Construction Time and Space Costs of \textsc{LIVE} w.r.t. Data Graph Size $|V(G)|$.}
Figure~\ref{subfig:index_cost} further evaluated the time and space overhead of constructing the \textit{iLabel} index on synthetic graphs as the data graph size $|V(G)|$ increases from 10$K$ to 10$M$.
Across this range, the total index construction time,
including vertex embedding generation, auxiliary synopsis computation, and B$^+$-tree construction,
grows from 19.98$ms$ to 163.59$sec$.
Correspondingly, index storage increases from 7.93$MB$ to 7{,}400.51$MB$.
These results indicate that the offline cost of \textsc{LIVE} scales gracefully with graph size and remains practical for large-scale graphs.

\section{Related Work}
\label{sec:related_work}

We classify existing works as follows.

\noindent{\bf Exact Subgraph Matching.}
Existing exact subgraph matching methods fall 
into two categories:
(i) join-based algorithms and (ii) backtracking-search-based algorithms.
Join-based methods~\cite{lai2015scalable,lai2016scalable,aberger2017emptyheaded,ammar2018distributed,mhedhbi2019optimizing,lai2019distributed,sun2020rapidmatch}
decompose a query graph into smaller substructures (\eg~edges, paths, or triangles), enumerate matches for each substructure independently, and join these partial results to obtain final matches.
Backtracking-based algorithms~\cite{shang2008taming,carletti2015vf2,bonnici2013subgraph,sun2012efficient,he2008graphs,bi2016efficient,bhattarai2019ceci,han2019efficient,carletti2017challenging}
first compute candidate sets for query vertices using filtering techniques, and then enumerate exact matches via depth-first search while enforcing injective mapping and edge-consistency constraints;
recent work further applies reinforcement learning to optimize matching orders in backtracking-based frameworks~\cite{wang2022reinforcement,yang2025neuso}.

\textsc{LIVE} differs from prior work in three key aspects:
(i) it scales to dense queries and large graphs via effective pruning with the \textit{iLabel} index, avoiding the large intermediates of join-based methods~\cite{lai2015scalable,lai2016scalable,aberger2017emptyheaded,ammar2018distributed,mhedhbi2019optimizing,lai2019distributed,sun2020rapidmatch};
(ii) it relies on learned vertex embeddings rather than explicit structural comparisons, improving scalability over backtracking-based algorithms~\cite{shang2008taming,carletti2015vf2,bonnici2013subgraph,sun2012efficient,he2008graphs,bi2016efficient,bhattarai2019ceci,han2019efficient,carletti2017challenging}; and
(iii) unlike feature-vector graph search targeting small graphs~\cite{shasha2002algorithmics,james2004daylight,yan2004graph}, \textsc{LIVE} enumerates all exact matches within a single large graph.

\nop{\textsc{LIVE} differs from prior approaches in the following.
(i) \textsc{LIVE} scales to dense queries and large graphs by leveraging effective pruning strategies supported by the proposed \textit{iLabel} index, rather than incurring large intermediate results and high memory over-\\head as in join-based methods~\cite{lai2015scalable,lai2016scalable,aberger2017emptyheaded,ammar2018distributed,mhedhbi2019optimizing,lai2019distributed,sun2020rapidmatch}.
(ii) \textsc{LIVE} performs subgraph matching using learned vertex embeddings, avoiding explicit structural comparisons that limit the scalability of traditional backtracking-based algorithms~\cite{shang2008taming,carletti2015vf2,bonnici2013subgraph,sun2012efficient,he2008graphs,bi2016efficient,bhattarai2019ceci,han2019efficient,carletti2017challenging}.
(iii) Although early studies explored feature-vector representations for graph search~\cite{shasha2002algorithmics,james2004daylight,yan2004graph}, they primarily targeted graph database retrieval over collections of small graphs.
In contrast, \textsc{LIVE} focuses on enumerating all exact matches within a single large graph, which introduces fundamentally different scalability challenges.}

\noindent{\bf Approximate Subgraph Matching.}
Another line of work studies approximate subgraph matching, which aims to efficiently retrieve subgraphs that are similar to a given query graph rather than exactly isomorphic.
To improve scalability, these methods relax exact structural constraints and employ heuristic or distance-based similarity measures, such as graph edit distance~\cite{li2018efficient}, neighborhood similarity~\cite{wen2025s3and}, or feature-based approximations~\cite{du2017first,dutta2017neighbor,guo2025efficient}.
As a result, approximate matching algorithms trade accuracy for efficiency and may return false positives or miss valid exact matches.

In contrast, the problem addressed in this work requires enumerating all subgraphs that are exactly isomorphic to the query graph, with no false positives or false dismissals.
Therefore, approximate subgraph matching methods do not provide the correctness guarantees required here and are orthogonal to the focus of \textsc{LIVE}.

\noindent{\bf Learning-based Subgraph Matching.}
Recent work explores learning techniques to accelerate subgraph matching by avoiding explicit graph-structure comparisons.
Early learning-based approaches focus on approximate matching, using GNNs or DNNs to embed graphs or subgraphs into embedding spaces and compare them via distance functions~\cite{mcfee2009partial,vendrov2015order,li2019graph,bai2019simgnn,xu2019cross,lou2020neural}.
More recently, learning has been brought to \emph{exact} subgraph matching. For instance, GNN-PE~\cite{ye2024efficient} is an early representative framework that learns dominance-preserving embeddings for graph paths, enabling candidate filtering without false dismissals; meanwhile, DIVINE~\cite{ye2025continuous} studies dynamic continuous matching using rule-based vertex-dominance embeddings generated by handcrafted rules rather than a learned model.

This work differs from prior studies in two key respects.
(a) \textsc{LIVE} provides correctness guarantees for exact subgraph isomorphism on large-scale \emph{static} graphs, whereas prior learning-based methods largely target approximate matching~\cite{mcfee2009partial,vendrov2015order,li2019graph,bai2019simgnn,xu2019cross,lou2020neural}.
(b) \textsc{LIVE} enforces dominance preservation by construction via monotonic vertex embeddings, decoupling correctness from learning; this yields lower training cost and stronger pruning than prior embedding-based approaches (\eg~\cite{ye2024efficient,ye2025continuous}) that rely on costly enumeration or heuristic designs.

\nop{\color{blue}
More recently, learning has been applied to \emph{exact} subgraph matching;
for example, GNN-PE~\cite{ye2024efficient} learns dominance-preserving embeddings for graph paths, enabling candidate filtering without false dismissals and representing the first learning-based framework with correctness guarantees.
As another embedding-based method, DIVINE \cite{ye2025continuous} studies dynamic continuous subgraph matching using vertex dominance embeddings generated by rules rather than a learning model.

This work differs from prior studies in two aspects.
(i) \textsc{LIVE} provides correctness guarantees for exact subgraph isomorphism on static large-scale graphs, unlike heuristic learning-based approaches~\cite{mcfee2009partial,vendrov2015order,li2019graph,bai2019simgnn,xu2019cross,lou2020neural}.
(ii) LIVE enforces dominance preservation by construction through monotonic vertex embedding design, decoupling correctness from learning objectives and achieving lower training cost and stronger pruning effectiveness than prior learning-based or embedding-based methods (\eg~\cite{ye2024efficient} and \cite{ye2025continuous}), which rely on costly enumeration of subgraph--substructure pairs or heuristic embedding rules.
}

\nop{
for example, GNN-PE~\cite{ye2024efficient} learns dominance-preserving embeddings for graph paths, enabling candidate filtering without false dismissals and representing the first learning-based framework with correctness guarantees.

This work differs from prior studies in two aspects.
(i) \textsc{LIVE} provides correctness guarantees for exact subgraph isomorphism, unlike heuristic learning-based approaches~\cite{mcfee2009partial,vendrov2015order,li2019graph,bai2019simgnn,xu2019cross,lou2020neural}.
(ii) \textsc{LIVE} enforces dominance preservation by construction through monotonic vertex embedding design, decoupling correctness from learning objectives and achieving lower training cost and stronger pruning effectiveness than prior learning-based methods (\eg~\cite{ye2024efficient}), which rely on costly enumeration of subgraph--substructure pairs.
}

\balance
\section{Conclusion}
\label{sec:conclusion}
In this paper, we propose \textsc{LIVE}, a learning-based framework for exact subgraph matching on large graphs that achieves both scalability and exactness. 
By enforcing monotonicity in vertex embedding construction, \textsc{LIVE} guarantees dominance correctness by design and fundamentally decouples no-false-dismissal guarantees from embedding optimization.
This decoupling enables candidate filtering without false dismissals while allowing embedding learning to directly focus on improving vertex-level pruning power.
We further introduce a query cost model with a differentiable surrogate objective to guide offline embedding training, and design a lightweight one-dimensional \textit{iLabel} index that preserves dominance relationships and supports efficient candidate retrieval without complex indexing structures. Extensive experiments on synthetic and real-world datasets validate the performance of \textsc{LIVE}.

\clearpage
\balance
\bibliographystyle{ACM-Reference-Format}
\bibliography{sample}

@String{Computer = "{IEEE} Computer" }

@article{lewis1983michael,
  title={Computers and intractability: A guide to the theory of NP-completeness},
  author={Michael R. Garey and David S. Johnson},
  journal={The Journal of Symbolic Logic},
  volume={48},
  number={2},
  pages={498--500},
  year={1983}
}

@inproceedings{borzsony2001skyline,
  title={The skyline operator},
  author={Borzsony, Stephan and Kossmann, Donald and Stocker, Konrad},
  booktitle={Proceedings of the International Conference on Data Engineering (ICDE)},
  pages={421--430},
  year={2001}
}

@article{cordella2004sub,
  title={A (sub) graph isomorphism algorithm for matching large graphs},
  author={Cordella, Luigi P and Foggia, Pasquale and Sansone, Carlo and Vento, Mario},
  journal={IEEE Transactions on Pattern Analysis and Machine Intelligence},
  volume={26},
  number={10},
  pages={1367--1372},
  year={2004},
}

@article{grohe2020graph,
  title={The graph isomorphism problem},
  author={Grohe, Martin and Schweitzer, Pascal},
  journal={Communications of the ACM},
  volume={63},
  number={11},
  pages={128--134},
  year={2020}
}

@inproceedings{he2008graphs,
  title={Graphs-at-a-time: query language and access methods for graph databases},
  author={He, Huahai and Singh, Ambuj K},
  booktitle={Proceedings of the International Conference on Management of Data (SIGMOD)},
  pages={405--418},
  year={2008}
}

@inproceedings{shang2008taming,
  title={Taming verification hardness: an efficient algorithm for testing subgraph isomorphism},
  author={Shang, Haichuan and Zhang, Ying and Lin, Xuemin and Yu, Jeffrey Xu},
  booktitle={Proceedings of the International Conference on Very Large Data Bases (PVLDB)},
  pages={364--375},
  year={2008},
}

@article{bonnici2013subgraph,
  title={A subgraph isomorphism algorithm and its application to biochemical data},
  author={Bonnici, Vincenzo and Giugno, Rosalba and Pulvirenti, Alfredo and Shasha, Dennis and Ferro, Alfredo},
  journal={BMC Bioinformatics},
  volume={14},
  number={7},
  pages={1--13},
  year={2013}
}

@inproceedings{bi2016efficient,
  title={Efficient subgraph matching by postponing cartesian products},
  author={Bi, Fei and Chang, Lijun and Lin, Xuemin and Qin, Lu and Zhang, Wenjie},
  booktitle={Proceedings of the International Conference on Management of Data (SIGMOD)},
  pages={1199--1214},
  year={2016}
}

@article{juttner2018vf2++,
  title={VF2++—An improved subgraph isomorphism algorithm},
  author={J{\"u}ttner, Alp{\'a}r and Madarasi, P{\'e}ter},
  journal={Discrete Applied Mathematics},
  volume={242},
  pages={69--81},
  year={2018}
}

@inproceedings{han2019efficient,
  title={Efficient subgraph matching: Harmonizing dynamic programming, adaptive matching order, and failing set together},
  author={Han, Myoungji and Kim, Hyunjoon and Gu, Geonmo and Park, Kunsoo and Han, Wook-Shin},
  booktitle={Proceedings of the International Conference on Management of Data (SIGMOD)},
  pages={1429--1446},
  year={2019}
}

@inproceedings{bhattarai2019ceci,
  title={Ceci: Compact embedding cluster index for scalable subgraph matching},
  author={Bhattarai, Bibek and Liu, Hang and Huang, H Howie},
  booktitle={Proceedings of the International Conference on Management of Data (SIGMOD)},
  pages={1447--1462},
  year={2019}
}

@inproceedings{sun2020memory,
  title={In-memory subgraph matching: An in-depth study},
  author={Sun, Shixuan and Luo, Qiong},
  booktitle={Proceedings of the International Conference on Management of Data (SIGMOD)},
  pages={1083--1098},
  year={2020}
}

@inproceedings{kankanamge2017graphflow,
  title={Graphflow: An active graph database},
  author={Kankanamge, Chathura and Sahu, Siddhartha and Mhedbhi, Amine and Chen, Jeremy and Salihoglu, Semih},
  booktitle={Proceedings of the International Conference on Management of Data (SIGMOD)},
  pages={1695--1698},
  year={2017}
}

@article{wasserman1994social,
  title={Social network analysis: Methods and applications},
  author={Wasserman, Stanley and Faust, Katherine},
  year={1994},
  publisher={Cambridge University Press}
}

@article{karlebach2008modelling,
  title={Modelling and analysis of gene regulatory networks},
  author={Karlebach, Guy and Shamir, Ron},
  journal={Nature Reviews Molecular Cell Biology},
  volume={9},
  number={10},
  pages={770--780},
  year={2008}
}

@article{szklarczyk2015string,
  title={STRING v10: protein--protein interaction networks, integrated over the tree of life},
  author={Szklarczyk, Damian and  others},
  journal={Nucleic Acids Research},
  volume={43},
  number={D1},
  pages={D447--D452},
  year={2015}
}

@inproceedings{ammar2018distributed,
  title={Distributed Evaluation of Subgraph Queries Using Worst-case Optimal Low-Memory Dataflows},
  author={Ammar, Khaled and McSherry, Frank and Salihoglu, Semih and Joglekar, Manas},
  booktitle={Proceedings of the International Conference on Very Large Data Bases (PVLDB)},
  pages={691--704},
  year={2018}
}

@inproceedings{katsarou2017subgraph,
  title={Subgraph querying with parallel use of query rewritings and alternative algorithms},
  author={Katsarou, Foteini and Ntarmos, Nikos and Triantafillou, Peter},
booktitle={Proceedings of the International Conference on Extending Database Technology (EDBT)},
pages={25--36},
  year={2017}
}

@inproceedings{ren2015exploiting,
  title={Exploiting vertex relationships in speeding up subgraph isomorphism over large graphs},
  author={Ren, Xuguang and Wang, Junhu},
  booktitle={Proceedings of the International Conference on Very Large Data Bases (PVLDB)},
  pages={617--628},
  year={2015}
}

@inproceedings{zhao2010graph,
  title={On graph query optimization in large networks},
  author={Zhao, Peixiang and Han, Jiawei},
  booktitle={Proceedings of the International Conference on Very Large Data Bases (PVLDB)},
  pages={340--351},
  year={2010}
}

@inproceedings{sun2012efficient,
  title={Efficient Subgraph Matching on Billion Node Graphs},
  author={Sun, Zhao and Wang, Hongzhi and Wang, Haixun and Shao, Bin and Li, Jianzhong},
  booktitle={Proceedings of the International Conference on Very Large Data Bases (PVLDB)},
  pages={788--799},
  year={2012}
}

@inproceedings{han2013turboiso,
  title={Turboiso: towards ultrafast and robust subgraph isomorphism search in large graph databases},
  author={Han, Wook-Shin and Lee, Jinsoo and Lee, Jeong-Hoon},
  booktitle={Proceedings of the International Conference on Management of Data (SIGMOD)},
  pages={337--348},
  year={2013}
}

@article{hagberg2020networkx,
  title={Networkx: Network analysis with python},
  author={Hagberg, Aric and Conway, Drew},
  journal={URL: https://networkx. github. io},
  year={2020}
}

@article{watts1998collective,
  title={Collective dynamics of ‘small-world’networks},
  author={Watts, Duncan J and Strogatz, Steven H},
  journal={Nature},
  volume={393},
  number={6684},
  pages={440--442},
  year={1998}
}

@inproceedings{archibald2019sequential,
  title={Sequential and parallel solution-biased search for subgraph algorithms},
  author={Archibald, Blair and Dunlop, Fraser and Hoffmann, Ruth and McCreesh, Ciaran and Prosser, Patrick and Trimble, James},
  booktitle={Proceedings of the  Integration of Constraint Programming, Artificial Intelligence, and Operations Research (CPAIOR)},
  pages={20--38},
  year={2019}
}

@inproceedings{li2019graph,
  title={Graph matching networks for learning the similarity of graph structured objects},
  author={Li, Yujia and Gu, Chenjie and Dullien, Thomas and Vinyals, Oriol and Kohli, Pushmeet},
  booktitle={Proceedings of the International Conference on Machine Learning (ICML)},
  pages={3835--3845},
  year={2019},
}

@inproceedings{bai2019simgnn,
  title={Simgnn: A neural network approach to fast graph similarity computation},
  author={Bai, Yunsheng and Ding, Hao and Bian, Song and Chen, Ting and Sun, Yizhou and Wang, Wei},
  booktitle={Proceedings of the International Conference on Web Search and Data Mining (WSDM)},
  pages={384--392},
  year={2019}
}

@inproceedings{ye2024efficient,
  title={Efficient Exact Subgraph Matching via GNN-based Path
Dominance Embedding},
  author={Ye, Yutong and Lian, Xiang and Chen, Mingsong},
  booktitle={Proceedings of the International Conference on Very Large Data Bases (PVLDB)},
  pages={1628--1641},
  year={2024}
}

@article{alon2007network,
  title={Network motifs: theory and experimental approaches},
  author={Alon, Uri},
  journal={Nature Reviews Genetics},
  volume={8},
  number={6},
  pages={450--461},
  year={2007},
}

@inproceedings{qiao2017subgraph,
  title={Subgraph matching: on compression and computation},
  author={Qiao, Miao and Zhang, Hao and Cheng, Hong},
  booktitle={Proceedings of the International Conference on Very Large Data Bases (PVLDB)},
  pages={176--188},
  year={2017},
}

@inproceedings{sahu2017ubiquity,
  title={The ubiquity of large graphs and surprising challenges of graph processing},
  author={Sahu, Siddhartha and Mhedhbi, Amine and Salihoglu, Semih and Lin, Jimmy and {\"O}zsu, M Tamer},
  booktitle={Proceedings of the International Conference on Very Large Data Bases (PVLDB)},
  pages={420--431},
  year={2017},
}

@inproceedings{lian2011efficient,
  title={Efficient query answering in probabilistic RDF graphs},
  author={Lian, Xiang and Chen, Lei},
  booktitle={Proceedings of the International Conference on Management of Data (SIGMOD)},
  pages={157--168},
  year={2011}
}

@inproceedings{babai2018group,
  title={Group, graphs, algorithms: the graph isomorphism problem},
  author={Babai, L{\'a}szl{\'o}},
  booktitle={Proceedings of the International Congress of Mathematicians: Rio de Janeiro 2018},
  pages={3319--3336},
  year={2018},
  organization={World Scientific}
}

@inproceedings{al2020topic,
  title={Topic-based community search over spatial-social networks},
  author={Al-Baghdadi, Ahmed and Lian, Xiang},
  booktitle={Proceedings of the International Conference on Very Large Data Bases (PVLDB)},
  pages={2104--2117},
  year={2020},
}

@inproceedings{zhang2024top,
  title={Top-$ L $ Most Influential Community Detection Over Social Networks},
  author={Zhang, Nan and Ye, Yutong and Lian, Xiang and Chen, Mingsong},
  booktitle={Proceedings of the International Conference on Data Engineering (ICDE)},
  pages={5767--5779},
  year={2024},
}

@inproceedings{yan2008mining,
  title={Mining significant graph patterns by leap search},
  author={Yan, Xifeng and Cheng, Hong and Han, Jiawei and Yu, Philip S},
  booktitle={Proceedings of the International Conference on Management of Data (SIGMOD)},
  pages={433--444},
  year={2008}
}

@inproceedings{deutsch2022graph,
  title={Graph pattern matching in GQL and SQL/PGQ},
  author={Deutsch, Alin and Francis, Nadime and Green, Alastair and Hare, Keith and Li, Bei and Libkin, Leonid and Lindaaker, Tobias and Marsault, Victor and Martens, Wim and Michels, Jan and others},
  booktitle={Proceedings of the International Conference on Management of Data (SIGMOD)},
  pages={2246--2258},
  year={2022}
}

@inproceedings{ye2025continuous,
  title={Continuous Subgraph Matching via Cost-Model-based Dynamic Vertex Dominance Embeddings},
  author={Ye, Yutong and Lian, Xiang and Zhang, Nan and Chen, Mingsong},
  booktitle={Proceedings of the International Conference on Management of Data (SIGMOD)},
  pages={1--27},
  year={2025},
}

@inproceedings{sun2020rapidmatch,
  title={Rapidmatch: A holistic approach to subgraph query processing},
  author={Sun, Shixuan and Sun, Xibo and Che, Yulin and Luo, Qiong and He, Bingsheng},
  booktitle={Proceedings of the International Conference on Very Large Data Bases (PVLDB)},
  pages={176--188},
  year={2020}
}

@inproceedings{lu2025b,
  title={BSX: Subgraph Matching with Batch Backtracking Search},
  author={Lu, Yujie and Zhang, Zhijie and Zheng, Weiguo},
  booktitle={Proceedings of the International Conference on Management of Data (SIGMOD)},
  pages={1--27},
  year={2025},
}

@inproceedings{lai2019distributed,
  title={Distributed subgraph matching on timely dataflow},
  author={Lai, Longbin and Qing, Zhu and Yang, Zhengyi and Jin, Xin and Lai, Zhengmin and Wang, Ran and Hao, Kongzhang and Lin, Xuemin and Qin, Lu and Zhang, Wenjie and others},
  booktitle={Proceedings of the International Conference on Very Large Data Bases (PVLDB)},
  pages={1099--1112},
  year={2019},
}

@inproceedings{mhedhbi2019optimizing,
  title={Optimizing subgraph queries by combining binary and worst-case optimal joins},
  author={Mhedhbi, Amine and Salihoglu, Semih},
  booktitle={Proceedings of the International Conference on Very Large Data Bases (PVLDB)},
pages={1692--1704},
  year={2019}
}

@inproceedings{lai2015scalable,
  title={Scalable subgraph enumeration in mapreduce},
  author={Lai, Longbin and Qin, Lu and Lin, Xuemin and Chang, Lijun},
  booktitle={Proceedings of the International Conference on Very Large Data Bases (PVLDB)},
  pages={974--985},
  year={2015},
}

@inproceedings{lai2016scalable,
  title={Scalable distributed subgraph enumeration},
  author={Lai, Longbin and Qin, Lu and Lin, Xuemin and Zhang, Ying and Chang, Lijun and Yang, Shiyu},
  booktitle={Proceedings of the International Conference on Very Large Data Bases (PVLDB)},
  pages={217--228},
  year={2016},
}

@article{aberger2017emptyheaded,
  title={Emptyheaded: A relational engine for graph processing},
  author={Aberger, Christopher R and Lamb, Andrew and Tu, Susan and N{\"o}tzli, Andres and Olukotun, Kunle and R{\'e}, Christopher},
  journal={ACM Transactions on Database Systems},
  volume={42},
  number={4},
  pages={1--44},
  year={2017},
}

@inproceedings{carletti2015vf2,
  title={VF2 Plus: An improved version of VF2 for biological graphs},
  author={Carletti, Vincenzo and Foggia, Pasquale and Vento, Mario},
  booktitle={International Workshop on Graph-Based Representations in Pattern Recognition (GbRPR)},
  pages={168--177},
  year={2015},
}

@article{carletti2017challenging,
  title={Challenging the time complexity of exact subgraph isomorphism for huge and dense graphs with VF3},
  author={Carletti, Vincenzo and Foggia, Pasquale and Saggese, Alessia and Vento, Mario},
  journal={IEEE Transactions on Pattern Analysis and Machine Intelligence},
  volume={40},
  number={4},
  pages={804--818},
  year={2017},
}

@article{lou2020neural,
  title={Neural subgraph matching},
  author={Lou, Zhaoyu and You, Jiaxuan and Wen, Chengtao and Canedo, Arquimedes and Leskovec, Jure and others},
  journal={arXiv preprint arXiv:2007.03092},
  year={2020}
}

@inproceedings{shasha2002algorithmics,
  title={Algorithmics and applications of tree and graph searching},
  author={Shasha, Dennis and Wang, Jason TL and Giugno, Rosalba},
  booktitle={Proceedings of the Principles of Database Systems (PODS)},
  pages={39--52},
  year={2002}
}

@article{james2004daylight,
  title={Daylight theory manual},
  author={James, Craig A},
  journal={http://www. daylight. com/dayhtml/doc/theory/theory. toc. html},
  year={2004}
}

@inproceedings{yan2004graph,
  title={Graph indexing: a frequent structure-based approach},
  author={Yan, Xifeng and Yu, Philip S and Han, Jiawei},
  booktitle={Proceedings of the International Conference on Management of Data (SIGMOD)},
  pages={335--346},
  year={2004}
}

@inproceedings{du2017first,
  title={First: Fast interactive attributed subgraph matching},
  author={Du, Boxin and Zhang, Si and Cao, Nan and Tong, Hanghang},
  booktitle={Proceedings of the International Conference on Knowledge Discovery and Data Mining (SIGKDD)},
  pages={1447--1456},
  year={2017}
}

@inproceedings{dutta2017neighbor,
  title={Neighbor-aware search for approximate labeled graph matching using the chi-square statistics},
  author={Dutta, Sourav and Nayek, Pratik and Bhattacharya, Arnab},
  booktitle={Proceedings of the Web Conference (WWW)},
  pages={1281--1290},
  year={2017}
}

@inproceedings{li2018efficient,
  title={An efficient probabilistic approach for graph similarity search},
  author={Li, Zijian and Jian, Xun and Lian, Xiang and Chen, Lei},
  booktitle={Proceedings of the International Conference on Data Engineering (ICDE)},
  pages={533--544},
  year={2018},
}

@article{xu2019cross,
  title={Cross-lingual knowledge graph alignment via graph matching neural network},
  author={Xu, Kun and Wang, Liwei and Yu, Mo and Feng, Yansong and Song, Yan and Wang, Zhiguo and Yu, Dong},
  journal={arXiv preprint arXiv:1905.11605},
  year={2019}
}

@inproceedings{mcfee2009partial,
  title={Partial order embedding with multiple kernels},
  author={McFee, Brian and Lanckriet, Gert},
  booktitle={Proceedings of the International Conference on Machine Learning (ICML)},
  pages={721--728},
  year={2009}
}

@inproceedings{vendrov2015order,
  title={Order-embeddings of images and language},
  author={Vendrov, Ivan and Kiros, Ryan and Fidler, Sanja and Urtasun, Raquel},
  booktitle={Proceedings of the International Conference on Learning Representations (ICLR)},
  pages={1--12},
  year={2016}
}

@inproceedings{wang2022reinforcement,
  title={Reinforcement learning based query vertex ordering model for subgraph matching},
  author={Wang, Hanchen and Zhang, Ying and Qin, Lu and Wang, Wei and Zhang, Wenjie and Lin, Xuemin},
  booktitle={Proceedings of the International Conference on Data Engineering (ICDE)},
  pages={245--258},
  year={2022},
}

@inproceedings{yang2025neuso,
  title={NeuSO: Neural Optimizer for Subgraph Queries},
  author={Yang, Linglin and Zou, Lei and Zhao, Chunshan},
  booktitle={Proceedings of the International Conference on Management of Data (SIGMOD)},
  pages={1--28},
  year={2025},
}

@article{appendix,
  title={Appendix},
  author={anonymous authors},
  journal={https://anonymous.4open.science/r/LIVE-4C37},
  year={2026}
}

@inproceedings{paszke2019pytorch,
  title={Pytorch: An imperative style, high-performance deep learning library},
  author={Paszke, Adam and Gross, Sam and Massa, Francisco and Lerer, Adam and Bradbury, James and Chanan, Gregory and Killeen, Trevor and Lin, Zeming and Gimelshein, Natalia and Antiga, Luca and others},
  booktitle={Proceedings of the Advances in Neural Information Processing Systems (NeurIPS)},
  pages={1--12},
  year={2019}
}

@inproceedings{qi2017pointnet,
  title={PointNet: Deep Learning on Point Sets for 3D Classification and Segmentation},
  author={Qi, Charles R and Su, Hao and Mo, Kaichun and Guibas, Leonidas J},
  booktitle={Proceedings of the IEEE Conference on Computer Vision and Pattern Recognition (CVPR)},
  pages={652--660},
  year={2017}
}

@inproceedings{zaheer2017deep,
  title={Deep Sets},
  author={Zaheer, Manzil and Kottur, Satwik and Ravanbakhsh, Siamak and Poczos, Barnabas and Salakhutdinov, Ruslan and Smola, Alexander},
  booktitle={Advances in Neural Information Processing Systems (NeurIPS)},
  volume={30},
  year={2017}
}

@inproceedings{wen2025s3and,
  title={S3AND: Efficient Subgraph Similarity Search Under Aggregated Neighbor Difference Semantics},
  author={Wen, Qi and Ye, Yutong and Lian, Xiang and Chen, Mingsong},
  booktitle={Proceedings of the International Conference on Very Large Data Bases (PVLDB)},
  pages={3708--3720},
  year={2025}
}

@inproceedings{li2025subgraph,
  title={Subgraph Matching: A New Decomposition Based Approach},
  author={Li, Qiyan and Yu, Jeffrey Xu and He, Zongyan},
  booktitle={Proceedings of the International Conference on Very Large Data Bases (PVLDB)},
  pages={4282--4294},
  year={2025},
}

@inproceedings{guo2025efficient,
  title={Efficient and Accurate Subgraph Counting: A Bottom-up Flow-learning Based Approach},
  author={Guo, Qiuyu and Yang, Jianye and Zhang, Wenjie and Wang, Hanchen and Zhang, Ying and Lin, Xuemin},
  booktitle={Proceedings of the International Conference on Very Large Data Bases (PVLDB)},
  pages={2695--2708},
  year={2025},
}

@inproceedings{jiang2025comprehensive,
  title={A Comprehensive Survey of Subgraph Matching:[Experiments \& Analysis]},
  author={Jiang, Haolin and Pandey, Santosh and Liu, Hang},
  booktitle={Proceedings of the International Conference on Management of Data (SIGMOD)},
  pages={1--30},
  year={2025},
}

\end{document}